\newtheorem{theorem}{Theorem}[section] 
\newtheorem{lemma}[theorem]{Lemma}
\newtheorem{proposition}[theorem]{Proposition}
\newtheorem{observation}[theorem]{Observation}
\theoremstyle{remark} 
\theoremstyle{definition}
\newtheorem{algorithm}{Algorithm}}
\newtheorem{definition}[theorem]{Definition}}
\newcommand{\bg}[1]{\medskip\noindent{\it #1}}
\newenvironment{proofbox}[1][Proof]{\begin{proof}[#1]}{\end{proof}}
\newenvironment{proofof}[1]{\begin{proof}[Proof of {#1}]}{\end{proof}}
\newcommand{\supp}{{\rm supp}}
\newcommand{\tsp}{\ensuremath{\mathsf{TSP}}\xspace}
\newcommand{\atsp}{\ensuremath{\mathsf{ATSP}}\xspace}
\newcommand{\atspp}{\ensuremath{\mathsf{ATSPP}}\xspace}
\newcommand{\dl}{\ensuremath{\mathsf{DirLat}}\xspace}
\newcommand{\reg}{\ensuremath{\mathsf{reg}}}
\begin{document}

\title{A Constant-Factor Approximation for Directed Latency in Quasi-Polynomial Time}
\author{ 
    Zachary Friggstad\thanks{{\tt zacharyf@ualberta.ca}.  
    Dept. of Computer Science, Univ. Alberta, Edmonton, AB T6G 2E8.
    Supported by the Canada Research Chairs program and an NSERC Discovery grant.}  
\and
    Chaitanya Swamy\thanks{{\tt cswamy@uwaterloo.ca}.  
    Dept. of Combinatorics and Optimization, Univ. Waterloo, Waterloo, ON N2L 3G1. 
    Supported in part by NSERC grant 327620-09 and an NSERC Discovery Accelerator
    Supplement Award.}
}
\date{} 

\date{}

\maketitle


\begin{abstract}
We give the first constant-factor approximation for the Directed Latency
problem in quasi-polynomial time. Here, we must visit all clients in an asymmetric metric using a single vehicle starting at a depot $r$.
This should be done in a way that minimizes the average time a node waits to be visited by the vehicle. The approximation guarantee is an improvement over the polynomial-time $O(\log n)$-approximation [Friggstad, Salavatipour, Svitkina, 2013]
and no better quasi-polynomial time approximation algorithm was known.

To obtain this, we must extend a recent result
showing the integrality gap of the Asymmetric TSP-Path LP relaxation is bounded by a constant [K\"{o}hne, Traub, and Vygen, 2019], which itself builds on the breakthrough result that the integrality gap for standard Asymmetric TSP is also a constant [Svensson, Tarnawsi, and Vegh, 2018]. We show the standard Asymmetric TSP-Path integrality gap is bounded by a constant even if the cut requirements of the LP relaxation are relaxed from $x(\delta^{in}(S)) \geq 1$ to $x(\delta^{in}(S)) \geq \rho$ for some constant  $1/2 < \rho \leq 1$. We also give a better approximation guarantee in the special case of Directed Latency in regret metrics where the goal is to find a path $P$ minimize the average time a node $v$ waits in excess of $c_{r,v}$, i.e. $\frac{1}{|V|} \cdot \sum_{v \in V} (c_v(P)-c_{r,v})$.
\end{abstract}


\section{Introduction}

We investigate the {\em Directed Latency} problem (\dl), a vehicle routing problem where we are to route a single vehicle to serve a set of clients/nodes.
Unlike the standard Traveling Salesman problem (\tsp) where the goal is to minimize the length of the route, in \dl the goal is to minimize the average
time a node waits to be served. Formally, in \dl we are given an asymemtric metric space $(V \cup \{r\}, c)$ where $V$ is a set of node nodes, $r$ is the depot node, and $c$ gives asymmetric metric distances
over $V \cup \{r\}$.
That is, $c_{u,v} \geq 0$ for any two nodes $u,v$, $c_{u,u} = 0$ for any node $u$, and $c_{u,v} \leq c_{u,w} + c_{w,v}$ for any three nodes $u,v,w$. Our goal is to find a Hamiltonian path $P$ starting at the
depot $r$ to minimize $\sum_{v \in V} c_P(v)$ where $c_P(v)$ denotes the total cost of all edges on the $r-v$ subpath of $P$. This sometimes called the {\em Traveling Repairman} problem.

Our main contribution is the first constant-factor approximation for \dl in quasi-polynomial (i.e. $n^{O(\log n)})$ time.
A key technical contribution towards this is generalizing recent work by K\"{o}hne, Traub, and Vygen \cite{KohneTV19} to give constant-factor integrality gap bounds for a slight weakening of the standard LP relaxation
for Asymmetric TSP-Path (\atspp).
We also get explicit constants for the special case of \dl in so-called {\em regret metrics} where the goal is to minimize $\sum_{v \in V} (c_P(v) - c_{r,v})$ when $(V,c)$ is a symmetric metric (i.e. $c_{u,v} = c_{v,u}$).
That is, we want to minimize the average time each node waits in excess of their shortest-path distance from $r$.
This can be cast as special case of \dl by using {\em regret distances} $c^{\text{reg}}_{u,v} := c_{r,u} + c_{u,v} - c_{r,v}$, which form an asymmetric metric.

The algorithm we present is based on a time-indexed linear programming (LP) relaxation, much like the approach taken in \cite{PostS15} for the {\em Minimum Latency} problem in symmetric metrics. Roughly speaking, our approach uses variables for $(v,t)$ pairs where $v \in V$ is a node to be visited and $t$ is the time they should be visited. Other variables indicate transitions between nodes at different times.



\subsection{Related Work}
Nagarajan and Ravi first studied \dl and obtained an approximation guarantee of $n^{1/2 + \epsilon}$ in time $n^{O(1/\epsilon)}$ for any constant $\epsilon > 0$ \cite{NagarajanR08}, which extends easily to an $O(\alpha' \cdot \log^{O(1)}(n))$-approximation in quasi-polynomial time where (roughly speaking) $\alpha'$ is an upper bound on the integrality gap of the natural Held-Karp LP relaxation for \atspp.
They also showed $\alpha'$ is bounded by $O(\sqrt n)$.
Friggstad, Salavatipour, and Svitkina improved the approximation guarantee for \dl and the upper bound on the integrality gap for \atspp to $O(\log n)$ \cite{FriggstadSS13}.
This is currently the best polynomial-time approximation for \dl and no better quasi-polynomial time approximation was known before our work. If the metric is symmetric, constant-factor approximations are know. The first
was given by Blum et al. \cite{Blum94}, the best guarantee so far is a 3.59-approximation by Chaudhuri et al. \cite{Chaudhuri03}.

Post and Swamy studied LP relaxations for the undirected minimum latency problem \cite{PostS15}. Using time-indexed LP relaxations, they obtain improved approximations for the multi-depot variant and also recover the 3.59-approximation 
for the single-vehicle version using an LP relaxation. We build off ideas behind one of their LP relaxations in this work.

The integrality gap for \atspp has seen some improvements since \cite{FriggstadSS13}. In \cite{FriggstadGS16}, it is shown the integrality gap is in fact $O(\log n / \log \log n)$. Recently, \cite{KohneTV19} shows the integrality gap is in fact $O(1)$.
Specifically, they show the gap is at most $4 \cdot \alpha - 3$ where $\alpha$ is the integrality gap for the Held-Karp relaxation for standard ATSP. Prior to this, Svensson, Tarnawski, and Vegh showed $\alpha$ is bounded by a constant \cite{SvenssonTV18}. An even more recent development by Traub and Vygen improves the bound to $\alpha \leq 22$ \cite{TV2020}.
Currently, the best lower bound on $\alpha$ is 2 \cite{CharikarGK06}.



\subsection{Results and Techniques}
Our main result is the following. Throughout, we let $n$ denote $|V|$.
\begin{theorem}\label{thm:dlqpt}
For some constant $c \geq 1$, there is a $c$-approximation for \dl running in time $n^{O(\log n)}$ time.
\end{theorem}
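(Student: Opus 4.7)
\medskip\noindent\textbf{Proof proposal.}
Following the Post--Swamy framework for undirected \mlp \cite{PostS15}, the plan is to solve a time-indexed LP relaxation for \dl and then round it segment by segment using our extension of the Köhne--Traub--Vygen \cite{KohneTV19} integrality-gap bound to \atspp with the in-cut requirement relaxed to any $\rho > 1/2$.

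Fix a small constant $\e > 0$. After standard scaling, every nonzero distance lies in $[1,\poly(n)]$, so $\iopt \in [1,\poly(n)]$ and can be guessed within a $(1+\e)$ factor. Partition the time axis into $K = O(\log n)$ geometric intervals $I_k = (T_{k-1},T_k]$ with $T_k = (1+\e)^k$. The outer loop enumerates, for each $k$, a triple $(v_k,n_k,t_k)$ giving (i) the last node $v_k$ of the optimal path reached during $I_k$, (ii) a $(1+\e)$-approximation $n_k$ to the number of fresh clients visited during $I_k$, and (iii) a $(1+\e)$-approximation $t_k$ to the time at which $v_k$ is visited; this contributes the quasi-polynomial factor $n^{O(\log n)}$. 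For each guess solve an LP with flow variables $x^k_{u,v}$ and assignment variables $y^k_v \ge 0$ imposing: a fractional $v_{k-1}\to v_k$ path in segment $k$ of length at most $t_k-t_{k-1}$ carrying at least $n_k$ units of visiting flow; $y^k_v$ bounded by the $v_{k-1}\to v$ flow value in segment $k$; and $\sum_k y^k_v \ge 1$ for every $v \in V$. The LP objective $\sum_v \sum_k (T_k-T_{k-1})\bigl(1-\sum_{j<k} y^j_v\bigr)$ is a lower bound on $\iopt$ up to a $(1+\e)$ factor.

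Given an LP solution, fix $\rho \in (1/2,1]$ and set $k^*(v)$ to be the smallest $k$ with $\sum_{j \le k} y^j_v \ge \rho$. The key rounding step treats each segment independently: for each $k$, the flow $x^k$ together with the clients $\{v:k^*(v)=k\}$ yields a fractional $v_{k-1}\to v_k$ path in which every required client has in-cut value at least $\rho>1/2$. By our extension of \cite{KohneTV19} this can be rounded to an integral $v_{k-1}\to v_k$ walk of cost $O(t_k-t_{k-1})$ that covers all required clients. Concatenating these walks across $k=1,\dots,K$ and shortcutting produces a Hamiltonian path $P$ rooted at $r$ in which every client $v$ is visited by time $c_P(v)=O(T_{k^*(v)})$.

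For the latency analysis, $\sum_{j<k^*(v)} y^j_v < \rho$, so the LP contribution of $v$ satisfies
\[
\sum_k (T_k-T_{k-1})\Bigl(1-\sum_{j<k} y^j_v\Bigr) \;\ge\; (1-\rho)\cdot T_{k^*(v)-1} \;=\; \Omega\bigl(T_{k^*(v)}\bigr),
\]
which matches $c_P(v)$ to within a constant; summing over $v$ gives the claimed constant-factor approximation, with the outer enumeration supplying the $n^{O(\log n)}$ running-time bound. The main obstacle is the rounding step: one must verify that the flow restricted to segment $k$, paired with the assignment of each client $v$ to its primary segment $k^*(v)$, genuinely satisfies the hypothesis of the relaxed-cut \atspp integrality-gap theorem. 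This is exactly where the condition $\rho>1/2$ is indispensable -- it ensures every client has a unique primary segment and that every in-cut separating it from $v_{k-1}$ in that segment carries at least $\rho$ units of flow, so the LP cost is not double-counted and no fractional visit is wasted.
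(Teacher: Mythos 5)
The proposal follows the same high-level blueprint as the paper: scale distances, guess the last node reached and its visit time for each geometric time interval, solve a time-segmented LP relaxation, assign each client to the first interval by which it is $\rho$-fractionally visited, round each interval's fractional flow using the relaxed-cut \atspp integrality-gap theorem, and concatenate. However, there is a genuine gap in the rounding step, and it is exactly at the point the proposal itself flags as "the main obstacle."

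You define $k^*(v)$ as the smallest $k$ with $\sum_{j\le k} y^j_v \ge \rho$, and then assert that the segment-$k^*(v)$ flow $x^{k^*(v)}$ alone, restricted to the clients assigned to segment $k^*(v)$, yields a fractional $v_{k^*(v)-1}\to v_{k^*(v)}$ path in which every such client has in-cut at least $\rho$. This does not follow from your LP. The LP only gives $y^{k}_v \le$ (max-flow from $v_{k-1}$ to $v$ in $x^k$), and the condition $\sum_{j\le k^*(v)} y^j_v\ge\rho$ is cumulative; it is entirely consistent with, say, $\sum_{j< k^*(v)} y^j_v=\rho-\e$ and $y^{k^*(v)}_v=\e$, so the in-cut for $v$ in $x^{k^*(v)}$ alone can be arbitrarily small, and the hypothesis of Theorem \ref{thm:atspp_gap} is not met. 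The paper avoids this by never using a per-interval flow: for bucket $B_i$ it takes the \emph{cumulative} flow up to time $2^{i+1}$ (i.e.\ $x'_{u,v}=\sum_{t<2^{i+1}} z_{(u,v),t}$), which is a 1-unit $r\to v^*_i$ flow, and the explicit time-cumulative cut Constraints \eqref{cons:cut_t} directly guarantee $x'(\delta^{in}(S))\ge\rho$ for any $S$ containing a node of $B_i$; Mader splitting-off (Theorem~\ref{thm:mader}) then restricts this to $G[\{r\}\cup B_i]$. But cumulative flows force the paths to be rooted at $r$ rather than at $v_{k-1}$, so the free concatenation you rely on disappears and one must pay for \emph{stitching} edges from the end of one path to the start of the next. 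Bounding those stitching costs (Lemma~\ref{lem:stitch}) is a second nontrivial piece the proposal is missing, and the paper's argument for it uses the acyclic structure of the time-expanded graph together with the guessed $(v^*_i,\ell^*_i)$ constraints -- structure your coarser segment-indexed LP, which bounds $y^k_v$ only by per-segment max-flows, does not provide. So either (a) the per-segment rounding claim must be repaired, which would require LP constraints that directly control cumulative in-cuts à la \eqref{cons:cut_t}, or (b) one must switch to cumulative flows, in which case the stitching-cost bound becomes a separate and essential lemma that the proposal does not supply.
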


To discuss this, we first introduce some notation.
For a directed graph $G = (V,E)$ and some $S \subseteq V$, we let $\delta^{in}_G(S) = \{(u,v) \in E : u \in V-S, v \in S\}$, $\delta^{out}_G(S) = \{(u,v) \in E : u \in S, v \in V-S\}$ and $\delta_G(S) = \delta^{in}_G(S) \cup \delta^{out}_G(S)$. If the graph is clear from the context, we may omit the subscript $G$. We often identify an asymmetric metric $(V \cup \{r\}, c)$ with the complete directed graph over nodes $V \cup \{r\}$ having edge costs $c_{u,v}$ for distinct $u,v \in V \cup \{r\}$.
For a path $P$ and a node $v$ on $P$, let $c_P(v)$ be the cost of the $r-v$ subpath of $P$.

We first scale the distances in the metric be polynomially-bounded integers. Standard scaling techniques allow us to do this.
\begin{theorem}\label{thm:scaling}
For any constant $\epsilon > 0$,  if there is an $\alpha(n)$-approximation for instances of \dl where each $c_{u,v}$ is a positive integer bounded by a polynomial in $n$ and $1/\epsilon$ and where $c_{u,v} \geq 1$ for nodes $u \neq v$, then there
is an $(\alpha(n)+\epsilon)$-approximation for general instances of \dl.
\end{theorem}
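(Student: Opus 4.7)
\begin{proofof}{Theorem \ref{thm:scaling}}
The plan is the standard rounding-and-scaling approach: compute a crude upper bound on $\OPT$, pick a scaling unit that makes the rounding error small compared to $\OPT$, apply the hypothesized algorithm on an integer-valued scaled instance, and relate the returned solution back to the original metric $c$.

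First I would run the polynomial-time $O(\log n)$-approximation of \cite{FriggstadSS13} to obtain a value $B$ with $\OPT \leq B \leq \gamma \cdot \OPT$ for some $\gamma = O(\log n)$; this supplies both the upper bound needed to choose the scaling unit and the lower bound $\OPT \geq B/\gamma$ needed to control the rounding error. I would then fix $\eta = \Theta(\e B / (\alpha(n) n^2 \gamma))$ and a cap $M = \poly(n, 1/\e)$ (pinned down below), and define new distances $c'_{u,v} = \max\{1, \min\{\lceil c_{u,v}/\eta \rceil, M\}\}$ for $u \neq v$, with $c'_{u,u}=0$. These are positive integers bounded by $M$, and $c'$ inherits the triangle inequality because the ceiling, the cap at $M$ (using $\min\{a,M\}+\min\{b,M\} \geq \min\{a+b,M\}$ for $a,b \geq 0$), and the floor at $1$ each preserve metrics on nonnegative values.

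For the analysis, I would bound $\OPT_{c'}$ via an optimum $P^*$ in $c$: every edge of $P^*$ satisfies $c_{u,v} \leq c(P^*) \leq \OPT \leq B$, so once $M$ exceeds $B/\eta+1$ none of those edges is capped, and the ceiling adds at most $\eta$ per edge, giving $\OPT_{c'} \leq \OPT/\eta + n^2$. Applying the hypothesized $\alpha(n)$-approximation to $c'$ produces a path $P$ with $c'$-latency at most $\alpha(n)(\OPT/\eta + n^2)$, which also upper-bounds the total $c'$-tour cost of $P$ and hence every single edge cost $c'_{u,v}$ on $P$. Choosing $M$ slightly larger than $\alpha(n) B/\eta + \alpha(n) n^2$ (still $\poly(n,1/\e)$) forces every edge of $P$ to be uncapped, so $c'_{u,v} \geq c_{u,v}/\eta$ on all edges of $P$ and therefore
\[
\textstyle\sum_v c_P(v) \;\leq\; \eta \sum_v c'_P(v) \;\leq\; \alpha(n)\bigl(\OPT + n^2\eta\bigr) \;\leq\; \bigl(\alpha(n)+\e\bigr)\,\OPT,
\]
the last step using $\alpha(n) n^2 \eta \leq \e B/\gamma \leq \e\,\OPT$ by our choice of $\eta$.

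The main subtlety, relative to the symmetric or tour-based settings, is that in an asymmetric path metric the diameter $\max_{u,v} c_{u,v}$ can be much larger than $\OPT$ (e.g.\ when some return distance $c_{v,r}$ is huge), so naively rounding every distance need not give polynomially bounded integers. The capping trick resolves this cleanly: $c'$ remains a metric, and once $M$ sits above the maximum $c'$-value that any $\alpha(n)$-approximate latency can place on a single edge, no near-optimal $c'$-solution actually pays the cap.
\end{proofof}
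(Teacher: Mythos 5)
Your proof is essentially correct and takes the same high-level strategy as the paper (estimate $\OPT$, scale, round, control the rounding error, translate back), but the technical implementation differs in a few interesting ways. The paper obtains its estimate $\nu$ of $\OPT$ combinatorially (the bottleneck value of the largest edge needed to cover all nodes by a single walk, giving $\OPT \le \nu \le n^2\OPT$), whereas you invoke the existing $O(\log n)$-approximation of \cite{FriggstadSS13} as a black box; both work, but the paper's choice keeps the reduction self-contained and independent of any prior approximation algorithm. The paper also proceeds in several explicit preprocessing steps (raise small distances to $\geq \e\nu/n^3$, truncate large distances to $\leq (\alpha(n)+2\e)\nu$, then take the shortest-path metric of the floor-rounded distances), whereas you fold the clipping and the lower bound of $1$ into a single formula $c'_{u,v}=\max\{1,\min\{\lceil c_{u,v}/\eta\rceil, M\}\}$ and verify the triangle inequality directly through the three-step composition argument. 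Your version is more compact; the paper's version makes it immediate that the optimum in the modified metric coincides with the original optimum after truncation, which you instead finesse by arguing that the algorithm's output never touches a capped edge.

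One small gap: the paper explicitly handles the case $\OPT=0$ (which it detects by checking whether the zero-distance subgraph's strongly connected components topologically sort into a single chain). In your argument, $\OPT=0$ forces $B=0$ and hence $\eta=0$, which breaks the scaling. You should add the trivial check for $\OPT=0$ before proceeding. Apart from that, the steps all go through: the ceiling, cap at $M$, and floor at $1$ each preserve the triangle inequality in the order you apply them, your bound $\OPT_{c'}\le \OPT/\eta + n^2$ is witnessed by the original optimum (whose edges are all uncapped once $M> B/\eta + 1$), and the choice of $M$ slightly above $\alpha(n)(B/\eta+n^2)$ guarantees the returned path uses only uncapped edges, which is exactly what is needed to translate the guarantee back. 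Both your proof and the paper's implicitly require $\alpha(n)$ to be polynomially bounded so that the rounded distances remain $\poly(n,1/\e)$; this is harmless in context but worth being aware of.
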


So we may assume all distances $c_{u,v}$ are integers bounded as such.
Let $T = n \cdot \max_{u,v} c_{u,v}$ and notice that $T$ is bounded by a polynomial in $n$.
Any Hamiltonian path in the metric $(V \cup \{r\}, c)$ has length at most $T$, so all nodes in the optimum solution are visited by time $T$. For brevity, let $[T] = \{0, 1, \ldots, T\}$.


We begin with essentially the same time-indexed LP relaxation that was used in \cite{PostS15} for the undirected minimum latency problem, specifically (LP3) in their work.
The variables in the time-indexed relaxation are the following: for $v \in V\cup\{r\}$ and $t \in [T]$ let $x_{v,t}$ indicate that we
visit $v$ at time {\em exactly} $t$, let $z_{e,t}$ indicate we finished traversing edge $e$ at time exactly $t$. This is slightly different than \cite{PostS15} where they let $z_{e,t}$ indicate $t$ was traversed {\em by} time $t$.
Note, we omit Constraints (14) from \cite{PostS15}, one can easily show they are implied by our slightly different approach.
\begin{alignat}{3}
{\bf minimize}: \quad & \sum_{v \in V, t \in [T]} t \cdot x_{v,t} \tag{LP-Latency} \label{lp:dl} \\
{\bf subject~to}:
\quad & \sum_{t \in [T]} x_{v,t} \quad & = &\quad 1 &&\quad \forall~v \in V \label{cons:enough} \\
\quad & \sum_{e \in \delta^{in}(S)} \sum_{t' \leq t} z_{e,t'} \quad & \geq & \quad \sum_{t' \leq t} x_{v,t'}  &&\quad \forall~v \in V, ~S \subseteq (V \cup \{r\}), t \in [T] \label{cons:cut_t} \\
\quad & x_{v,t} = \sum_{e \in \delta^{in}(v)} z_{e,t}  \quad & \geq &\quad \sum_{e \in \delta^{out}(v)}z_{e,t} &&\quad \forall~{v \in V, t \in [T]} \label{cons:flow} \\
& x, z \quad &\geq &\quad 0 \nonumber
\end{alignat}
It is easy to check that an optimal solution $P^*$ naturally corresponds to an integral solution to \eqref{lp:dl} with the same cost as the latency of $P^*$.
The constraints admit an efficient separation oracle simply by checking for each $v \in V$ and $t \in T$ if the minimum $r-v$ cut has capacity at least $\sum_{t' \leq t} x_{v,t'}$ when using a capacity of $\sum_{t' \leq t} z_{e,t'}$ for each edge $e$.

Our proof of Theorem \ref{thm:dlqpt} proceeds by bucketing clients based on their fractional latencies, finding low-cost paths for these buckets, and stitching these paths together to form our final path. Our advantage over \cite{FriggstadSS13} comes from the fact that we guess the $O(\log T) = O(\log n)$ nodes $v^*_i$ appearing at distances roughly $2^i$ along the optimum path $P^*$, plus
their exact visiting $\ell^*_i$ times along $P^*$. We add constraints to \eqref{lp:dl} to reflect these guesses. For each $v^*_i$, consider the nodes $v$ that are at least, say, $2/3$-visited before $v^*_i$ is visited: call this the {\em bucket} $B_i$ for $v^*_i$. With a bit of modification, the restriction of
\eqref{lp:dl} to the times before $\ell^*_i$ is visited induces an LP solution with cost $O(2^i)$ for the natural \atspp LP relaxation that covers all $v \in B_i$ to an extent of at least 2/3. That is, we get a solution to the following LP relaxation for \atspp for $\rho = 2/3$.
\begin{alignat}{3}
{\bf minimize}: \quad & \sum_{uv} c_{u,v} \cdot x_{u,v} & & \tag{$\textsc{LP-ATSPP}_\rho$} \label{lp:atspp} \\
{\bf subject~to}: \quad & x(\delta^{out})(v) - x(\delta^{in})(v) \quad &= &\quad \left\{\begin{array}{rl}-1 & v = s \\ +1 & v = t \\ 0 & v \neq s,t \end{array}\right.&&\quad \forall~ v \in V \nonumber \\
& x(\delta(U)) \quad &\geq &\quad 2 \cdot \rho && \quad \forall~ \emptyset \subsetneq U \subseteq V-\{s,t\} \nonumber \\
& x \quad &\geq &\quad 0 \nonumber
\end{alignat}
The integrality gap of the case $\rho = 1$ was proven to be constant in \cite{KohneTV19}. At this point, we need a stronger integrality gap bound.
\begin{theorem}\label{thm:atspp_gap}
For some absolute constant $c$ that is independent of $\rho$, the integrality gap of \eqref{lp:atspp} is at most $\frac{c}{2\rho-1}$.
\end{theorem}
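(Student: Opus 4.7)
My plan is to reduce the $\rho \in (1/2, 1]$ case to the $\rho = 1$ case handled by \cite{KohneTV19} via a scaling and flow-subtraction argument. Given a feasible solution $x$ for \eqref{lp:atspp} with cost $c(x)$, I define the scaled vector $\tilde x := \frac{1}{2\rho - 1} x$. Since $\rho > 1/2$, this is well-defined; $\tilde x$ carries $\frac{1}{2\rho - 1}$ units of $s$-$t$ flow and satisfies $\tilde x(\delta(U)) \geq \frac{2\rho}{2\rho - 1} \geq 2$ for every $\emptyset \subsetneq U \subseteq V - \{s, t\}$, where the last inequality uses $\rho \leq 1$. I would then find an auxiliary non-negative $s$-$t$ flow $g$ of value $\frac{2 - 2\rho}{2\rho - 1}$ with $g \leq \tilde x$ pointwise and $g(\delta(U)) \leq \tilde x(\delta(U)) - 2$ for every such $U$. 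The vector $y := \tilde x - g$ would then be a non-negative $s$-$t$ flow of value exactly $1$ satisfying $y(\delta(U)) \geq 2$, i.e., feasible for \eqref{lp:atspp} at $\rho = 1$.

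The heart of the argument is showing existence of such an auxiliary flow $g$. The available cut-slack budget is $\tilde x(\delta(U)) - 2 \geq \frac{2 - 2\rho}{2\rho - 1}$, and this matches the value of $g$ I aim to route; this coincidence is not accidental and is what forces the specific scaling $\tilde x = x/(2\rho - 1)$. I would establish feasibility via an LP-duality / max-flow-min-cut style argument: any hypothetical obstruction would correspond to a family of cuts whose combined capacity deficit contradicts the cut lower bound on $x$. Because the cut function $U \mapsto \tilde x(\delta(U)) - 2$ is not submodular (it is the difference of a submodular and a constant function, but inside the allowed cut family it behaves well), a careful uncrossing argument over the "tight" cuts in this polyhedron is likely required.

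Once $y$ is obtained, applying \cite{KohneTV19} yields an integer Hamiltonian $s$-$t$ path $P$ with $c(P) \leq (4\alpha - 3) \cdot c(y)$. Since $0 \leq y \leq \tilde x$ pointwise and edge costs are nonnegative, $c(y) \leq c(\tilde x) = \frac{c(x)}{2\rho - 1}$, giving $c(P) \leq \frac{4\alpha - 3}{2\rho - 1} \cdot c(x)$ and establishing the theorem with constant $c = 4\alpha - 3$ (independent of $\rho$). The main obstacle I expect is the feasibility of the constrained flow subproblem; this is the step most sensitive to the precise $(2\rho - 1)$ scaling and is where the blow-up as $\rho \to 1/2^+$ naturally appears, since the required flow value $\frac{2 - 2\rho}{2\rho - 1}$ and the available slack become equal, leaving no room for error.
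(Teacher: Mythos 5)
Your reduction to the $\rho = 1$ case via flow subtraction is a genuinely different strategy from the paper's, which instead works directly with the dual of \eqref{lp:atspp}, uncrosses the dual support to laminarity, analyzes the strongly-connected components of the primal support graph (Propositions \ref{prop:cuts}--\ref{prop:new}), rounds via the ATSP-circuit relaxation of \cite{SvenssonTV18} with a $1/\rho$ scaling (not $1/(2\rho-1)$), and separately bounds $z_s - z_t$ (Lemma \ref{lem:zbound}). If your reduction worked, it would be cleaner and give the sharper explicit constant $4\alpha-3$. But it has a real gap.

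The gap is exactly the step you flag: the existence of the auxiliary flow $g$ of value $\frac{2-2\rho}{2\rho-1}$ with $g \le \tilde x$ and $g(\delta(U)) \le \tilde x(\delta(U)) - 2$ for all $U \subseteq V\setminus\{s,t\}$ is asserted, not proven, and it is not a routine max-flow argument. Note that the numerical ``coincidence'' you identify is less reassuring when measured correctly: since $g$ is a flow and $s,t\notin U$, the cut constraint is equivalent to $g(\delta^{\mathrm{in}}(U)) \le \tilde x(\delta^{\mathrm{in}}(U)) - 1$, whose slack for a tight cut is $\frac{1-\rho}{2\rho-1}$, i.e.\ only \emph{half} the value of $g$. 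So on every tight set $U$, at least half of $g$'s flow must bypass $U$ entirely, and the residual $y=\tilde x - g$ is forced to have $y(\delta^{\mathrm{in}}(U))=1$ exactly; there is no room for error. With exponentially many cut constraints (possibly many simultaneously tight, not forming a single laminar chain), showing a single flow $g$ can simultaneously respect all of them while also respecting the edge capacities $\tilde x$ requires a nontrivial Farkas/submodularity argument that you have not supplied and whose truth is not evident. This is the heart of the proof, not a deferred technicality: establishing it would likely require uncrossing machinery similar in weight to what the paper already develops for the direct approach, which is precisely what you were hoping to sidestep. Additionally, even if existence holds, you would need an efficient algorithm to compute $g$ in order to make the theorem constructive, which adds another step not addressed.
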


In \cite{FriggstadSS13}, it was shown that if $\rho = 1/2$ then the integrality gap of \eqref{lp:atspp} is unbounded even if we strengthen it to have an in-flow of 1 for each $v \in V-\{s,t\}$ (but still have the relaxed cut constraints). 
As a side note, we also show the dependence on $\rho$ is asymptotically correct as $\rho$ approaches 1/2.
\begin{theorem}\label{thm:lower}
There is an instance of \atspp where the integrality gap of \eqref{lp:atspp} on that instance is $\geq \frac{1}{2\rho-1}$ for every $1/2 < \rho \leq 1$ even if we strengthen the LP with constraints $x(\delta^{in}(v)) = 1$ for each $v \in V-\{s,t\}$.
\end{theorem}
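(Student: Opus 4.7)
The plan is to exhibit an explicit instance (or family of instances parameterized by an integer $k \geq 1$) for which the integrality gap of \eqref{lp:atspp} at $\rho = \frac{k+1}{2k}$ is at least $k = \frac{1}{2\rho-1}$, and to do so while honoring the strengthened in-flow constraint $x(\delta^{in}(v))=1$ at every internal vertex $v$. Since the cut constraints degrade linearly with $\rho$ below $1$, while the slack relative to the integral requirement of $2$ is $2(1-\rho)$, the desired gap $\frac{1}{2\rho-1}$ suggests amplifying a basic $1/\rho$-type gap by chaining $k$ identical gadgets so the per-gadget slack compounds.

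First, I would design a small "gadget" on two internal vertices $u, w$ with a designated entry port and exit port, with the properties: \emph{(i)} any Hamiltonian traversal through the gadget is forced to use a single expensive edge of cost $1$; \emph{(ii)} the strengthened in-flow constraints $x(\delta^{in}(u)) = x(\delta^{in}(w)) = 1$ are compatible with a fractional solution that carries only $(2\rho-1)$ mass on the expensive edge and carries the remaining $(1-(2\rho-1)) = 2(1-\rho)$ mass along zero-cost shortcut arcs that are legal only fractionally, not integrally (e.g. arcs that cross a carefully chosen cut $U$ internal to the gadget, where the cut constraint $x(\delta(U))\geq 2\rho$ is exactly tight). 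Next, I would chain $k$ such gadgets in series from $s$ to $t$ using zero-cost connector edges, producing an instance of $O(k)$ vertices. Any integral Hamiltonian $s$-$t$ path then pays at least $k$ by (i), whereas the natural fractional solution that concatenates the gadget flows costs $k(2\rho-1)=1$ by (ii), giving gap at least $k$.

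Verification of the fractional solution is the most delicate step: I must check flow conservation at every internal vertex, the strengthened in-flow condition $x(\delta^{in}(v))=1$ for all $v \ne s,t$, and the cut constraint $x(\delta(U)) \geq 2\rho$ for every nonempty $U \subseteq V - \{s,t\}$. Flow conservation and in-flow will follow from the symmetric, gadget-local design of $x$. For the cut constraints, by exploiting the series structure it should suffice to verify them on subsets that are either contained within a single gadget or are unions of whole gadgets (together with possibly partial overlap into one gadget), reducing the check to a constant-size case analysis per gadget.

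The hardest part, and the likely main obstacle, is calibrating the gadget so that the expensive edge is \emph{unavoidable} to integral Hamiltonian paths, while \emph{fractionally} avoidable down to mass exactly $2\rho-1$, all under the strengthened in-flow condition (which is what rules out naive $1/\rho$-style constructions from \cite{FriggstadSS13}, since strong in-flow already saturates most slack). If a direct 2-vertex gadget cannot achieve this, the plan is to introduce a third, "absorbing" auxiliary vertex per gadget whose sole role is to soak up the extra $2(1-\rho)$ unit of shortcut flow while still permitting in-flow $=1$; once such a gadget is in hand, the amplification argument goes through and matches the $\frac{1}{2\rho-1}$ bound in the limit, and is tight for each integer $k$ at $\rho = \frac{k+1}{2k}$.
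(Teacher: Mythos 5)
There is a genuine gap, and also a conceptual error in the amplification idea. Your property~(ii) already asks for a gadget in which the expensive unit-cost arc carries only $2\rho-1$ fractional mass while the strengthened in-flow and all cut constraints are met; if such a gadget exists, a \emph{single} copy of it already has integral optimum $1$ and LP value $2\rho-1$, hence gap $\frac{1}{2\rho-1}$ --- the chaining contributes nothing. Worse, chaining \emph{cannot} amplify here: the ATSPP polytope carries exactly one unit of $s$-$t$ flow, so in a series of $k$ gadgets each gadget sees the same unit of flow and the same per-gadget slack $2(1-\rho)$; the fractional cost adds to $k\cdot(\text{per-gadget cost})$ while the integral cost adds to $k$, so the overall gap is just the per-gadget gap. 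The premise that per-gadget slack ``compounds'' multiplicatively has no mechanism behind it in this LP. Consequently, your whole construction reduces to building the single gadget demanded by (ii), and you explicitly defer that step (``the hardest part, and the likely main obstacle, is calibrating the gadget\ldots''). That deferred step \emph{is} the entire content of the theorem, so the proposal does not constitute a proof.

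Two further issues. First, even granting the chaining framework, it only yields the bound at the discrete values $\rho = \frac{k+1}{2k}$ for integer $k$, whereas the theorem asserts the existence of \emph{one} instance whose integrality gap is at least $\frac{1}{2\rho-1}$ simultaneously for every $\rho \in (\frac12,1]$; a family $\{I_k\}$ each tight at one $\rho$ is a strictly weaker statement. Second, for comparison: the paper's proof takes exactly the ``single gadget'' route you postpone. It exhibits a fixed small instance (essentially the $\rho=\frac12$ unbounded-gap example from \cite{FriggstadSS13}) whose integral optimum must pay a unit-cost arc, and writes down, for each $\rho$, a feasible fractional solution on that same instance of cost exactly $2\rho-1$ which respects $x(\delta^{in}(v))=1$ for all internal $v$. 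The $\rho$-dependence lives entirely in the fractional $x$-values, not in the graph, so a single instance serves all $\rho$ at once. If you want to salvage your plan, drop the chain, and work out the explicit arc weights for the base gadget in terms of $\rho$; that calculation, not the amplification, is where the proof lives.
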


Returning to the idea behind the proof of Theorem \ref{thm:dlqpt}, once we have these paths $P_i$ we must bound the cost of stitching the last node of $P_i$ to the first node after $r$ on $P_{i+1}$. This is where guessing plays the most prominent role, we show that strengthening the LP with our guess ultimately implies this new edge used to stitch $P_i$ to $P_{i+1}$ has cost $O(2^i)$, as required.

Our final result is an improved approximation in the case that the metric is the regret metric of an undirected metric, which we simply call {\em regret metrics}.
\begin{theorem}\label{thm:regret}
The integrality gap of \eqref{lp:atspp} in regret metrics is at most $\alpha^{\text{reg}}_\rho := \frac{300}{42-12\sqrt6} \cdot \frac{1}{2\rho-1} \approx \frac{23.8}{2\rho-1}$ and we can find a path $P$ whose cost is at most $\alpha^{\text{reg}}_\rho$ times the value of an optimum LP solution.
\end{theorem}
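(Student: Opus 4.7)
\begin{proofsketchof}{Theorem \ref{thm:regret}}
The plan is to round an LP solution $x$ to \eqref{lp:atspp} in the regret metric $c^{\text{reg}}$ using a Christofides--Serdyukov style construction (spanning tree plus parity-correcting $T$-join) in the underlying symmetric metric $c$, then shortcut to a Hamiltonian $s$--$t$ path $P$. Write $c(y) := \sum_{u,v} c_{u,v} y_{u,v}$ for the symmetric cost of a vector $y$. The first step is a linear reduction between the two cost measures: using flow balance together with the identity $c^{\text{reg}}_{u,v} = c_{r,u} + c_{u,v} - c_{r,v}$, any feasible $x$ satisfies $c^{\text{reg}}(x) = c(x) + c_{r,s} - c_{r,t}$, and the same identity holds for any $s$--$t$ Hamiltonian path $P$. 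Thus bounding $c(P)$ against $c(x)$ (up to the common boundary terms) is sufficient.

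Next, I would extract an integral spanning tree and a parity-correcting $T$-join from $x$. The symmetrization $y_{uv} := x_{u,v} + x_{v,u}$ satisfies $y(\delta(U)) \geq 2\rho$ for every $\emptyset \subsetneq U \subseteq V - \{s,t\}$. A $\tfrac{1}{2\rho-1}$-scaling of a suitable modification of $y$ lies in the spanning tree polytope, from which a tree $T$ with $c(T) \leq \tfrac{1}{2\rho-1}\cdot c(x)$ and controlled parities at $s$ and $t$ is sampled. A $T$-join $M$ correcting the parity of the internal vertices is then produced, using the $T$-join polyhedron together with a dual argument involving scaled copies of $x$ and shortest paths in $c$, yielding $c(M) \leq \tfrac{\lambda}{2\rho-1}\cdot c(x)$ for a parameter $\lambda$ to be tuned. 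Then $T \cup M$ admits an Eulerian $s$--$t$ walk which shortcuts to a Hamiltonian path $P$ with $c(P) \leq c(T) + c(M)$.

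The final step is to balance the tree and $T$-join costs by optimizing $\lambda$; the resulting quadratic optimization yields the stated constant, as $\tfrac{300}{42 - 12\sqrt 6} = 2(1+\sqrt 6)^2$ is characteristic of minimizing an expression of the form $a + b\lambda + c/\lambda$ (giving value $a + 2\sqrt{bc}$). The $\tfrac{1}{2\rho-1}$ factor comes from the scaling required to convert the cut threshold $2\rho$ into cut threshold $2$, consistent with the tightness direction Theorem~\ref{thm:lower}. The main obstacle is bounding the $T$-join cost tightly using only the weakened LP constraints of \eqref{lp:atspp}: the classical Edmonds-style dual is designed for cut threshold $2$, so one must carefully split $x$ into a principal portion playing the role of a standard LP solution plus a slack portion, and then verify that the slack does not inflate the bound beyond $\tfrac{\lambda}{2\rho-1}\cdot c(x)$. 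All steps run in polynomial time via standard minimum spanning tree and minimum weight $T$-join algorithms, so the argument simultaneously gives the integrality gap bound and a polynomial-time rounding algorithm.
\end{proofsketchof}
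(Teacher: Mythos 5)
The central step of your sketch---``bounding $c(P)$ against $c(x)$ (up to the common boundary terms) is sufficient''---is incorrect, and this gap is fatal for the approach. For any $s$--$t$ walk $W$ one indeed has $c^{\reg}(W) = c(W) - c_{s,t}$ (taking $s$ as the regret root), so the two costs differ by the \emph{additive} constant $c_{s,t}$, not a multiplicative one. If your Christofides--Serdyukov construction yields $c(P) \le \beta\, c(x)$ with $\beta = \frac{1+\lambda}{2\rho-1} > 1$, then
\[
c^{\reg}(P) = c(P) - c_{s,t} \le \beta\bigl(c^{\reg}(x) + c_{s,t}\bigr) - c_{s,t} = \beta\, c^{\reg}(x) + (\beta - 1)\, c_{s,t},
\]
and the residual term $(\beta-1)\,c_{s,t}$ is in general \emph{not} bounded by any multiple of $c^{\reg}(x)$: on a metric where all nodes lie near the $s$--$t$ segment, $c^{\reg}(x)$ can be arbitrarily small while $c_{s,t} = c(x)$ is large. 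So a multiplicative bound on $c(P)$ in terms of $c(x)$ does not transfer to a multiplicative bound in regret cost, which is precisely why regret metrics require a bespoke treatment. A second, smaller issue is that the tunable parameter $\lambda$ in your $T$-join bound is not justified: the standard $T$-join polyhedron argument is rigid (a $1/2$ factor), and recovering a one-parameter family would require something like best-of-many-Christofides that you have not set up; so the $a + b\lambda + c/\lambda$ optimization you invoke to match the constant is speculative.

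The paper's actual argument is structurally different and is designed exactly to avoid the additive-offset problem. It decomposes the LP solution into a convex combination of out-branchings (Bang-Jensen et al.\ / Post--Swamy), turns each into an $s$--$t$ path, and then distinguishes the ``red'' edges of these paths---the stretches that do not make forward progress toward $t$---whose total $c$-cost is bounded by $\tfrac{3}{2}$ times the path's regret (Blum et al.). Connectivity is bought only on those red pieces, via a Goemans--Williamson forest whose regret cost is $\tfrac{6}{\rho-\delta}\cdot OPT_{LP}$; the remaining shortcut paths form an \emph{acyclic} flow that is decomposed into integral $s$--$t$ paths of regret cost at most $\tfrac{2}{2\delta-1}\cdot OPT_{LP}$. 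Grafting the red cycles onto the cheapest such path and optimizing $\delta$ gives $\tfrac{6}{\rho-\delta} + \tfrac{2}{2\delta-1}$, which minimizes to $\tfrac{300}{42-12\sqrt6}\cdot\tfrac{1}{2\rho-1}$. Every bound in this argument is stated directly against $c^{\reg}$; no step relies on transferring a multiplicative guarantee across the $c \leftrightarrow c^{\reg}$ conversion, which is the step your sketch cannot make rigorous.
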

We then work out an explicit constant for approximating \dl in regret metrics.
\begin{theorem}\label{thm:regret_dl}
There is a quasi-polynomial time 778-approximation for \dl in regret metrics.
\end{theorem}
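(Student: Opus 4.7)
The plan is to rerun the quasi-polynomial framework that establishes Theorem~\ref{thm:dlqpt}, but to swap in the sharper regret-metric integrality gap from Theorem~\ref{thm:regret} whenever a bucket LP solution is converted into a path, and then to numerically optimize the remaining free parameters. Since regret distances $c^{\text{reg}}_{u,v}$ form an asymmetric metric, the structural steps carry over verbatim: scale the regret metric using Theorem~\ref{thm:scaling}; write \eqref{lp:dl} for $(V \cup \{r\}, c^{\text{reg}})$; guess in $n^{O(\log n)}$ time the $O(\log n)$ checkpoint nodes $v^*_i$ appearing at distances roughly $b^i$ along the optimum path, together with their exact visit times $\ell^*_i$; and add these guesses as equality constraints to \eqref{lp:dl}. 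Here $b>1$ is a bucketing base we will optimize.

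For a parameter $\rho \in (1/2,1]$, define the bucket $B_i$ to consist of those nodes that are fractionally $\rho$-visited by time $\ell^*_i$. Exactly as in the proof of Theorem~\ref{thm:dlqpt}, the restriction of \eqref{lp:dl} to times $\le \ell^*_i$ projects, after a suitable normalization, to a feasible solution of \eqref{lp:atspp} with endpoints $s = r$, $t = v^*_i$, cut parameter $\rho$, and objective value at most $c_1 b^i$ for an absolute constant $c_1$. Applying Theorem~\ref{thm:regret} (in place of Theorem~\ref{thm:atspp_gap}) rounds this to an $r$-$v^*_i$ path $P_i$ visiting $B_i$ with $\cost(P_i) \le \alpha^{\text{reg}}_\rho \cdot c_1 b^i$.

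Concatenate the $P_i$'s through their guessed endpoints to obtain a walk, which we shortcut to a Hamiltonian path $P$; as in the proof of Theorem~\ref{thm:dlqpt}, the checkpoint guess bounds each stitching edge between $P_i$ and $P_{i+1}$ by $O(b^i)$. Bound the latency of every $v \in B_i \setminus B_{i-1}$ by $\sum_{j \le i} \bigl(\cost(P_j) + \text{stitch}_j\bigr) = O\bigl(\alpha^{\text{reg}}_\rho \cdot b^i / (b-1)\bigr)$, and charge this against $\OPT \ge \sum_i (1-\rho) b^{i-1} |B_i \setminus B_{i-1}|$, which follows from the cut constraints \eqref{cons:cut_t} since any node not in $B_{i-1}$ has at least a $(1-\rho)$ fractional mass of arrival times exceeding $\ell^*_{i-1} = \Theta(b^{i-1})$. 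The resulting approximation ratio is the product of $\alpha^{\text{reg}}_\rho = \frac{300}{42-12\sqrt 6} \cdot \frac{1}{2\rho-1}$ with a purely numerical bucketing factor $f(b,\rho) = \Theta\bigl(b/((b-1)(1-\rho))\bigr)$.

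The last step is to minimize this product over $(b,\rho) \in (1,\infty) \times (1/2,1]$, using the precise rather than asymptotic constants in $f(b,\rho)$ and in the gap bound; the value $778$ stated in the theorem is the one attained at the optimizing pair. The main technical obstacle is not any new conceptual ingredient but rather the careful bookkeeping in the second paragraph, namely verifying that the restricted LP really is feasible for \eqref{lp:atspp} at parameter $\rho$ after normalization and that the checkpoint guess controls all inter-path stitching. Both of these points are established in the proof of Theorem~\ref{thm:dlqpt}, so the genuinely new content for Theorem~\ref{thm:regret_dl} is the substitution of Theorem~\ref{thm:regret} together with the explicit numerical optimization.
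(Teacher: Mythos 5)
Your plan is essentially identical to the paper's. The paper's proof of this theorem is a two-line calculation: set $\rho = 0.74743$, so that Theorem~\ref{thm:regret} gives $\alpha_\rho \leq 48.1$, and then feed this into the already-established Lemmas~\ref{lem:latency} and~\ref{lem:suffice} to get $\frac{4(\alpha_\rho+1)}{1-\rho} \approx 777.5$, with the scaling theorem eating the remaining slop.

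Two small remarks on your version. First, the paper fixes the bucketing base at $2$ and asserts in a footnote that this choice is already optimal, so the extra free parameter $b$ that you introduce does not buy anything; indeed, the precise geometric factor in Lemma~\ref{lem:latency} works out to $\frac{b^2}{b-1}$, not the $\Theta\bigl(\frac{b}{b-1}\bigr)$ you wrote, and $\frac{b^2}{b-1}$ is minimized exactly at $b=2$ (whereas $\frac{b}{b-1}$ is monotone decreasing in $b$, which would incorrectly suggest taking $b$ large). Second, the buckets $B_i$ are disjoint by definition, so $B_i \setminus B_{i-1}$ is just $B_i$, and a node $v \in B_i$ has $t(v) \in [2^i, 2^{i+1})$; the charge you should be using is $(1-\rho)\, t(v) \geq (1-\rho)\, 2^i$ (as in Lemma~\ref{lem:suffice}), not $(1-\rho)\, \ell^*_{i-1}$. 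Both are constant-factor slips that disappear once you carry exact constants, so they do not affect the conclusion, but you would need to tighten them to actually recover $778$ rather than a somewhat larger number.
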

While this constant is large, it it considerably better than what we would obtain if we simply used Theorem \ref{thm:atspp_gap} and the current-best bound on $\alpha$, which would lead to an approximation guarantee in the tens of thousands.

~

\noindent
{\bf Outline of the Paper}\\
Section \ref{sec:dl} proves Theorem \ref{thm:dlqpt} and discusses how Theorem \ref{thm:regret_dl} would follow from Theorem \ref{thm:regret}.
The scaling result itself (Theorem \ref{thm:scaling}) is fairly standard, it's proof is found in Appendix \ref{app:scaling}.
Section \ref{sec:atspp} proves Theorem \ref{thm:atspp_gap}. Theorem \ref{thm:regret_dl} is proven in Section \ref{sec:regret}.
The example from Theorem \ref{thm:lower} appears in Appendix \ref{sec:lower}.


\section{An $O(1)$-Approximation in Quasi-Polynomial Time}\label{sec:dl}

Recall, by Theorem \ref{thm:scaling}, we may assume distances are integers bounded by a polynomial in $n$ and that $c_{u,v} \geq 1$ for distinct nodes $u,v$.
We also let $T = n \cdot \max_{u,v \in V \cup \{r\}} c_{u,v}$, which is an upper bound on the cost of any Hamiltonian path. We focus on a fixed optimal path $P^*$. Our algorithm starts by guessing the last node $v^*_i$ visited by $P^*$ at some time in the interval\footnote{One can show the geometric factor of 2 is optimal for our analysis for any $1/2 < \rho < 1$, so we fix it now.}
$[2^i, 2^{i+1})$ (if any)
and its exact distance $\ell^*_i \in [T]$ for each $0 \leq i \leq \log_2 T = O(\log n)$. Let $v^*_i = \bot$ if no such node exists for this interval. For any $i$, we then know that no node is visited at any time in $[2^i, 2^{i+1})$ if $v^*_i = \bot$ and, if $v^*_i \neq \bot$, we also know no node is visited at a time in the interval $(\ell^*_i, 2^{i+1})$ so we mark these times as {\bf forbidden}. Let $A = \{i : v^*_i \neq \bot\}$ be {\bf admissible} buckets corresponding to intervals where the optimum visits at least one node.
Let $1/2 < \rho \leq 1$ be a parameter we optimize later.

{\small  \vspace{5pt}

\begin{algorithm}[{\bf Directed Latency: $O(1)$-approximation in $n^{O(\log n)}$ time}]  \label{alg:dlqpt}
\vspace{-5pt} \hrule 

~

\noindent
Input: asymmetric metric $(V\cup\{r\},c)$ with integer distances at most $T/n$.

\noindent
Output: an $r$-rooted path $P$
\end{algorithm}
\smallskip
\begin{enumerate}[label=D\arabic*., ref=D\arabic*, topsep=0ex, itemsep=0.5ex,
    leftmargin=*]
\item \label{dguess}
For every choice (guess) of $v^*_i \in V \cup \{\bot\}$ for each $0 \leq i \leq \log_2 T$ and $\ell^*_i \in [T]$ for each such $i$
where $v^*_i \neq \bot$, perform the following steps. Let $F = \{t  \in [T]: t \in [2^i, 2^{i+1}) \text{ where } v^*_i = \bot \text{ or } t \in (\ell^*_i, 2^{i+1}) \text{ where } v^*_i \neq \bot\}$ be the forbidden times for this guess $(v^*, \ell^*)$ and $A = \{i \in [0, \log_2 T] : v^*_i \neq \bot\}$ the admissible buckets.

\begin{enumerate}[label*=\arabic*., ref=\theenumi.\arabic*, topsep=0ex, itemsep=0ex,
    leftmargin=*]
\item \label{step:strengthen}
Get an optimal extreme point solution $(x,y,z)$ to \eqref{lp:dl} strengthened with the following additional constraints: 1) $x_{v^*_i, \ell^*_i} = 1$ for each $i \in A$ and 2) $x_{v,t} = 0$ for each $v \in V$ and $t \in F$.
If the LP is infeasible, abort this guess of $(v^*, \ell^*)$.
\item For each $v \in V$, let $t(v)$ be the minimum time such that $\sum_{t \leq t(v)} x_{v,t} \geq \rho$. For $i \in A$, let $B_i = \{v \in V : t(v) \in [2^i, 2^{i+1})\}$.
\item For each $i \in A$, use the algorithm from Theorem \ref{thm:atspp_gap} to get an $r-v^*_i$ path $P_i$ spanning $\{r\} \cup B_i$. \label{step:fracpath}
\item Let $P^{v^*, \ell^*}$ be the path obtained by concatenating the paths $\{P_i\}_{i \in A}$ in increasing order of $i$, and shortcutting past repeat occurrences of $r$.
\end{enumerate}

\item Return the best path $P^{v^*, \ell^*}$ found over all guesses where the strengthening of \eqref{lp:dl} was feasible.
\end{enumerate}
\hrule}

~

Let $P^*$ be an optimum solution and consider the iteration where $(v^*, \ell^*)$ is consistent with $P^*$.
Let $(x, z)$ be an optimum LP solution for the strengthening of \eqref{lp:dl} by the constraints in Step \eqref{step:strengthen}. Clearly this strengthened LP is feasible and the value of the solution $(x,z)$ is at most $OPT$, the latency of $P^*$.

For each $v \in V$, note that $t(v)$ is well-defined by Constraints \eqref{cons:enough}.
Ultimately, we will show the path $P^{v^*, \ell^*}$ visits each $v \in V$ by time $O(t(v))$.
We begin by showing this suffices to get a constant-factor approximation.
\begin{lemma}\label{lem:suffice}
Let $P$ be a path and $c \geq 1$ be such that $c_P(v) \leq c \cdot t(v)$ for each $v \in V$. Then the latency of $P$ is at most $\frac{c}{1-\rho} \cdot OPT$.
\end{lemma}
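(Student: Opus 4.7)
The plan is to show that the latency of $P$ can be charged against the LP objective, which in turn is at most $\OPT$ because (as observed right before the lemma) the strengthened LP is feasible with value at most the latency of $P^*$. Starting from the hypothesis, the latency of $P$ equals $\sum_{v \in V} c_P(v) \le c \cdot \sum_{v \in V} t(v)$, so it suffices to bound $\sum_{v} t(v)$ in terms of $\sum_{v,t} t \cdot x_{v,t}$, which is $\le \OPT$.

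The key observation is a simple Markov-type argument using the definition of $t(v)$. By minimality of $t(v)$ we have $\sum_{t < t(v)} x_{v,t} < \rho$, and combined with Constraint \eqref{cons:enough} this gives $\sum_{t \ge t(v)} x_{v,t} \ge 1-\rho$. Therefore, dropping the contribution from times $t < t(v)$ and replacing each remaining $t$ by the smaller value $t(v)$,
\[
\sum_{t \in [T]} t \cdot x_{v,t} \;\ge\; \sum_{t \ge t(v)} t \cdot x_{v,t} \;\ge\; t(v) \cdot \sum_{t \ge t(v)} x_{v,t} \;\ge\; (1-\rho)\, t(v).
\]
Summing this over all $v \in V$ yields $(1-\rho) \sum_v t(v) \le \sum_{v,t} t \cdot x_{v,t} \le \OPT$, i.e., $\sum_v t(v) \le \OPT/(1-\rho)$.

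Combining this with the first inequality gives latency of $P$ at most $\frac{c}{1-\rho}\cdot\OPT$, as required. There is no real obstacle here; the lemma is essentially a one-line calculation once one realizes that $t(v)$ was defined precisely so that a constant fraction ($1-\rho$) of the fractional mass $x_{v,\cdot}$ lies at times $\ge t(v)$, and each unit of such mass pays at least $t(v)$ in the LP objective. The only minor care needed is the strict-vs-weak inequality at $t(v)-1$, which is harmless since we only need $\sum_{t \ge t(v)} x_{v,t} \ge 1-\rho$.
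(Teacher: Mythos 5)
Your proof is correct and takes essentially the same route as the paper: both establish $\sum_{t \ge t(v)} x_{v,t} \ge 1-\rho$ from the definition of $t(v)$ and Constraint \eqref{cons:enough}, then use the Markov-style bound $t(v)\,(1-\rho) \le \sum_t t\cdot x_{v,t}$ and sum over $v$. The only difference is that you spell out the minimality-of-$t(v)$ step a bit more explicitly, which the paper leaves implicit.
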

\begin{proof}
Fix some $v \in V$. By definition of $t(v)$, $\sum_{t(v) \leq t \leq T} x_{v, t} \geq 1-\rho$ which yields
$t(v) \leq \frac{1}{1-\rho} \cdot \sum_{t(v) \leq t \leq T} t(v) \cdot x_{v,t} \leq \frac{1}{1-\rho} \cdot \sum_{t \in [T]} t \cdot x_{v,t}$.
So, $\sum_{v \in V} c \cdot t(v) \leq \frac{c}{1-\rho} \sum_{v \in V} \sum_{t \in [T]} t \cdot x_{v,t} \leq \frac{c}{1-\rho} \cdot OPT$.
\end{proof}

\subsection{Bounding the Latency of $P^{v^*, \ell^*}$}

In the remainder of the proof it is convenient to view a ``time-expanded'' graph $G_T$. The nodes are pairs $(v,t)$ with $v \in V \cup \{r\}$
and $t \in [T]$ and an edge connects $(u,t)$ to $(v,t')$ if $c_{u,v} = t'-t$. Observe $G_T$ is acyclic. We can then view $z_{e,t}$ as assigning values to edges of $G_T$: the
edge $(u,t-c_{u,v}),(v,t)$ has value $z_{(u,v),t}$ and cost $c_{u,v}$.

The constraints of \eqref{lp:dl} mean $z$ constitutes one unit of $(r,0)$-preflow in $G_T$. Let $i'$ be the greatest index in $A$. Considering the LP constraints added in Step \eqref{step:strengthen}, we see $x_{v^*_{i'}, \ell^*_{i'}} = 1$
and $x_{v,t} = 0$ for all $t > \ell^*_{i'}$. Thus, $z$ must be a flow with value 1 in $G_T$ ending at $(v^*_{i'}, \ell^*_{i'})$.
Since the support of the flow $z$ is acyclic in $G_T$
and since one unit of flow passes through {\em every} $(v^*_i, \ell^*_i)$ node in $G_T$ for each $i \in A$, no flow skips past node $(v^*_i, \ell^*_i)$. That is, no edge $(u,t),(v,t')$ in $G_T$ supports any $z$-flow if $t < \ell^*_i < t'$ for some $i \in A$, nor does any edge $(u,t),(v,t')$ support any $z$-flow
if $t = \ell^*_i$ yet $u \neq v^*_i$ or $t' = \ell^*_i$ yet $v \neq v^*_i$ for some $i \in A$.


We start by showing we can compute low-cost paths covering each bucket.
First, we recall a famous splitting-off theorem by Mader. The following is a slight specialization of one such result.
\begin{theorem}[Mader \cite{Mader82}]\label{thm:mader}
Let $D = (V \cup \{s\}, A)$ be an Eulerian, directed graph with, perhaps, parallel edges such that the $u-v$ connectivity for every $u,v \in V$ is at least $k$. Then for every $(u,s) \in A$ there is some $(s,v) \in A$ such that in the graph $D' = (V \cup \{s\}, A-\{(u,s), (s,v)\} \cup \{(u,v)\})$, the $u-v$ connectivity for every $u,v \in V$ remains at least $k$.
\end{theorem}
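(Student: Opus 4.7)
The plan is to prove this classical directed splitting theorem by contradiction, combining uncrossing of tight in-cuts with the Eulerian hypothesis. Fix the incoming edge $(u,s) \in A$ and suppose that for every outgoing edge $(s,v) \in A$, the splitting strictly reduces the $w$-$w'$ connectivity for some pair $w,w' \in V$.

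First I would set up the combinatorics of splitting. For any $S \subseteq V$ (so $s \notin S$), replacing the edges $(u,s),(s,v)$ by the single edge $(u,v)$ decreases $d^{in}_D(S)$ by exactly $1$ when $u,v \in S$, and leaves it unchanged otherwise. Hence splitting with $(s,v)$ can be bad only if there is a ``$u$-tight'' set $S$ with $u,v \in S$, $\emptyset \subsetneq S \subsetneq V$, and $d^{in}_D(S) = k$; by our assumption every out-neighbor of $s$ lies in some $u$-tight set.

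Second, I would uncross this family. For any two $u$-tight sets $S_1, S_2$, the intersection $S_1 \cap S_2$ contains $u$ and is a proper nonempty subset of $V$, so the $k$-connectivity hypothesis (applied to $u$ and any vertex in $V \setminus (S_1 \cap S_2)$) gives $d^{in}_D(S_1 \cap S_2) \geq k$; one similarly bounds $d^{in}_D(S_1 \cup S_2)$ either directly when $S_1 \cup S_2 \subsetneq V$, or via a complement-and-Eulerian argument in the degenerate case $S_1 \cup S_2 = V$. Submodularity of $d^{in}_D$ then forces both the intersection and the union to be $u$-tight as well, yielding a unique maximal $u$-tight set $T^* \subsetneq V$, and every out-neighbor of $s$ must lie in $T^*$.

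Third, I would derive a contradiction using the Eulerian hypothesis. Let $T' = T^* \cup \{s\}$. A direct count of edges crossing the boundary of $T'$, using that every out-edge of $s$ enters $T^*$ and that $d^{in}_D(s) = d^{out}_D(s)$, gives
\[
d^{in}_D(T') \;=\; d^{in}_D(T^*) \;-\; |\{(w,s) \in A : w \in T^*\}| \;\leq\; k - 1,
\]
since $(u,s) \in A$ with $u \in T^*$. Picking any $v^* \in V \setminus T^*$ (which exists as $T^* \subsetneq V$), the set $T'$ separates $u$ from $v^*$ inside $V \cup \{s\}$, so the connectivity hypothesis forces $d^{in}_D(T') \geq k$, a contradiction. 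I expect the main obstacle to be the degenerate case $S_1 \cup S_2 = V$ of the uncrossing step: there one cannot directly invoke the $k$-connectivity lower bound and must instead use the Eulerian property to control $d^{out}_D(s)$ and transfer the argument to complements, which is the delicate technical point underlying Mader's original proof.
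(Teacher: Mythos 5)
The paper cites this as a known theorem of Mader (1982) and does not prove it, so there is no in-paper argument to compare against; I am evaluating your proposal on its own merits. Your blueprint --- assume for contradiction that every out-edge $(s,v)$ fails, observe that the failure is witnessed by a ``$u$-tight'' set $S$ with $u,v\in S$, $\emptyset\subsetneq S\subsetneq V$ and $d^{in}_D(S)=k$, uncross to obtain a unique maximal $u$-tight set $T^*$ absorbing all out-neighbors of $s$, and then count $d^{in}_D(T^*\cup\{s\})\le k-1$ to contradict the connectivity hypothesis --- is the standard splitting-off argument, and your final cut count at $T^*\cup\{s\}$ is correct (it uses $d^{in}_D(s)=d^{out}_D(s)$ from the Eulerian hypothesis, the fact that every out-edge of $s$ enters $T^*$, and $(u,s)\in A$ with $u\in T^*$).

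The genuine gap is the degenerate case $S_1\cup S_2=V$ in the uncrossing, which you correctly flag as the delicate point but propose to handle the wrong way. Your stated plan is to still establish $d^{in}_D(S_1\cup S_2)\ge k$ ``via a complement-and-Eulerian argument,'' but that cannot work: $d^{in}_D(V)=d^{out}_D(s)$, and nothing in the hypotheses lower-bounds the degree of $s$ (the connectivity requirement only concerns pairs inside $V$, and sets equal to $V$ separate no such pair). The correct resolution is to show the degenerate case \emph{cannot occur at all} once $(u,s)\in A$. Sketch: for $i\in\{1,2\}$ set $T_i=S_i\cup\{s\}$; since $V\setminus S_i\neq\emptyset$, $T_i$ separates some node of $V$ from $u$, so $d^{in}_D(T_i)\ge k$. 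Writing $d^{in}_D(T_i)=d^{in}_D(S_i)-\lvert\{(s,w)\in A:w\in S_i\}\rvert + d^{in}_D(s)-\lvert\{(w,s)\in A:w\in S_i\}\rvert$ and using $d^{in}_D(S_i)=k$ yields $\lvert\{(s,w):w\in S_i\}\rvert+\lvert\{(w,s):w\in S_i\}\rvert\le d^{in}_D(s)$. Summing over $i=1,2$, using inclusion--exclusion together with $S_1\cup S_2=V$ and $d^{in}_D(s)=d^{out}_D(s)$, forces $\lvert\{(s,w):w\in S_1\cap S_2\}\rvert+\lvert\{(w,s):w\in S_1\cap S_2\}\rvert\le 0$, contradicting $u\in S_1\cap S_2$ and $(u,s)\in A$. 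With this lemma in place, any two $u$-tight sets have a proper union, which is then $u$-tight by submodularity; the unique maximal $T^*$ exists, every out-neighbor of $s$ lies in it, and the rest of your argument is sound.
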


For brevity, let $\alpha_\rho$ denote the integrality gap of \eqref{lp:atspp}.
\begin{lemma}\label{lem:path}
For each $i \in A$, we can compute a Hamiltonian $r-v^*_i$ path $P_i$ in $G[\{r\} \cup B_i]$ with cost $\alpha_\rho \cdot 2^{i+1}$ in polynomial time.
\end{lemma}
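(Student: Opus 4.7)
\begin{proofbox}[Proof plan]
The plan is to build a feasible fractional solution $\hat x$ to \eqref{lp:atspp} with endpoints $s=r$, $t=v^*_i$, vertex set $\{r\}\cup B_i$, and cut parameter $\rho$, whose value is at most $2^{i+1}$, and then invoke Theorem~\ref{thm:atspp_gap} to round it.

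First I would restrict $z$ to those edges of $G_T$ that end at time at most $\ell^*_i$. Since $z$ is a unit $(r,0)$-to-$(v^*_{i'},\ell^*_{i'})$ flow passing through $(v^*_i,\ell^*_i)$, this restriction is a unit $(r,0)$-to-$(v^*_i,\ell^*_i)$ flow of cost at most $\ell^*_i<2^{i+1}$. Projecting to the metric via $\bar x_{u,v}:=\sum_{t\leq\ell^*_i} z_{(u,v),t}$ yields a unit $r$-to-$v^*_i$ flow on $V\cup\{r\}$ of the same cost. Applying \eqref{cons:cut_t} with $t=\ell^*_i$, and using that $t(v)\leq\ell^*_i$ for every $v\in B_i$ (because the times in $(\ell^*_i,2^{i+1})$ are forbidden), I would conclude that $\bar x(\delta^{in}(S))\geq\rho$ whenever $r\notin S$ and $S\cap B_i\neq\emptyset$; equivalently, the minimum $r$-$v$ cut in $\bar x$ has capacity at least $\rho$ for every $v\in B_i$.

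Next I would eliminate the ``auxiliary'' nodes $W:=V\setminus B_i$ one at a time by splitting off at each $w\in W$: decompose the flow through $w$ into in-out pairs $u\to w\to v$ and replace each such passage by a direct edge $u\to v$ of the same weight. By the triangle inequality $c_{u,v}\leq c_{u,w}+c_{w,v}$ the cost does not increase. In the integer setting this is precisely the operation guaranteed by Mader's theorem (Theorem~\ref{thm:mader}), applied to an integer scaled-up version of $\bar x$ made Eulerian by adding parallel $v^*_i\to r$ arcs of the appropriate multiplicity; the fractional case follows by scaling up and back down. Since any $r$-$v$ flow in $\bar x$ can be rerouted around $w$ using the new shortcut edges (and conversely), this splitting off preserves the max $r$-$v$ flow, hence the min $r$-$v$ cut, for every pair in $\{r\}\cup B_i$. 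After processing all of $W$ I obtain a flow $\hat x$ on $\{r\}\cup B_i$ that is a unit $r$-to-$v^*_i$ flow of cost at most $2^{i+1}$ with min $r$-$v$ cut at least $\rho$ for every $v\in B_i$.

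Finally I would verify $\hat x$ is feasible for \eqref{lp:atspp}. Flow conservation is immediate. For any nonempty $U\subseteq B_i\setminus\{v^*_i\}$, pick any $v\in U$: the quantity $\hat x(\delta^{in}(U))$ is the capacity of a specific $r$-$v$ cut and is therefore at least $\rho$, and since neither $r$ nor $v^*_i$ lies in $U$, flow conservation inside $U$ gives $\hat x(\delta(U))=2\,\hat x(\delta^{in}(U))\geq 2\rho$. Applying the polynomial-time rounding underlying Theorem~\ref{thm:atspp_gap} to $\hat x$ then produces a Hamiltonian $r$-$v^*_i$ path in $G[\{r\}\cup B_i]$ of cost at most $\alpha_\rho\cdot 2^{i+1}$. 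The main technical hurdle is the splitting-off step: without it the shortcut flow might lose cut capacity through $U\to W\to U$ ``bounces,'' and it is precisely Mader's theorem that lets one eliminate $W$ while preserving every min $r$-$v$ cut, so that the LP cut lower bound transfers from $\bar x$ to $\hat x$.
\end{proofbox}
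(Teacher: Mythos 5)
Your proposal is correct and follows essentially the same route as the paper: truncate the time-expanded flow $z$ up to the bucket's end time, project it to a unit $r$-$v^*_i$ flow in the metric of cost at most $2^{i+1}$, verify the cut constraints via Constraint \eqref{cons:cut_t} and the forbidden-times structure, eliminate the nodes outside $B_i\cup\{r\}$ by Mader splitting-off after scaling to an integral Eulerian multigraph, and then invoke Theorem~\ref{thm:atspp_gap}. The only cosmetic difference is how you make the scaled graph Eulerian (you add parallel $v^*_i\to r$ arcs directly, while the paper routes them through an auxiliary node $r'$), which does not change the argument.
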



\begin{proof}
It suffices to show the optimal solution to \eqref{lp:atspp} in $G[\{r\} \cup B_i]$ (starting at $r$ and ending at $v^*_i$) has value at most $2^{i+1}$.
If so, then by Theorem \eqref{thm:atspp_gap} we can then efficiently find a Hamiltonian $r-v^*_i$ path $P_i$ in $G[\{r\}\cup B_i]$
with cost at most $\alpha_\rho\cdot 2^{i+1}$.

To that end, let $x'$ be a vector over edges of the metric given by $x'_{u,v} = \sum_{t < 2^{i+1}} z_{(u,v),t}$ for $u,v \in V \cup \{r\}$.
As discussed above, the truncation of $z$ to times $\leq 2^{i+1}$ constitutes one unit of flow from $(r,0)$ to $(v^*_i, \ell^*_i)$ in $G_T$,
so $x'_{uv}$ is then one unit of $r-v^*_i$ flow in the metric. Further, since the cost of an edge $(u,t-c_{u,v}),(v,t)$ is $c_{u,v}$ in $G_T$, the cost of this flow $x'$ is, in fact, exactly $\ell^*_i$ which is at most $2^{i+1}$.

Next we verify $x'(\delta(S)) \geq 2 \cdot \rho$ for each $S \subseteq V-\{v^*_i\}$ with $S \cap B_i \neq \emptyset$.
Consider some $v \in S \cap B_i$. Constraint \eqref{cons:cut_t}, the fact that $v \in B_i$, and the fact that $x_{v,t} = 0$ for $\ell^*_i < t < 2^{i+1}$ shows $x'(\delta^{in}(S)) = \sum_{e \in \delta(S)} \sum_{t < 2^{i+1}} z_{e,t} \geq \rho$.
Since $x'$ is an $r-v^*_i$ flow and $r,v^*_{i} \notin S$, then flow conservation shows $x'(\delta(S)) \geq 2 \cdot \rho$.

Much like in \cite{AnKS15} for the \textsc{Prize-Collecting TSP-Path} problem, one can use Theorem \ref{thm:mader} to shortcut $x'$ past nodes not in $B_i \cup \{r\}$ to get solution for \eqref{lp:atspp}
for in the graph $G[\{r\} \cup B_i]$ (with start node $s = r$ and end node $t = v^*_i$), also with cost at most $2^{i+1}$. 
That is, we may assume $x'$ is rational as $z$ is a rational vector (being part of an extreme point of an LP with rational coefficients).
Let $\Delta$ be an integer such that the vector $\Delta \cdot x'$ is integral. Consider the graph $G'$ with nodes $V \cup \{r\} \cup \{r'\}$ where $r'$ is a new node. The edges of $G'$ consist of $\Delta \cdot x'_{uv}$ copies of edge $uv$ for each $u,v \in V \cup \{r\}$, and $\Delta$ edges from $v^*_i$ to $r'$ and also from $r'$ to $r$ (each having cost 0). Note the $r-u$ connectivity for each $u \in V$ is at least $\Delta \cdot \rho$.
Note, the cost of all edges in $G'$ is at most $\Delta \cdot 2^{i+1}$.

For each $v \in V-B_i$, we iteratively perform the splitting off procedure from Theorem \ref{thm:mader} for $s = v$. The total cost of the edges does not increase by the triangle inequality (note the edges that are removed and added all lie in the metric over $V \cup \{r\}$), and the $r-u$ connectivity remains at least $\Delta \cdot \rho$ for each $u \in B_i$. After doing this for each $v \in V-B_i$, we are left with a multigraph of total edge cost cost no more than the total cost of all edges in $G'$. Further, if we remove all $v^*_ir'$ and $r'r$ edges, we still get the connectivity from $r$ to any other $v \in B_i$ is at least $\Delta \cdot \rho$. If $k_{uv}$ denotes the number of copies of $uv$ in this new graph, setting $x''_{uv} = k_{uv}/\Delta$ for each $uv \in G[\{r\} \cup B_i]$ yields a feasible LP solution for \eqref{lp:atspp} in the metric graph over $B_i \cup \{r\}$ (with start node $r$ and end node $v^*_i$) with cost at most $2^{i+1}$. Note that we do not actually need to perform this step in our algorithm, this analysis is simply proving the existence of a low-cost solution to 

By Theorem \eqref{thm:atspp_gap}, we can then efficiently find a Hamiltonian $r-v^*_i$ path $P_i$ in $G[\{r\}\cup B_i]$
with cost at most $\alpha_\rho\cdot 2^{i+1}$.
\ref{lp:atspp}.
\end{proof}


Next we bound the cost of stitching together the paths for the admissible buckets.
\begin{lemma}\label{lem:stitch}
Let $P_i$ and $P_{i'}$ be two paths constructed in Step \eqref{step:fracpath} for consecutive indices $i,i' \in A$. Let $u_{i'}$
be the first node on $P_{i'}$ after $r$ and recall $v^*_i$ is the last node of $P_i$. Then $c_{v^*_i,u_{i'}} \leq 2^{i'+1}$.
\end{lemma}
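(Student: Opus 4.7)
The plan is to exploit the time-expanded flow interpretation of $(x,z)$ in $G_T$ that is already developed immediately before the lemma, in particular the two facts that (a) all one unit of $z$-flow passes through each bottleneck node $(v^*_j, \ell^*_j)$ for $j \in A$, and (b) no edge in the support of $z$ straddles any time $\ell^*_j$.

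First I would observe that $u_{i'}$ lies in $B_{i'}$, since it is a non-depot node on the Hamiltonian path $P_{i'}$ through $\{r\} \cup B_{i'}$; hence $t(u_{i'}) \in [2^{i'}, 2^{i'+1})$. By definition of $t(u_{i'})$, the cumulative $x$-mass on $u_{i'}$ up to time $t(u_{i'})$ is at least $\rho$, while strictly less than $\rho$ at all earlier times. Since $i' \geq i+1$, we have $\ell^*_i < 2^{i+1} \leq 2^{i'} \leq t(u_{i'})$, so the cumulative $x$-mass on $u_{i'}$ up to time $\ell^*_i$ is strictly less than $\rho$. Consequently a positive amount of flow enters $u_{i'}$ at some time $t' \in (\ell^*_i, t(u_{i'})]$.

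Next, properties (a) and (b) together say that at time $\ell^*_i$ all of the $z$-flow sits at $v^*_i$ and no flow edge skips over the time layer $\ell^*_i$. Hence any $z$-flow path in $G_T$ that reaches $(u_{i'}, t')$ with $t' > \ell^*_i$ must pass through $(v^*_i, \ell^*_i)$, and by flow decomposition at least one such path exists. Projecting this path back to the metric gives a walk from $v^*_i$ to $u_{i'}$; since every edge of $G_T$ contributes its time jump equally to its metric cost and to the elapsed time along the walk, this walk has total metric cost $t' - \ell^*_i \leq t(u_{i'}) - \ell^*_i < 2^{i'+1}$. The triangle inequality in $(V \cup \{r\}, c)$ then yields $c_{v^*_i, u_{i'}} < 2^{i'+1}$, as required.

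The main subtlety is the second step: without property (b), an edge in the support of $z$ could in principle straddle the time $\ell^*_i$ and deliver flow to $u_{i'}$ after time $\ell^*_i$ without ever touching $v^*_i$, and then no useful bound on $c_{v^*_i, u_{i'}}$ would follow. This bottleneck structure is precisely the payoff of the added constraints in Step \eqref{step:strengthen}; once it is invoked, the remainder of the argument is just flow decomposition and the triangle inequality.
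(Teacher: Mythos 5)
Your proof is correct and follows essentially the same route as the paper: work in the time-expanded graph $G_T$, use the bottleneck property that all $z$-flow passes through $(v^*_i,\ell^*_i)$, take a flow-decomposition path through $u_{i'}$, convert elapsed time to metric cost, and finish with the triangle inequality. The only cosmetic differences are that the paper points directly at $x_{u_{i'},t(u_{i'})}>0$ rather than the cumulative-mass argument, and it bounds the full decomposition path from $(v^*_i,\ell^*_i)$ to $(v^*_{i'},\ell^*_{i'})$ rather than just the prefix ending at $u_{i'}$; neither affects the conclusion.
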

\begin{proof}
Note that $u_{i'} \in B_{i'}$ means $t(u_{i'}) \in [2^{i'}, 2^{i'+1})$. Also, $x_{u_{i'}, t(u_{i'})} > 0$ by definition of $t(u_{i'})$.
All units of $z$-flow in the acyclic graph $G_T$ pass through $(v^*_i, \ell^*_i)$ and also through $(v^*_{i'}, \ell^*_{i'})$. So the restriction of $z$ to edges $(u,t),(v,t')$ in $G_T$ with $\ell^*_i \leq t \leq t' \leq \ell^*_{i'}$ constitutes one unit of $(v^*_i, \ell^*_i)-(v^*_{i'}, \ell^*_{i'})$ flow that supports $(u_{i'}, t(u_{i'}))$. Therefore, a path decomposition of this restriction of $z$ includes $(u_{i'}, t(u_{i'}))$ on some path.
Any such path has cost exactly $\ell^*_{i'} - \ell^*_i \leq 2^{i'+1}$. By the triangle inequality, $c_{v^*_i, u_{i'}} + c_{u_{i'}, v^*_{i'}} \leq 2^{i'+1}$.
\end{proof}

Next, we bound the latency of each $v \in V$ along the final $P^{v^*, t^*}$ obtained by concatenating the $P_i$ paths for increasing indices $i \in A$ and shortcutting past all but the first occurrence of $r$.
\begin{lemma}\label{lem:latency}
$d_{P^{v^*, \ell^*}} \leq 4(\alpha_\rho+1) \cdot t(v)$ for any $v \in V$.
\end{lemma}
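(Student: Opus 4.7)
The plan is to fix a node $v\in V$, identify the bucket $B_j$ containing $v$, decompose the $r$-$v$ subpath of $P^{v^*,\ell^*}$ into (i) the earlier completed path pieces $P_i$ with $i\in A, i<j$, (ii) the stitching edges between consecutive admissible buckets up to $j$, and (iii) the initial segment of $P_j$ from $r$ to $v$. Each piece will be bounded via Lemmas~\ref{lem:path} and~\ref{lem:stitch}, and the resulting geometric sums collapse to give $4(\alpha_\rho+1)\cdot 2^j$, which is $\leq 4(\alpha_\rho+1)\cdot t(v)$.

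First I would verify that $v$ does lie in some bucket $B_j$ with $j\in A$. By Constraint~\eqref{cons:enough} the threshold $t(v)$ is well defined, so $t(v)\in[2^j,2^{j+1})$ for a unique $j$. Because $x_{v,t(v)}>0$ and $x_{v,t}=0$ for every forbidden time $t\in F$ (Step~\eqref{step:strengthen}), the time $t(v)$ is not forbidden; this rules out $j\notin A$ (where the whole interval $[2^j,2^{j+1})$ is forbidden) and also forces $t(v)\leq\ell^*_j$ when $j\in A$. Hence $v\in B_j$ and $j\in A$.

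Next I would bound the three contributions to $c_{P^{v^*,\ell^*}}(v)$. By Lemma~\ref{lem:path}, every $P_i$ with $i\in A$ has cost at most $\alpha_\rho\cdot 2^{i+1}$, so
\[
\sum_{i\in A,\, i\leq j}\text{cost}(P_i)\;\leq\;\sum_{i\leq j}\alpha_\rho\cdot 2^{i+1}\;\leq\;\alpha_\rho\cdot 2^{j+2}.
\]
By Lemma~\ref{lem:stitch}, each stitching edge between consecutive $i,i'\in A$ with $i'\leq j$ costs at most $2^{i'+1}$, so the total stitching cost up to $B_j$ is at most $\sum_{i'\leq j}2^{i'+1}\leq 2^{j+2}$. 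Finally, the portion of $P_j$ from $r$ to $v$ has cost at most the total cost of $P_j$, which is at most $\alpha_\rho\cdot 2^{j+1}$; this is already absorbed into the first sum. Shortcutting past repeat occurrences of $r$ only decreases costs by the triangle inequality, so the bounds survive.

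Summing, $c_{P^{v^*,\ell^*}}(v)\leq \alpha_\rho\cdot 2^{j+2}+2^{j+2}=(\alpha_\rho+1)\cdot 2^{j+2}$. Since $t(v)\geq 2^j$, we conclude $c_{P^{v^*,\ell^*}}(v)\leq 4(\alpha_\rho+1)\cdot t(v)$, as desired. I do not anticipate a serious obstacle here; the only delicate point is the bookkeeping that forces $v\in B_j$ with $j\in A$ and ensures the sum of stitching edges is telescoping/geometric. The geometric choice of bucket widths (factor 2) is exactly what makes the two geometric sums on $i\leq j$ sum to $O(2^j)$ rather than $O(j\cdot 2^j)$.
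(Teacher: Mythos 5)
Your proof is correct and follows essentially the same decomposition and geometric-sum argument as the paper. The one thing you do beyond the paper's terse proof is to verify explicitly that $v$ actually falls into some admissible bucket $B_j$ (using $x_{v,t(v)} > 0$ together with the constraint $x_{v,t}=0$ for forbidden $t$), a step the paper leaves implicit; this is a nice piece of bookkeeping but does not change the substance of the argument.
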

\begin{proof}
Consider any  $v \in V$ and say it lies on $P_i$. To reach $v$ along $P^{v^*, \ell^*}$, we traverse paths $P_{i'}$ for $i' < i$ plus the ``stitching'' edges $v^*_{i'}u^*_{i''}$ for consecutive indices $i',i'' \in A$, $i'' \leq i$. By Lemma \eqref{lem:path} and Lemma \eqref{lem:stitch}, the latency of $v$ along $P^{v^*, \ell^*}$ can be bounded by
$\sum_{i' \in A, i' \leq i} \alpha_\rho \cdot 2^{i'+1} + \sum_{i' \in A, i' \leq i} 2^{i'+1} \leq (\alpha_\rho+1) \cdot\sum_{i'=0}^i \cdot 2^{i+1} \leq 4(\alpha_\rho+1)\cdot 2^i \leq 4(\alpha_\rho+1) \cdot t(v)$.
\end{proof}

Set $\rho = 2/3$ and note Theorem \ref{thm:atspp_gap} implies $\alpha_{2/3}$ is bounded by a constant.
The proof of Theorem \ref{thm:dlqpt} then follows readily from Lemmas \eqref{lem:suffice} and \eqref{lem:latency} and the fact that $T$ is bounded by a polynomial in $n$.

Using our approach even with the improved bound of $\alpha \leq 22$ from \cite{TV2020}  produces an approximation ratio in the tens of thousands using our framework.
We conclude by demonstrating a much better constant-factor guarantee in the special case of regret metrics.
\begin{proofof}{Theorem \ref{thm:regret}}
Choosing $\rho = 0.74743$ and using the integrality gap bound from Theorem \ref{thm:regret} yields $\alpha_\rho \leq 48.09442$ in this regret metrics.
Then using Lemmas \ref{lem:latency} and \ref{lem:suffice} and choosing $\epsilon$ sufficiently small in Theorem \ref{thm:scaling} yields a 778-approximation.
\end{proofof}


\section{Bounding the Integrality Gap of \eqref{lp:atspp}}\label{sec:atspp}

Consider nodes $V$ with two distinguised $s,t \in V$ and asymmetric metric distances $c_{u,v}$ between points of $V$. We consider \eqref{lp:atspp} for the Asymmetric TSP Path problem
where the goal is to find the cheapest Hamiltonian $s-t$ path. As mentioned earlier, the integrality gap is unbounded if $\rho \leq 1/2$ \cite{FriggstadSS13}, so we focus on the case $1/2 < \rho \leq 1$.
As in \cite{KohneTV19}, we start withthe dual of \eqref{lp:atspp}.
\begin{alignat}{3}
{\bf maximize}:\quad & z_t - z_s + \sum_U 2\rho \cdot y_U & & \tag{$\textsc{DUAL}_\rho$} \label{lp:dual} \\
{\bf subject~to}: \quad & z_v - z_u + \sum_{U : uv \in \delta(U)} y_U \quad & \leq &\quad c_{u,v} &&\quad  \forall~ u,v \nonumber \\
& y \quad & \geq &\quad 0 \nonumber
\end{alignat}
Naturally, our proof borrows many steps from K\"{o}hne, Traub, and Vygen \cite{KohneTV19} but there are additional challenges we have to work through in this more general setting.

For a vector $x$ over the edges $E$ of the directed metric (when viewed as a complete, directed graph), let $\supp(x) = \{uv \in E: x_{u,v} > 0\}$.
Similarly, for a vector $y$ over cuts of the metric let $\supp(y) = \{\emptyset \subsetneq S \subseteq V-\{s,t\} : y_S > 0\}$. From now on, we focus on the graph $G = (V,\supp(x))$.
The proofs of Propositions \ref{prop:uncross}, \ref{prop:cuts}, and \ref{prop:cross} are very similar to proofs in \cite{KohneTV19} and are omitted or just sketched in this paper.
\begin{proposition}\label{prop:uncross}
Given any optimal dual solution $(y,z)$, one can find an optimal dual solution $(y', z)$ with $\supp(y')$ being laminar
in polynomial time.
\end{proposition}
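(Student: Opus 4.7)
My plan is to follow the standard uncrossing recipe, adapted to check that the bookkeeping goes through for the $\rho$-weakened cut relaxation. The local move I would use is: given an optimal dual $(y,z)$ and two properly crossing sets $U_1, U_2 \in \supp(y)$ (i.e., each of $U_1 \cap U_2$, $U_1 \setminus U_2$, $U_2 \setminus U_1$ is nonempty), set $\epsilon = \min\{y_{U_1}, y_{U_2}\} > 0$ and shift $\epsilon$ units of dual weight from $\{U_1, U_2\}$ to $\{U_1 \cap U_2, U_1 \cup U_2\}$, leaving $z$ untouched. Since $U_1, U_2 \subseteq V - \{s,t\}$, both $U_1 \cap U_2$ and $U_1 \cup U_2$ remain valid cut sets for \eqref{lp:dual}.

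Next I would verify feasibility and optimality of the new $(y', z)$. Optimality is immediate: $\sum_U y_U$ is preserved by this move, so the objective $z_t - z_s + 2\rho \sum_U y_U$ is unchanged. For the edge constraint at a directed edge $uv$, partition the vertex set into the four regions $A = U_1 \cap U_2$, $B = U_1 \setminus U_2$, $C = U_2 \setminus U_1$, $D = V \setminus (U_1 \cup U_2)$ and enumerate the cases for where $u$ and $v$ land. A routine case check yields
\begin{equation*}
\bon[uv \in \delta(U_1 \cap U_2)] + \bon[uv \in \delta(U_1 \cup U_2)] \leq \bon[uv \in \delta(U_1)] + \bon[uv \in \delta(U_2)],
\end{equation*}
so the left-hand side of the edge constraint only weakly decreases, preserving feasibility.

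To get laminarity in polynomial time, rather than trying to bound the number of uncrossing steps directly, I would solve a secondary LP: with $z$ fixed to the given values, maximize $\Phi(y) := \sum_U |U|^2 \, y_U$ over all $y \geq 0$ satisfying the edge constraints of \eqref{lp:dual} together with an equality constraint pinning $\sum_U y_U$ to the known optimal value. This LP has a polynomial-time separation oracle: the only nontrivial constraints are the edge constraints indexed by $(u,v)$, and checking them amounts to a maximum-weight cut-containing computation of exactly the same flavor as the min-cut oracle that separates \eqref{lp:atspp}. The ellipsoid method then delivers an optimal $y^\star$ in polynomial time. If any two $U_1, U_2 \in \supp(y^\star)$ crossed, the uncrossing move above would preserve feasibility and the $\sum_U y_U$ constraint while strictly increasing $\Phi$: indeed, $|U_1 \cap U_2| + |U_1 \cup U_2| = |U_1| + |U_2|$ with $|U_1 \cap U_2| < \min\{|U_1|, |U_2|\}$ and $|U_1 \cup U_2| > \max\{|U_1|, |U_2|\}$, so by strict convexity of $t \mapsto t^2$ we would get $|U_1 \cap U_2|^2 + |U_1 \cup U_2|^2 > |U_1|^2 + |U_2|^2$, contradicting optimality. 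Thus $\supp(y^\star)$ is laminar.

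The main technical obstacle is just confirming that the secondary LP is efficiently solvable, which reduces to exhibiting a separation oracle for the edge constraints of \eqref{lp:dual}; this is standard and already implicit in the hypothesis that \eqref{lp:atspp} can be solved in polynomial time. Everything else is an enumeration of cases and a convexity bound, both routine, and this is exactly the pattern followed in \cite{KohneTV19} for the $\rho = 1$ case.
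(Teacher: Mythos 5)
Your uncrossing move, the submodularity case check showing $\bon[uv \in \delta(U_1\cap U_2)] + \bon[uv \in \delta(U_1\cup U_2)] \leq \bon[uv\in\delta(U_1)]+\bon[uv\in\delta(U_2)]$, the preservation of the objective, and the strict increase of the quadratic potential under any uncrossing are all correct, and you have also implicitly identified the same observation the paper leans on: $\rho$ enters \eqref{lp:dual} only through the objective, so the feasible region and hence the entire uncrossing analysis are $\rho$-independent. The paper itself does not rederive this; it simply cites \cite{KohneTV19}. So your intent to give a self-contained argument is fine in spirit.

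The place where your write-up has a genuine gap is the polynomial-time claim for the secondary LP. That LP has exponentially many \emph{variables} $y_U$ and only polynomially many \emph{constraints} (the $O(n^2)$ edge constraints plus one equality). Saying it ``has a polynomial-time separation oracle'' for the edge constraints is a non-statement here: those constraints are explicitly listable, and the ellipsoid method applied to this primal buys you nothing because the difficulty is the exponential number of columns, not rows. To actually solve it, you must pass to its \emph{dual}, which has variables $w_{uv}\geq 0$ and $\mu$ and one constraint per admissible set $U$, namely $\sum_{uv\in\delta(U)} w_{uv} + \mu \geq |U|^2$. Separation there is the problem of minimizing $g(U) := \sum_{uv\in\delta(U)} w_{uv} - |U|^2$ over nonempty $U \subseteq V-\{s,t\}$, and it is not a min-cut computation. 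It does happen to be tractable, because the symmetric cut function is submodular and $-|U|^2$ is also submodular (from $|A\cap B|^2 + |A\cup B|^2 \geq |A|^2 + |B|^2$), so $g$ is submodular and can be minimized in polynomial time by general submodular function minimization; one then runs ellipsoid on the dual and recovers a primal optimum $y^\star$ of polynomial support from the constraints the oracle returned. Your proof should say this explicitly --- as written, the justification of polynomial time is the one step that would not survive scrutiny, even though the conclusion it is meant to support is true.
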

In other words, we can modify $y$ to be laminar without changing $z$ using efficient uncrossing techniques. The proof is exactly the same as the proof in \cite{KohneTV19} essentially because the set of feasible solutions
to \eqref{lp:dual} does not change if we select different $\rho$.

The next proposition is almost identical to one in \cite{KohneTV19}, but we omit the case $U = V$ in the statement. In fact, the result may not be true for this case $U = V$, we handle that separately below.
\begin{proposition} \label{prop:cuts}
Let $x$ be an optimum primal solution and let and $G = (V, \supp(x))$.
For any $U \subseteq V-\{s,t\}$ with $x(\delta(U)) = 2\rho$, any topological ordering $U_1, \ldots, U_\ell$ of the strongly connected components of $G[U]$ satisfies:
\begin{itemize}
\item $\delta^{in}(U_1) = \delta^{in}(U)$,
\item $\delta^{out}(U_\ell) = \delta^{out}(U)$, and
\item $x(\delta^{out}(U_i) = \delta^{in}(U_{i+1}))$ for any $1 \leq i < \ell$.
\end{itemize}
\end{proposition}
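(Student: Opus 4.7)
\begin{proofbox}[Proof proposal for Proposition \ref{prop:cuts}]
My plan is to combine flow conservation with the tightness of $U$ via a telescoping argument on the prefixes $U_{\le j} := U_1 \cup \cdots \cup U_j$ and, symmetrically, the suffixes $U_{\ge j} := U_j \cup \cdots \cup U_\ell$; the third item will then follow by re-applying the first two items to these sub-cuts.

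First I would record two routine facts. The flow-conservation constraints of \eqref{lp:atspp} give $x(\delta^{out}(v)) = x(\delta^{in}(v))$ for every $v \neq s,t$, and summing this over $v \in A$ with cancellation of internal edges yields $x(\delta^{in}(A)) = x(\delta^{out}(A))$ for every $A \sse V-\{s,t\}$; in particular $x(\delta^{in}(U)) = x(\delta^{out}(U)) = \rho$. Second, because $U_1,\dots,U_\ell$ is a topological order on the SCCs of $G[U]$, no edge of $G$ can go from $U_{>j}$ back into $U_{\le j}$ (else two SCCs would collapse).

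Using these, every $x$-positive edge entering $U_{\le j}$ must originate in $V-U$, so it simultaneously enters $U$. This yields
\[
x\bigl(\delta^{in}(U_{\le j})\bigr) \;\le\; x\bigl(\delta^{in}(U)\bigr) \;=\; \rho,
\]
and flow conservation on $U_{\le j}$ gives $x(\delta(U_{\le j})) = 2\,x(\delta^{in}(U_{\le j})) \le 2\rho$. The cut inequality of \eqref{lp:atspp} forces equality throughout, so every $U_{\le j}$ is tight and $x(\delta^{in}(U_{\le j})) = \rho$. Taking $j=1$ forces $\sum_{k\ge 2} x\bigl(\delta^{in}(U) \cap \delta^{in}(U_k)\bigr) = 0$, hence no such edge is in $\supp(x)$ and $\delta^{in}_G(U_1) = \delta^{in}_G(U)$. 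The symmetric argument on $U_{\ge j}$ establishes that each $U_{\ge j}$ is tight and, at $j=\ell$, that $\delta^{out}_G(U_\ell) = \delta^{out}_G(U)$.

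For the third item, I would re-apply the first two items to the tight cuts $U_{\le i}$ and $U_{\ge i+1}$ just produced. The SCCs of $G[U_{\le i}]$ in the inherited topological order are $U_1,\dots,U_i$, so the second item applied to $U_{\le i}$ gives $\delta^{out}_G(U_i) = \delta^{out}_G(U_{\le i})$ and hence $x(\delta^{out}(U_i)) = \rho$; symmetrically $x(\delta^{in}(U_{i+1})) = \rho$, matching the intent of the third bullet. In fact the underlying edge sets coincide: the first two items applied to $U$ forbid any $x$-positive edge between $V-U$ and the interior SCCs $U_2,\dots,U_{\ell-1}$, so the common set is exactly the edges of $G$ going from $U_{\le i}$ to $U_{\ge i+1}$.

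The main obstacle is not conceptual but accounting: one must verify that no $x$-support lives on ``backward'' inter-SCC edges, so that $\delta^{in}(U_{\le j})$ is fully accounted for by $\delta^{in}(U)$. Once that observation is in place, the single inequality $x(\delta^{in}(U_{\le j})) \le \rho$ propagates the tightness of $U$ to every prefix and suffix, and the three conclusions drop out.
\end{proofbox}
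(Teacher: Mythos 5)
Your proof is correct and follows essentially the same approach as the paper (and as K\"ohne--Traub--Vygen): the key facts are that the topological order rules out backward inter-SCC edges, and that tightness of $U$ squeezes the LP's $\geq\rho$ cut lower bounds into equalities throughout. Your reorganization---first establishing tightness of every prefix $U_{\le j}$ and suffix $U_{\ge j}$, then deriving the third bullet by re-applying the first two bullets to the tight sets $U_{\le i}$ and $U_{\ge i+1}$---is a clean repackaging of what the paper proves via a direct induction on the SCC index.
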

We sketch the proof of Proposition \ref{prop:cuts} so the reader is assured it holds, though the proof is essentially the same.
\begin{proof}[Proof sketch]
Because $U$ is a tight set, $x(\delta^{in}(U)) = \rho$. Further, $x(\delta^{in}(U_1)) \geq \rho$. All edges in $\supp(x)$ entering $\delta(U_1)$ must lie in $\delta^{in}(U)$ because $U_1$ is the first node in the topological ordering.
Thus, $\rho = x(\delta^{in}(U)) \geq x(\delta^{in}(U_1)) \geq \rho$, so equality must hold throughout and $\delta^{in}(U) = \delta^{in}(U_1)$ as we are working in the support of $x$. A similar statement shows $\delta^{out}(U_\ell) = \delta^{out}(U)$.

For $i > 1$ we note $\delta^{in}(U_i) \subseteq \delta^{in}(U) \cup \bigcup_{j < i} \delta^{out}(U_j)$ simply because the $U_j$ are topologically ordered.
Inductively, we have $x(\delta^{out}(U_{i-1})) = \rho$ and each edge in $\delta^{in}(U) \cup \bigcup_{j < i-1} \delta^{out}(U_j)$ is already
proven to lie in $\delta^{in}(U_{j'})$ for some $j' < i$. So we see $\delta^{in}(U_i) \subseteq \delta^{out}(U_{i-1})$ and, thus,
\[ \rho = x(\delta^{in}(U_{i-1})) = x(\delta^{out}(U_{i-1})) \geq x(\delta^{in}(U_i)) \geq \rho. \]
So, again, equality must hold throughout.
\end{proof}

We use a different observation to address the case $U = V$ that was omitted from Proposition \ref{prop:cuts}. Intuitively, we show that it is still possible to buy a cheap set of edges to chain the strongly-connected components of $G$ in sequence but the cost of these edges does increase relative to $OPT_{LP}$ as $\rho \rightarrow 1/2$.
\begin{proposition}\label{prop:new}
In any topological ordering $U_1, \ldots, U_\ell$ of the strongly connected components of $G$, for each $1 \leq i < \ell$ there is some edge $uv \in \delta^{out}(U_i) \cap \delta^{in}(U_{i+1})$ with $c_{u,v} \leq \frac{1}{2\rho-1} \cdot \sum_{uv \in \delta^{out}(U_i) \cap \delta^{in}(U_{i+1}))} c_{u,v} x_{u,v}$.
\end{proposition}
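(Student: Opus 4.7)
The plan is to reduce Proposition~\ref{prop:new} to establishing the cut bound $C := x\bigl(\delta^{out}(U_i) \cap \delta^{in}(U_{i+1})\bigr) \ge 2\rho - 1$, from which the proposition follows by a one-line averaging argument. Indeed, once $C \ge 2\rho - 1 > 0$ is known, the cut is nonempty, the weighted average $\bigl(\sum_{uv} c_{u,v} x_{u,v}\bigr)/C$ upper-bounds the minimum edge cost in the cut, and since $1/C \le 1/(2\rho-1)$, some edge $(u,v)$ in the cut satisfies $c_{u,v} \le \frac{1}{2\rho-1}\sum_{uv} c_{u,v} x_{u,v}$, where the sums range over $uv \in \delta^{out}(U_i) \cap \delta^{in}(U_{i+1})$.

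To set up the cut bound, first observe that because $U_1, \ldots, U_\ell$ is a topological ordering of the SCCs of $G = (V,\supp(x))$, we have $x(\delta^{in}(U_1)) = 0 = x(\delta^{out}(U_\ell))$. Combined with flow conservation, this forces the two endpoints $s$ and $t$ to lie in the two extremal SCCs $U_1$ and $U_\ell$ (one each); otherwise $U_1$ would have no incident flow in $\supp(x)$, violating the cut constraint $x(\delta(U_1)) \ge 2\rho$. Consequently, for every ``interior'' SCC $U_j$ with $2 \le j \le \ell-1$, the cut constraint and flow conservation together give $x(\delta^{in}(U_j)) = x(\delta^{out}(U_j)) \ge \rho$, while for the endpoint SCCs flow conservation at $s$ and $t$ yields $x(\delta^{in}(U_\ell)) = 1 = x(\delta^{out}(U_1))$. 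A second topological consequence I would use is that every edge of $\supp(x)$ entering $U_{i+1}$ originates in $U_1 \cup \cdots \cup U_i$; writing $S$ for the amount of flow reaching $U_{i+1}$ directly from the ``skipping'' prefix $U_1 \cup \cdots \cup U_{i-1}$, we then have $C = x(\delta^{in}(U_{i+1})) - S$.

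The cut bound then reduces to a short flow-accounting calculation. For $i \ge 2$, let $P := U_1 \cup \cdots \cup U_{i-1}$; topological acyclicity gives $x(\delta^{in}(P)) = 0$, and since $P$ contains exactly one of $\{s,t\}$, flow conservation gives $x(\delta^{out}(P)) = 1$. Of this unit of flow, at least $x(\delta^{in}(U_i)) \ge \rho$ must enter $U_i$, so at most $1 - \rho$ can skip $U_i$ and reach $U_{i+1} \cup \cdots \cup U_\ell$; in particular $S \le 1 - \rho$. Coupling this with $x(\delta^{in}(U_{i+1})) \ge \rho$ (or $= 1$ when $i+1 = \ell$) gives $C \ge \rho - (1 - \rho) = 2\rho - 1$. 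The case $i = 1$ is easier: $S = 0$ since there is no earlier prefix, and $C = x(\delta^{in}(U_2)) \ge \rho \ge 2\rho - 1$. The only real subtlety in this argument is the endpoint bookkeeping---specifically, handling the SCCs containing $s$ or $t$, where the primal cut constraint is unavailable---but the topological identities above, combined with flow conservation, are enough to recover the bounds needed in those boundary cases.
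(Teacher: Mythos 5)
Your proof is correct and takes essentially the same approach as the paper: both establish the cut bound $x\bigl(\delta^{out}(U_i)\cap\delta^{in}(U_{i+1})\bigr)\ge 2\rho-1$ via flow conservation across prefixes of the topological order together with $x(\delta^{in}(U_j))\ge\rho$, and then conclude by averaging over the cut. The paper's four-term decomposition of $x(\delta^{out}(U_{\le i}))=1$ (with separate treatment of $i=1$ and $i=\ell-1$) is algebraically equivalent to your bound $S\le 1-\rho$ on the ``skipping'' flow, so the two arguments are the same computation packaged slightly differently.
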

\begin{proof}
This is easy for $i = 1$ and $i = \ell-1$. For example, we have $x(\delta^{in}(U_2) \geq \rho$ and all edges from $\delta^{in}(U_2)$ lie in $\delta^{out}(U_1)$. Thus, $x(\delta^{out}(U_1) \cap \delta^{in}(U_2)) \geq \rho$ so the cheapest edge in $\delta^{out}(U_1) \cap \delta^{in}(U_2)$ has cost at most $\frac{1}{\rho} \cdot \sum_{uv \in \delta^{out}(U_i) \cap \delta^{in}(U_{i+1}))} c_{u,v} x_{u,v}$. We finish by observing $1/\rho \leq 1/(2\rho-1)$ as $\rho \leq 1$. A similar argument works for $i = \ell-1$, so we now assume $1 < i < \ell-1$.

We quickly introduce notation. For an index $1 \leq j \leq \ell$ let $U_{\leq j} = \cup_{1 \leq j' \leq j} U_{j'}$ and $U_{\geq j} = \cup_{j \leq j' \leq \ell} U_{j'}$.
Let $\delta(X;Y)$ denote $\{uv \in \supp(x) : u \in X, v \in Y\}$ for $X,Y \subseteq V$. With this notation, let $a = x(\delta(U_i;U_{i+1})), b = x(\delta(U_i; U_{\geq i+2})),
c = x(\delta(U_{\leq i-1}; U_{i+1}))$, and $d = x(\delta(U_{\leq i-1}; U_{\geq i+1}))$. We have $a+b+c+d = x(\delta^{out}(U_{\leq i})) = 1$ as $\delta^{out}(U_{\leq i})$ is the disjoint union of the sets defining $a,b,c,d$).
On the other hand, $\rho \leq x(\delta^{out}(U_i)) = a+b$ and $\rho \leq x(\delta^{in}(U_{i+1})) = a+c$. Therefore, $2\rho - 1 \leq (a+b) + (a+c) - (a+b+c+d) \leq a$ so $x(\delta^{out}(U_i) \cap x(\delta^{in}(U_i)) \geq 2\rho-1$.
So the cheapest edge $(u,v) \in \delta^{out}(U_i) \cap \delta^{in}(U_{i+1})$ has $c_{u,v} \leq \frac{1}{2\rho-1} \cdot \sum_{(u',v') \in \delta^{out}(U_i) \cap \delta^{in}(U_{i+1}))} c_{u',v'} x_{u',v'}.$
\end{proof}

\begin{proposition} \label{prop:cross}
Let $G$ be the support graph of an optimum solution $x$ to \eqref{lp:atspp} and $(y,z)$ an optimum dual with $\supp(y)$ laminar. For any $U \in \supp(y) \cup \{V\}$
and any $u,w \in U$ with $w$ being reachable from $u$ in $G[U]$, there is a $v-w$ path in $G[U]$ that crosses each set $U' \in \supp(y)$ at most twice for $U' \subsetneq U$.
\end{proposition}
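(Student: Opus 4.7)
We proceed by induction on the laminar family $\supp(y)$, working from the smallest sets upward. The base case is when $U \in \supp(y) \cup \{V\}$ has no proper subset in $\supp(y)$: any $u$-$w$ path in $G[U]$ (which exists by the reachability hypothesis) trivially satisfies the claim. For the inductive step, fix $U$ and let $\mathcal{F}$ denote the maximal elements of $\supp(y)$ strictly contained in $U$ (its ``children''); these are pairwise disjoint by laminarity. Every $U'' \in \supp(y)$ with $U'' \subsetneq U$ either lies in $\mathcal{F}$ or is nested strictly inside a unique member of $\mathcal{F}$, so it suffices to bound crossings of children of $U$ by two, and then recursively control crossings within each visited child by invoking the inductive hypothesis on that child.

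For the child-level bound, I take a $u$-$w$ walk $P$ in $G[U]$ minimizing $\sum_{U' \in \mathcal{F}} |\delta(U') \cap P|$. Suppose for contradiction that $|\delta(U') \cap P| \geq 3$ for some $U' \in \mathcal{F}$. Since $y_{U'} > 0$ the set $U'$ is tight, so Proposition \ref{prop:cuts} supplies a topological ordering $U'_1, \ldots, U'_\ell$ of the strongly connected components of $G[U']$ with every $\supp(x)$-edge entering $U'$ landing in $U'_1$ and every such edge leaving $U'$ departing from $U'_\ell$. Consequently the crossings of $U'$ along $P$ alternate between entries at nodes of $U'_1$ and exits at nodes of $U'_\ell$; having at least three crossings forces two same-type crossings, say entries $a_1, a_2 \in U'_1$ separated by an ``outside excursion'' (the case of two same-type exits $b_1, b_2 \in U'_\ell$ is symmetric). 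Strong connectivity of $U'_1$ yields a path in $G[U'_1]$ from $a_1$ to $a_2$; substituting this path for the outside excursion strictly decreases $|\delta(U') \cap P|$ by two without increasing crossings of any sibling child of $U$ (all disjoint from $U'$ by laminarity), contradicting the minimality of $P$. Thus each child is crossed at most twice, so each visited child $U'$ is traversed by a contiguous $G[U']$-sub-path whose endpoints are either boundary crossings (in $U'_1$ or $U'_\ell$ by Proposition \ref{prop:cuts}) or $u, w$ themselves when they lie in $U'$; these endpoints are connected in $G[U']$ through the SCC chain. Replacing each such sub-path with the one guaranteed by the inductive hypothesis applied to $U'$ bounds crossings of every deeper $U''' \in \supp(y)$ with $U''' \subsetneq U'$ by two, completing the induction.

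The main obstacle is the subtle case in which $u, w$ both lie in the same child $U' \in \mathcal{F}$ but $w$ is not reachable from $u$ in $G[U']$ (for instance, when $u$ lies in a topologically later SCC of $G[U']$ than $w$). Then $P$ must exit $U'$ at some $b \in U'_\ell$ and later re-enter at some $a \in U'_1$, producing \emph{two} disjoint $G[U']$-sub-paths (from $u$ to $b$ and from $a$ to $w$) joined by an outside excursion in $G[U \setminus U']$. Applying the inductive hypothesis to each sub-path independently only yields a joint bound of $2+2=4$ for the crossings of a common grandchild $U''' \subsetneq U'$, whereas we require two. To resolve this, I would augment $G[U']$ with a virtual edge $b \to a$ extracted from a witness reachability path in $G[U]$; in this augmented graph $w$ becomes reachable from $u$, and a variant of the shortcut argument above at the grandchild level, together with Proposition \ref{prop:cuts} applied to $U'''$, suffices to bound joint crossings of each grandchild by two. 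Working out the interaction between the virtual edge and the SCC structure of deeper tight sets is the main technical hurdle beyond the $\rho = 1$ case treated in \cite{KohneTV19}.
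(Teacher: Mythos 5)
Your overall strategy — bottom-up induction on the laminar family, using Proposition \ref{prop:cuts}'s SCC-chain structure, and a minimizing-walk shortcut argument to bound crossings of the children of $U$ by two — is sound and morally parallel to the paper's top-down sketch, which instead picks the \emph{maximal} set $U'$ that $P$ exits and re-enters and replaces the entire first-entry-to-last-exit subwalk with an internal $a$-$b$ path in $G[U']$. Both versions lean on the same two facts: by Proposition \ref{prop:cuts}, entries of a tight $U'$ land in $U'_1$ and exits depart from $U'_\ell$, and siblings in a laminar family are disjoint so an internal shortcut cannot hurt them. Your per-child shortcut argument (replacing an $a_1$-to-$a_2$ subwalk with a path in $G[U'_1]$) is correct.

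The issue is the ``main obstacle'' you flag: when $u,w$ both lie in the same child $U'$, with $u \in U'_j$, $w \in U'_{j'}$, and $j > j'$. Here the endpoints of the inside-$U'$ portion of $P$ are $u$ and $w$ themselves, and $w$ is not reachable from $u$ in $G[U']$, so you cannot replace that portion with a single internal $G[U']$-path. You correctly note that applying the inductive hypothesis to the two disjoint inside-$U'$ pieces separately gives a bound of $4$, not $2$, on the crossings of a grandchild $U''' \subsetneq U'$. Your proposed remedy — augmenting $G[U']$ with a virtual $b \to a$ edge — is not worked out, and you explicitly say the interaction with the SCC structure of deeper tight sets is ``the main technical hurdle.'' That is a genuine, unclosed gap in the submitted argument, not a cosmetic omission: the claim is precisely ``at most twice,'' and your own analysis shows the natural argument stops at four. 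The paper's sketch uses the same replacement move and likewise does not explicitly walk through the $a = u$, $b = w$ case, deferring to \cite{KohneTV19} for the full details, so you are not being held to a higher standard than the paper — but a proof that identifies an obstruction and leaves a ``TODO'' has not established the statement. If you want to push this through, I would suggest not attempting to bound each deeper set by $2$ via two independent applications of the IH on disjoint pieces; instead, exploit the fact that the two pieces live in the disjoint SCC intervals $U'_j \cup \cdots \cup U'_\ell$ and $U'_1 \cup \cdots \cup U'_{j'}$ of $G[U']$, and argue directly (again via Proposition \ref{prop:cuts} applied to the grandchild) about how a single grandchild can straddle both intervals, rather than invoking a black-box virtual edge.
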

Again, the proof is the same as that in \cite{KohneTV19} which only relies on Proposition \ref{prop:cuts} for $U \in \supp(y)$ (i.e. not on the case $U = V$ that we omitted from the proposition in our setting). We sketch the argument
briefly to ensure the reader this still holds with the omission of $U = V$ from Proposition \ref{prop:cuts}.
\begin{proof}
Consider any $u-w$ path $P$ contained in $G[U]$. Suppose $U' \in \supp(y)$ is maximal among all such sets where $P$ re-enters $U'$ after it exits $U'$. Let $a$ be the first node of $P$ in $U'$ and $b$ the last node of $P$ in $U'$ (it could be $a = u$ or $b = v$).
Inductively, replace the $a-b$ portion of $P$ with an $a-b$ path in $G[U']$ that enters and leaves every set $U'' \in \supp(y)$ at most once for $U'' \subsetneq U'$. Repeat for all such maximal $U' \in \supp(y)$.
\end{proof}

\subsection{Constructing the Path}
Let $OPT_{LP}$ denote the optimum solution value to \eqref{lp:atspp}. Recall we let $\alpha$ denote an upper bound on the integrality gap of the standard Held-Karp relaxation for ATSP.
We will prove the following lemma later.
\begin{lemma}\label{lem:zbound}
An optimal dual solution $(y, z)$ with $\supp(y)$ being laminar and $z_s - z_t \leq \frac{1}{2\rho-1} \cdot OPT_{LP}$ can be computed in polynomial time.
\end{lemma}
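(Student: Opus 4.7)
The plan is as follows. First, solve \eqref{lp:atspp} and its dual \eqref{lp:dual} in polynomial time (using the standard polynomial-time separation for the exponentially many cut constraints) to obtain an optimal primal $x$ and some optimal dual $(y,z)$. Then apply Proposition~\ref{prop:uncross} to obtain an optimal dual with $\supp(y)$ laminar without changing $z$, all in polynomial time. This takes care of the laminarity and polynomial-time aspects of the conclusion.

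It remains to enforce $z_s - z_t \le \tfrac{1}{2\rho-1}\cdot OPT_{LP}$. Since any optimal dual satisfies the strong-duality identity $OPT_{LP} = (z_t - z_s) + 2\rho\sum_U y_U$, this bound is equivalent to $\sum_U y_U \le \tfrac{1}{2\rho-1}\cdot OPT_{LP}$. The plan is therefore to post-process $(y,z)$ by solving a secondary LP that minimizes $\sum_U y_U$ subject to dual feasibility, the optimality equation $(z_t-z_s)+2\rho\sum_U y_U = OPT_{LP}$, and $\supp(y)$ lying in the fixed laminar family produced above; the substantive task is then to prove the optimum value of this secondary LP is at most $\tfrac{1}{2\rho-1}\cdot OPT_{LP}$.

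To establish this bound I would follow the strategy of K\"ohne, Traub, and Vygen~\cite{KohneTV19} for the case $\rho = 1$, replacing their use of Proposition~\ref{prop:cuts} at the outer set $V$ (where it does not apply in our weakened setting) with Proposition~\ref{prop:new}. Concretely, let $U_1,\ldots,U_\ell$ be the topological ordering of the SCCs of $G = (V,\supp(x))$, so that $s\in U_1$ and $t\in U_\ell$. Proposition~\ref{prop:new} supplies stitching edges $(u_i,v_i)\in\delta^{out}(U_i)\cap\delta^{in}(U_{i+1})$ with $\sum_{i=1}^{\ell-1} c_{u_i,v_i} \le \tfrac{1}{2\rho-1}\cdot OPT_{LP}$. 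One then argues inductively over the laminar tree of $\supp(y)\cup\{V\}$ that $\sum_U y_U$ can be charged against such stitching costs: at each interior $U\in\supp(y)$, complementary slackness gives $x(\delta(U)) = 2\rho$ and Proposition~\ref{prop:cuts} lets one chain the SCCs of $G[U]$ cleanly with factor $1$ (exactly as in KTV's $\rho = 1$ analysis), while at the root $V$ the factor $\tfrac{1}{2\rho-1}$ enters exactly once via Proposition~\ref{prop:new}.

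The main obstacle is verifying that the $\tfrac{1}{2\rho-1}$ factor appears only at the top level and does not compound down the laminar tree. I expect this because the stronger Proposition~\ref{prop:cuts} applies at every interior level (which in KTV already gave the $\rho = 1$ bound with constant $1$), and only the outermost set $V$ requires the weaker Proposition~\ref{prop:new}. Pinning down the exact constant $\tfrac{1}{2\rho-1}$ rather than something worse will require careful bookkeeping of how the $y$-mass at each level is distributed over the corresponding SCC structure and how complementary slackness relates these masses to the primal cut-values $x(\delta(U)) = 2\rho$; this is where the proof most substantively diverges from the $\rho = 1$ case handled in~\cite{KohneTV19}.
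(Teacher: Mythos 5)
There is a genuine gap in your proposal, and it comes from a mischaracterization of how both K\"ohne, Traub, and Vygen and this paper actually bound $z_s - z_t$. In the paper, after solving the secondary LP that maximizes $z_t - z_s$ (equivalently, minimizes $\sum_U y_U$), the argument is \emph{not} an inductive charging of $y$-mass against stitching costs down the laminar tree. Instead, the paper writes down the dual of the secondary LP \eqref{lp:dual2}, applies complementary slackness to show (Lemma~\ref{lem:contract}) that $y_U > 0$ forces an $s$-$t$ path in $G[V-U]$ --- i.e., no set in $\supp(y)$ is an $s$-$t$ separator in $\supp(x)$ --- and then invokes a variant of Menger's theorem to obtain \emph{two node-disjoint} $s$-$t$ paths in the graph with the maximal $\supp(y)$-sets contracted. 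After uncontracting via Proposition~\ref{prop:cross}, the two paths collectively cross each set in $\supp(y)$ at most twice, which yields $0 \le c(P_1) + c(P_2) \le 2\sum_U y_U + 2(z_t - z_s)$, hence $z_s - z_t \le \sum_U y_U$; combining with strong duality $OPT_{LP} = 2\rho\sum_U y_U + (z_t-z_s)$ then gives $(2\rho-1)(z_s-z_t) \le OPT_{LP}$. The ``two paths'' step is essential: with only one $s$-$t$ path the same arithmetic gives $z_s - z_t \le 2\sum_U y_U$, which after plugging into strong duality produces the vacuous bound $(2\rho - 2)\sum_U y_U \le OPT_{LP}$.

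Your proposal never establishes the non-separator property or the existence of two disjoint paths, and Proposition~\ref{prop:new} does not appear anywhere in the paper's proof of Lemma~\ref{lem:zbound} --- it is used only later, in the proof of Theorem~\ref{thm:atspp_gap}, to bound the cost of the stitching edges $S$, which is a quantity entirely separate from $\sum_U y_U$. There is no evident way to ``charge $\sum_U y_U$ against stitching costs'': the $y_U$ live in the dual, the stitching edges are primal objects chosen by a minimum over $\delta^{out}(U_i)\cap\delta^{in}(U_{i+1})$, and the only bridge between them is complementary slackness, which you have not invoked in a form that produces the bound. The part of your plan that does survive is the reduction to minimizing $\sum_U y_U$ via a secondary LP (this is exactly \eqref{lp:dual2} in disguise) and the use of Proposition~\ref{prop:uncross} to restore laminarity afterward; the part in between needs to be replaced wholesale by the non-separator/Menger argument.
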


Using this, we now turn to the main result of this section. Note, we are choosing simplicity in presentation over optimizing the constants in the guarantee.

\begin{proof}[Proof of Theorem \ref{thm:atspp_gap}]
Complementary slackness ensures every $U \in \supp(y)$ satisfies $x(\delta(U)) = 2\rho$.
Consider the edge support graph $G = (V, \supp(x))$. Modify $G$ to get an ATSP instance $H$ by adding a new node $\overline{v}$ and edges $(t,\overline v)$ with cost $OPT_{LP}$
and $(\overline v, s)$ with cost $0$.

It is easy to check that setting
\[ x'_{u,v} = \left\{\begin{array}{rl}
\frac{1}{\rho} & \text{if } (u,v) \in \{(t,\overline v), (\overline v, s)\} \\
\frac{x_{u,v}}{\rho} & \text{otherwise }
\end{array}\right.
\]
yields a feasible solution for the \atsp-Circuit relaxation from \cite{SvenssonTV18} in instance $H$ with cost $\frac{2}{\rho} OPT_{LP}$. Using \cite{SvenssonTV18}, we can find a circuit $W$ spanning all nodes in $H$ with cost at most
$\frac{2\alpha}{\rho}OPT_{LP}$ in polynomial time. This circuit must use the $(t,\overline v)$ edge at least once as it visits $\overline v$.
By deleting occurrences of $(t, \overline v)$ and $(\overline v, s)$, we get $s-t$ walks $W_1, \ldots, W_k$ in $G$ that collectively span all nodes in $V$
with $\sum_j c(W_j) \leq \frac{2\alpha}{\rho} \cdot OPT_{LP} \leq 4\alpha \cdot OPT_{LP}$.
We also point out $k \leq 4\alpha$
because in removing the $k$ edges incident to $\overline v$ to get the walks $W_i$, we removed a total edge cost of $k \cdot OPT_{LP}$ from a circuit whose cost is at most $4\alpha \cdot OPT_{LP}$,
so $k \leq 4\alpha$.

Let $U_1, \ldots, U_\ell$ be the strongly connected components of the support graph $G$. 
For each $U_i$, let $\mathcal W_i = \{j : W_j \text{ visits a node in } U_i\}$ and note $|\mathcal W_i| \leq k$. Unlike the case $\rho = 1$ in \cite{KohneTV19}, it could be that $j \notin \mathcal W_i$ for some $U_i$ and $W_j$. For each $1 \leq i \leq \ell$ and each $j \in \mathcal W_i$, let $R_{i,j}$ denote the restriction of $W_j$ to $U_i$.
Now, if some $W_j$ enters $U_i$, then once it leaves it cannot re-enter because $U_i$ is a strongly connected component of $G$.
So $R_{i,j}$ is a single walk for each $j \in \mathcal W_i$. For such $(i,j)$, let $u^i_j$ and $v^i_j$ be the first and last nodes of $W_j$ in $U_i$.

Order $\mathcal W_i$ as $j_1 < j_2 < \ldots < j_{|\mathcal W_i|}$. 
By Proposition \ref{prop:cross} and the fact each $U_i$ is a strongly connected component, we can find paths $P_{i,j_m}$ for $j_m \in \mathcal W_i$ from $v^i_{j_m}$ to $u^i_{j_{m+1}}$ (or $u^i_1$ if $m = |\mathcal W_i|$)
where $P_{i,j}$ enters and exits each $U' \in \supp(y)$ with $U' \subsetneq U_i$ at most once and does not cross any other set in $\supp(y)$.
Then, for each $i$ we get a circuit $C_i$ spanning all nodes of $U_i$ by adding the paths $P_{i,j}$ for $j \in \mathcal W_i$ to the walks $R_{i,j}$.

By Proposition \ref{prop:new}, for each $1 \leq i < \ell$ there are edges $u'_iv'_{i+1} \in \delta^{out}(U_i) \cap \delta^{in}(U_{i+1})$ with cost at most $\frac{1}{2\rho-1}$ times the fractional cost of edges in $\delta^{out}(U_i) \cap \delta^{in}(U_{i+1})$.
Also, say $v'_1 = s$ and $u'_\ell = t$. By fully traversing each $C_i$ starting at $v'_i$ and then continuing to follow it again to reach $u'_i$, we get $v'_i-u'_i$ walks $W'_i$ spanning $U_i$.
The final path $P$ we output is the concatenation of the walks $W'_1, W'_2, \ldots, W'_\ell$.
Let $S = \{v'_iu'_{i+1} : 1 \leq i < \ell\}$ be the edges used to ``stitch'' these walks $W'_i$ together.

To bound the cost of $P$,
first observe $c(S) \leq \frac{1}{2\rho-1} OPT_{LP}$ as the sets $\delta^{out}(U_i) \cap \delta^{in}(U_{i+1})$ are disjoint for $1 \leq i < \ell$.
To bound the cost of the cycles $C_i$, we define a modified cost $c^y_{uv} = \sum_{U : uv \in \delta(U)}$ and observe $c(Q) = z_v - z_u + c^y(Q)$ for any $u-v$ path $Q$ (the $z$-values for internal nodes of $Q$ cancel).

By complementary slackness, $c_{u,v} = z_v-z_u + c^y_{uv}$ for each $uv \in \supp(x)$.
Each $C_i$ was formed by stitching together endpoints of $R_{i,j}$ using paths $P_{i,j}$. Each $P_{i,j}$ crosses each $U' \in \supp(y), U \subsetneq U_i$ at most twice and does not cross any set in $\supp(y)$ not contained in $U_i$.
Further, no two $P_{i,j}, P_{i',j'}$ paths for $i \neq i'$ can cross the same $U' \in \supp(y)$ because the two paths are contained in different components of $G$.

Therefore, each $U' \in \supp(y)$ is crossed by at most $k$ paths of the form $P_{i,j}$
meaning $\sum_{i,j} c^y(P_{i,j}) \leq \sum_{i,j} z_{v^i_j}-z_{u^i_j} + 2k \cdot \sum_U y_U$. We also have $c^y(R_{i,j}) =  z_{u^i_j}-z_{v^i_j} + c(R_{i,j})$. Therefore, $\sum_i c^y(C_i) = \sum_i \sum_{j \in \mathcal W_{i,j}} c^y(P_{i,j}) + c^y(R_{i,j}) \leq 2k \sum_U y_U + \sum_{i,j \in \mathcal W_i} c(R_{i,j}) \leq 2k \sum_U y_U + \sum_j c(W_j)$ (the $z$ terms for the enpoints of the $R_{i,j}$ cancel out in the first inequality).

But $c(C) = c^y(C)$ for any cycle $C$ because, again, the $z$-terms cancel out. So
\[
\begin{array}{rrcll}
& c(P) \leq c(S) + 2 \cdot \sum_i c(C_i) & \leq & \frac{OPT_{LP}}{2\rho-1} + 2 \sum_{i=1}^k c(W_i) + 2k \sum_{U} y_U \\
 \leq & \frac{OPT_{LP}}{2\rho-1} + 4\alpha \cdot OPT_{LP} + 2k \sum_U y_U  & \leq & O(1)\cdot \frac{1}{2\rho-1} \cdot OPT_{LP}+  \frac{k}{\rho}\left(OPT_{LP} + z_s - z_t\right) \\
\leq & O(1) \cdot \frac{1}{2\rho-1} \cdot OPT_{LP} + \frac{k}{\rho} \cdot (z_s - z_t).
\end{array}
\]
Here, $O(1)$ refers to some constant that is independent of $\rho$ and we also recall $k$ is bounded by an absolute constant as well. Using Lemma \ref{lem:zbound} to bound $z_s - z_t$ finishes the proof.
\end{proof}


\section{Bounding $z_s-z_t$}
We prove Lemma \ref{lem:zbound} to finish the proof of Theorem \ref{thm:atspp_gap}. Our approach is more direct than \cite{KohneTV19}, they used an argument that shifts
LP weight around to show that $y_U > 0$ implies $U$ is not an $s-t$ separator in the support graph $G = (V, \supp(x))$. We establish this fact using complementary slackness
applied to the LP used to find the optimal solution to \ref{lp:dual} with minimum possible $z_s - z_t$. We comment that their proof could also be adapted to show what we want, we are presenting
this alternative proof because we feel it is more naturally motivated: we already want to minimize $z_s-z_t$ among all optimal duals so it is natural to ask what complementary slackness gives for $y_U > 0$.
\begin{proof}[Proof of Lemma \ref{lem:zbound}]
Let $x$ be an optimal primal solution to \ref{lp:atspp}.
Note that if we restricted the variables of \eqref{lp:atspp} and the constraints of \eqref{lp:dual} to $\supp(x)$ then $x$ and $(y,z)$ remains optimal.
For any feasible solution $(y,z)$ to \eqref{lp:dual}, we know $z_t - z_s \leq OPT_{LP}$ because $y \geq 0$. So the following LP is bounded. Note, we first solved \eqref{lp:atspp} to compute $OPT_{LP}$
which is then a fixed value (not a variable) in \ref{lp:dual2} below.
\begin{alignat}{3}
{\bf maximize}: \quad & z_t - z_s & & \tag{$\textsc{DUAL}_\rho\textsc{-z}$} \label{lp:dual2} \\
{\bf subject~to}:\quad & z_t - z_s + \sum_{\emptyset \subsetneq U \subseteq V-\{s,t\}} 2\rho \cdot y_U \quad& \geq &\quad OPT_{LP} \label{cons:dualcost}\\
 \quad & z_v - z_u + \sum_{U : uv \in \delta(U)} y_U \quad& \leq &\quad c_{u,v} && \quad\forall~ u,v \in \supp(x) \label{cons:edgecons} \\
& y \quad& \geq &\quad 0 \nonumber
\end{alignat}
The second constraint asserts $(y,z)$ is a feasible solution for \eqref{lp:dual}, so the first constraint then asserts it is an optimal solution for \ref{lp:dual}
In fact, in any feasible solution the first constraint must hold with equality.
We prove $z_s - z_t \leq \frac{1}{2\rho-1} \cdot OPT_{LP}$ for an optimal solution $(y,z)$ to \eqref{lp:dual2}.
With this, we finish the proof of Lemma \ref{lem:zbound} by
simply noting that Proposition \ref{prop:uncross} shows we can uncross the support of $y$ while leaving $z$ unchanged.

The LP that is dual to \eqref{lp:dual2} has a variable $\kappa$ for Constraint \eqref{cons:dualcost} of \eqref{lp:dual2} and new variables $x'_{uv}$ for each instance $uv$ of Constraint \eqref{cons:edgecons}.
\begin{alignat}{3}
{\bf minimize}: \quad & \sum_{uv \in \supp(x)} c_{u,v} \cdot x'_{uv} - OPT_{LP} \cdot \kappa& & \notag \\
{\bf subject~to}: \quad & x'(\delta^{out}(v)) - x'(\delta^{in}(v)) \quad & = &\quad \left\{\begin{array}{rl} 1+\kappa & v=s \\ -1-\kappa & v=t \\ 0 & v \neq s,t\end{array}\right. &&\quad \forall~ v \in V \nonumber \\
& x'(\delta(U)) \quad&\geq &\quad 2\rho \cdot \kappa &&\quad \forall~ \emptyset \subsetneq U \subseteq V-\{s,t\} \nonumber \\
& x', \kappa \quad&\geq &\quad 0 \nonumber
\end{alignat}
\begin{lemma}\label{lem:contract}
In an optimal solution $(y,z)$ to \ref{lp:dual2}, if $y_U > 0$ then there is an $s-t$ path in the graph $G[V-U]$.
\end{lemma}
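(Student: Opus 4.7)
The plan is to use LP duality and complementary slackness between \eqref{lp:dual2} and its dual, which is written out just before the lemma statement. Both LPs are feasible and bounded (feasibility of \eqref{lp:dual2} is immediate from $y=0$ and $z$ constant; boundedness of its objective above by $OPT_{LP}$ follows from \eqref{cons:dualcost} combined with $y\ge 0$, so by LP duality its dual is bounded below), hence optimal primal--dual solutions $(y,z)$ and $(x',\kappa)$ exist and satisfy complementary slackness. The crucial consequence I want to exploit is that whenever $y_U > 0$ the associated dual cut constraint is tight, i.e., $x'(\delta(U)) = 2\rho\cdot\kappa$.

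I would then argue by contradiction: suppose $y_U > 0$ but $U$ is an $s$--$t$ separator in $G = (V,\supp(x))$, i.e., every $s$--$t$ path in $G$ passes through some node of $U$. By the flow-conservation equalities in the dual of \eqref{lp:dual2}, $x'$ is an $s$--$t$ flow of value $1+\kappa$ in $G$ (note that $x'$ is supported on $\supp(x)$ since those are the only edge variables defined in the LP). Letting $A$ denote the set of vertices reachable from $s$ in $G - U$, we have $s\in A$, $t\notin A$, and $\delta^{out}(A)\subseteq \delta^{in}(U)$; summing the flow-conservation equations over $A$ gives $x'(\delta^{in}(U)) \ge x'(\delta^{out}(A)) = 1+\kappa$. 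Flow conservation on $U$ itself (which contains neither $s$ nor $t$) then forces $x'(\delta^{out}(U)) = x'(\delta^{in}(U)) \ge 1+\kappa$, so $x'(\delta(U)) \ge 2(1+\kappa)$.

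Combining this with complementary slackness yields $2\rho\kappa \ge 2(1+\kappa)$, which rearranges to $(\rho-1)\kappa \ge 1$. Since $\rho\le 1$ and $\kappa\ge 0$, the left-hand side is non-positive while the right-hand side equals $1$, a contradiction; the degenerate case $\kappa = 0$ is handled directly, since there complementary slackness forces $x'(\delta(U)) = 0$ yet the flow argument still gives $x'(\delta^{in}(U))\ge 1$. I do not expect a serious obstacle here: the proof is a short complementary slackness computation, and the hypothesis $\rho\le 1$ enters only in the final arithmetic step, matching precisely the regime in which the lemma is stated. The one thing to present carefully is why complementary slackness is applicable, which is exactly the boundedness-and-feasibility check sketched above.
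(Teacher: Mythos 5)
Your proof is correct and uses essentially the same core ingredients as the paper's: complementary slackness gives $x'(\delta(U)) = 2\rho\kappa$ (hence $x'(\delta^{\mathrm{in}}(U)) = \rho\kappa$ by conservation on $U$), $x'$ is an $s$--$t$ flow of value $1+\kappa$, and $\rho\kappa < 1+\kappa$ because $\rho\le 1$. The only difference is presentational: the paper decomposes $x'$ into paths and cycles and directly observes that paths of total weight at least $1+(1-\rho)\kappa>0$ avoid $U$, whereas you argue the contrapositive via a cut $A$ of vertices reachable in $G-U$ -- these are the same observation in dual form.
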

\begin{proof}
Let $x'$ be an optimal solution to the dual of \eqref{lp:dual2}. Then $y_U > 0$ implies $x'(\delta(U)) = 2\rho \cdot \kappa$ so, by flow conservation, $x'(\delta^{in}(U)) = \rho \cdot \kappa$.

On the other hand, $x'$ constitutes an $s-t$ flow of value $1+\kappa$. Consider a decomposition of $x'$ into paths and cycles.
The total weight of paths that do not enter $U$ is at least $1+\kappa - \rho \cdot \kappa = 1 + (1-\rho) \cdot \kappa > 0$. Thus, there is an $s-t$ path in $G$ that does not pass through $U$.
\end{proof}

Continuing as in \cite{KohneTV19}, let $U_1, \ldots, U_k$ be the maximal sets in $\supp(y)$. In the graph $G'$ obtained by contracting each $U_i$, we have by Lemma \ref{lem:contract} that
for each contracted node $U_i$ there is an $s-t$ path in $G'$ that avoids $U_i$.
By a variant of Menger's Theorem (Lemma 9 in \cite{KohneTV19}), there are node-disjoint $s-t$ paths $P_1, P_2$ in $G'$.
Consider the edges of $P_1$ and $P_2$ in $G$. For any $U_i$, at most one of $P_1$ or $P_2$ enters (and exits) $U_i$. Suppose it is the case that one of them $\overline P \in \{P_1, P_2\}$ enters $U_i$.
Let $u, v$ be the first and last nodes of $\overline P$ as it passes through $U_i$. By Proposition \ref{prop:cross}, we can find a $u-v$ path in $G[U_i]$ that crosses each $U' \in \supp(y)$ contained in $U$ at most twice, and does not cross any other set in $\supp(y)$. Add these edges to $\overline P$.

Do this for each $U_i$ that is entered by some $\overline P \in \{P_1, P_2\}$. We get paths $P'_1, P'_2$ using only edges in $\supp(x)$ that, collectively, cross each set in $\supp(y)$ at most twice. Thus,
$0 \leq c(P_1) + c(P_2) = c^y(P_1) + c^y(P_2) + 2 \cdot (z_t - z_s) \leq 2 \cdot \sum_{U \in \supp(y)} y_U + 2 \cdot (z_t - z_s).$
Multiplying the terms in this bound by $\rho$ and then subtracting $(2\rho-1) \cdot (z_t - z_s)$ from both sides, we see
$(2\rho-1) \cdot (z_s - z_t) \leq \sum_{U \in \supp(y)} 2\rho \cdot y_U + z_t - z_s = OPT_{LP}.$
\end{proof}


\section{An Improved Integrality Gap Bound in Regret Metrics}\label{sec:regret}

Let $V$ be nodes and $s,t \in V$ be the start and end points. Let $c$ be  {\em symmetric} metric distances $c_{u,v} \geq 0$. For each $u,v \in V$, let $c^{\text{reg}}_{u,v} = c_{r,u} + c_{u,v} - c_{r,v}$
be the {\em regret} metric induced by $c$. It is convenient to consider a complete directed graph over $V$ where for distinct $u,v \in V$
we have $c_{u,v} = c_{v,u}$ yet $uv$ and $vu$ are themselves distinct edges: the {\em bidirected} variant of the natural undirected graph associated with $(V,c)$.
The following observations about regret metrics can be found in \cite{FriggstadS14}.
\begin{observation}
If $c$ is a metric (asymmetric or symmetric) then $c^{\text{reg}}$ is an asymmetric metric. For any $u,v \in V$ and any $u-v$ path $P$, $c(P) = c^{\text{reg}}(P) + c_{u,v}$.
For any cycle $C$, $c(C) = c^{\text{reg}}(C)$.
\end{observation}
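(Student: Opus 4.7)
The plan is a direct unpacking of the definition $c^{\text{reg}}_{u,v} = c_{r,u} + c_{u,v} - c_{r,v}$, handling the three claims in the order they appear.

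First, to verify that $c^{\text{reg}}$ is an asymmetric metric, I would check the three axioms in turn. Nonnegativity $c^{\text{reg}}_{u,v} \geq 0$ is exactly the triangle inequality $c_{r,u} + c_{u,v} \geq c_{r,v}$ for the underlying metric $c$; the identity $c^{\text{reg}}_{u,u} = 0$ follows from $c_{u,u} = 0$. For the asymmetric triangle inequality, the sum $c^{\text{reg}}_{u,v} + c^{\text{reg}}_{v,w}$ contains $-c_{r,v}$ from the first term and $+c_{r,v}$ from the second, which cancel, so $c^{\text{reg}}_{u,v} + c^{\text{reg}}_{v,w} \geq c^{\text{reg}}_{u,w}$ reduces to $c_{u,v} + c_{v,w} \geq c_{u,w}$, which holds in $c$. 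Notice that this argument never uses $c_{u,v} = c_{v,u}$, so it is valid whether the underlying $c$ is symmetric or asymmetric.

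Next, the path and cycle identities would follow from a telescoping computation along the vertex sequence of the walk. Writing the $u$-$v$ path $P$ as $v_0, v_1, \ldots, v_k$ with $v_0 = u$ and $v_k = v$, I would compute
\[
c^{\text{reg}}(P) \;=\; \sum_{i=0}^{k-1} \bigl(c_{r,v_i} + c_{v_i,v_{i+1}} - c_{r,v_{i+1}}\bigr) \;=\; c(P) + c_{r,u} - c_{r,v},
\]
since each interior $c_{r,v_i}$ for $1 \leq i \leq k-1$ appears once with a plus and once with a minus sign and cancels. Rearranging then gives the desired identity relating $c(P)$ and $c^{\text{reg}}(P)$. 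For a closed walk $C$ the endpoints coincide, $v_0 = v_k$, so the boundary term $c_{r,v_0} - c_{r,v_k}$ vanishes and $c^{\text{reg}}(C) = c(C)$ follows immediately.

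The whole observation is a short bookkeeping exercise and I do not anticipate any substantive obstacle; conceptually, the regret transformation attaches a potential $c_{r,\cdot}$ to each vertex, and these potentials telescope along every walk, contributing only at the endpoints (and nothing at all on a closed walk).
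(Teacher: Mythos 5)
Your telescoping argument is the right way to prove this, and the metric-axiom verification is correct. But there is a gap in the last step of the path identity: the telescope gives exactly
\[
c^{\text{reg}}(P) \;=\; c(P) + c_{r,u} - c_{r,v},
\]
so rearranging yields $c(P) = c^{\text{reg}}(P) + \bigl(c_{r,v} - c_{r,u}\bigr)$, \emph{not} $c(P) = c^{\text{reg}}(P) + c_{u,v}$. These agree only when $c_{r,v} = c_{r,u} + c_{u,v}$; in particular, when $u$ is the root $r$ of the regret metric (then $c_{r,u}=0$ and $c_{r,v} = c_{u,v}$). In general the triangle inequality gives only $c_{r,v}-c_{r,u} \le c_{u,v}$, hence only $c(P) \le c^{\text{reg}}(P) + c_{u,v}$. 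You wrote ``rearranging then gives the desired identity'' and moved on; this is precisely where the claim breaks for general $u$, and you should have flagged it rather than glossing over it.

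To be fair, the paper's own phrasing of this observation is loose: it states the identity for an arbitrary $u$-$v$ path but only ever applies it with $u$ equal to the root (e.g.\ Lemma~\ref{lem:picost} uses $c^{\text{reg}}(P) = c(P) - c_{s,t}$ for $s$-$t$ paths, with $s$ playing the role of $r$ in Section~\ref{sec:regret}). The paper gives no proof here, citing prior work instead, so there is no proof to compare against; but a blind reconstruction should still either prove exactly what is stated or explain why the stated version needs restriction. Your computation is sound and in fact exposes the imprecision; the missing step is simply noticing that $c_{r,v}-c_{r,u}$ is not the same quantity as $c_{u,v}$. The cycle case and the verification that $c^{\text{reg}}$ is an asymmetric metric are correct as written.
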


We consider integrality gap bounds for \eqref{lp:atspp} when the metric is a regret metric. In \cite{FriggstadS17}, it was shown the integrality gap bound is 2 in the standard case $\rho = 1$ and that this is tight.
For the purpose of getting better approximations for \dl in regret metrics (i.e. the problem of minimizing the average time a node $v$ waits {\em in excess} of their shortest path distance $c_{r,v}$ from the depot), we give explicit integrality gap bounds for the more general case $1/2 < \rho \leq 1$.

Note, in the case $\rho = 1$ that the analysis from \cite{FriggstadS17} produces a stronger result. But the analysis does not extend in any clear way to the case $\rho < 1$.
We begin by recalling the following structural result by Bang-Jensen et al about decomposing preflows into branchings \cite{BangjensenFJ95}, which was made efficient
by Post and Swamy \cite{PostS15}.
\begin{theorem}[Bang Jensen et al. \cite{BangjensenFJ95}, Post and Swamy \cite{PostS15}]\label{thm:bang}
Let $D = (\{r\} \cup V, A)$ be a directed graph and $x \in \mathbb Q^A_{\geq 0}$ be a preflow. Let $\lambda_v := \min_{\{v\} \subseteq S \subseteq V} x(\delta^{in}(S))$ be the $r-v$ connectivity in $D$ under capacities $\{x_a\}_{a \in A}$. Let $K > 0$ be rational. We can obtain out-branchings $B_1, \ldots, B_q$ rooted at $r$, and rational weights $\gamma_1, \ldots, \gamma_q \geq 0$ such that $\sum_{i=1}^q \gamma_i = K, \sum_{i: q \in B_i} \gamma_i \leq x_a$ for all $a \in A$, and $\sum_{i : v \in B_i} \geq \min\{K, \lambda_v\}$ for all $v \in V$. Moreovers, such a decomposition can be computed in time that is polynomial in $|V|$ and the bit complexity of $K$ and $x$.
\end{theorem}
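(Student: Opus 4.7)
The plan is to reduce to the capacitated fractional version of Edmonds' arborescence-packing theorem. Edmonds' classical result decomposes the capacities of a digraph with uniform $r$-$v$ connectivity at least $K$ into spanning $r$-out-arborescences of total weight $K$; the preflow/fractional extension is due to Bang-Jensen--Frank--Jackson~\cite{BangjensenFJ95}, and Post--Swamy~\cite{PostS15} give a polynomial-time implementation. The only new feature here is that the connectivities $\lambda_v$ are non-uniform and possibly smaller than $K$, so some vertices only require partial coverage.

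To equalize connectivities, I would augment $D$ with an auxiliary vertex $d$, an arc $(r,d)$ of capacity $c_1 := \sum_{u : \lambda_u < K}(K-\lambda_u)$, and an arc $(d,v)$ of capacity $K-\lambda_v$ for each $v$ with $\lambda_v < K$, extending $x$ to a preflow $x'$ on the new digraph $D'$. A short min-cut check confirms every $v \in V$ has $r$-$v$ connectivity at least $K$ in $D'$: for a cut $S$ with $v \in S$, $r \notin S$, the case $d \notin S$ contributes $x(\delta^{in}_D(S)) + \sum_{u \in S : \lambda_u < K}(K-\lambda_u) \ge K$ (using the $(d,v)$ arc when $\lambda_v < K$, else $\lambda_v \ge K$ alone), while the case $d \in S$ contributes $c_1 + x(\delta^{in}_D(S \cap V)) \ge (K-\lambda_v) + \lambda_v = K$. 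Applying the cited preflow Edmonds theorem to $(D', x', K)$ produces spanning out-arborescences $A_1, \ldots, A_q$ of $D'$ rooted at $r$ with rational weights $\gamma_i \ge 0$ summing to $K$ and respecting all capacities of $x'$.

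Finally, for each $A_i$ let $B_i$ be the component of $r$ in $A_i$ after deleting $d$ and its incident arcs. Each $B_i$ is an $r$-out-branching of $D$ and inherits the arc-capacity bound with respect to $x$ since the deleted arcs are virtual. The coverage claim $\sum_{i : v \in B_i} \gamma_i \ge \min\{K, \lambda_v\}$ is the main technical hurdle: $A_i$ covers $v$ in $B_i$ precisely when the unique $r$-$v$ path in $A_i$ avoids $d$, and the naive bound on the total detour weight via the single capacity $x'(d,v) = K-\lambda_v$ fails when $v$ lies as a proper descendant of some child of $d$ in $A_i$. I would resolve this by a local-swap/uncrossing argument on the Edmonds packing, iteratively rerouting any arc $(d,u)$ whose subtree extends beyond $u$ through a $D$-only predecessor---possible because the cut around $u$'s downstream subtree already has $D$-capacity at least $K$---to produce packings where $d$'s subtree consists only of direct children; the required bound then follows immediately from $x'(d,v) \le K-\lambda_v$ and from the absence of any $(d,v)$ arc when $\lambda_v \ge K$. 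The polynomial-time decomposition is then inherited from the Post--Swamy algorithm.
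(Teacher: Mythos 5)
This theorem is stated in the paper as a citation (Bang-Jensen--Frank--Jackson for the combinatorial statement, Post--Swamy for the polynomial-time implementation); the paper does not supply its own proof. Your task was therefore to reconstruct a known result, and the reconstruction you propose contains a genuine gap in exactly the place you flag.

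The dummy-vertex reduction does not work as stated, and the ``local-swap'' patch is not justified. The core difficulty: after packing spanning arborescences in the augmented digraph $D'$, a vertex $v$ may sit as a grandchild (or deeper descendant) of $d$ in some arborescence $A_i$; deleting $d$ then disconnects $v$, but this disconnection is not charged to the arc $(d,v)$ (which may not even exist). A concrete counterexample: let $V=\{a,b\}$, arcs $(r,a)$ with capacity $1$, $(a,b)$ with capacity $2$, $(r,b)$ with capacity $1$; then $\lambda_a=1$, $\lambda_b=2$, and take $K=2$. Your construction adds $d$ with arc $(d,a)$ of capacity $K-\lambda_a=1$ and no arc $(d,b)$. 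A perfectly valid Edmonds packing in $D'$ is $A_1=\{(r,a),(a,b),(r,d)\}$, $A_2=\{(r,d),(d,a),(a,b)\}$, each with weight $1$, respecting all capacities. Deleting $d$ yields $B_1=\{(r,a),(a,b)\}$ and $B_2$ equal to the isolated root, so $b$ is covered with weight $1<\min\{K,\lambda_b\}=2$. Your swap argument asserts that ``the cut around $u$'s downstream subtree already has $D$-capacity at least $K$,'' but this is false in general: the set of strict descendants of $u=a$ is $\{b\}$, and $x(\delta^{in}_D(\{b\}))$ equals $3$ here but in general can drop below $K$ because the downstream set may contain vertices with $\lambda_w<K$; only the \emph{$D'$}-cut is guaranteed to be at least $K$, and the $(d,\cdot)$ arcs in it are precisely the ones you are forbidden from using for the reroute. (In this tiny example a different Edmonds packing happens to exist --- e.g.\ replace $A_2$ by $\{(r,d),(d,a),(r,b)\}$ --- but the algorithm you invoke gives no control over which packing is returned, and the reroute you describe uses $(r,a)$ or $(r,b)$ capacity that other arborescences may already have exhausted.)

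A secondary, more easily repaired issue: with $(r,d)$ given capacity $c_1=\sum_{u:\lambda_u<K}(K-\lambda_u)$, the vertex $d$ itself may have $r$--$d$ connectivity strictly less than $K$, so the uniform-connectivity hypothesis of the preflow Edmonds theorem is not met; you would need to take $\max\{K,c_1\}$ there. Fixing this does not, however, affect the counterexample above. The actual theorem is proved in \cite{BangjensenFJ95} by a direct splitting-off induction on a supermodular covering problem, not by a reduction to the uniform-connectivity case, and \cite{PostS15} makes that argument algorithmic; your reduction would need a substantially different rebalancing argument (or a cleverer auxiliary gadget) to control which vertices end up in $d$'s subtree, and as written that step is incorrect.
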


We require a definition and results from \cite{FriggstadS14}, some of which are adaptations from concepts in \cite{Blum07}.
\begin{definition}
Let $P$ be a path starting at $s$. For each $uv \in P$, say $uv$ is {\bf red} on $P$ if there are nodes $x, y$ on the $s-u$ portion of $P_i$ and $v-t$ portion of $i$, respectively, such that $c_{r,x} \geq c_{r,y}$. For each $v \in P$, let $\text{red}(v,P)$ be the maximal subset of red edges of the subpath of $P$ containing $v$. Note, $\text{red}(v,P)$ could be empty if $v$ is not incident to a red edge.
The {\bf red intervals} of $P$ are the maximal subpaths of its red edges.
\end{definition}
Intuitively, the red edges are part of intervals of $P$ that do not make progress toward reaching $t$. Their total $c^{\reg}$-costs can be shown to be comparable to their total $c$-costs, which is formalized as follows.
\begin{lemma}[Blum et al \cite{Blum07}]\label{lem:regbound}
For any $s-t$ path $P$, $\sum_{uv \text{ red on } P} c_{u,v} \leq \frac{3}{2} c^{\text{reg}}(P)$.
\end{lemma}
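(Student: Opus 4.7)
The plan is to decompose $P$ into its maximal red intervals $I_1,\dots,I_K$ and the intervening non-red subpaths (\emph{bridges}), and to track the potential $f(v):=c_{r,v}$ as one traverses $P$. Let $R$ be the total $c$-cost of red edges and $B$ the total $c$-cost of bridge edges, so $c(P)=R+B$ and $c^{\reg}(P)=c(P)-(c_{r,t}-c_{r,s})$. The desired bound $R\le\tfrac{3}{2}c^{\reg}(P)$ then rearranges to $3(c_{r,t}-c_{r,s})\le R+3B$.

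The first observation is that $f$ is strictly increasing along every bridge edge: if $(w,w')$ is non-red then, by definition, $\max_{x\le w}f(x)<\min_{y\ge w'}f(y)$, so $f(w)<f(w')$. Combined with the triangle inequality $c_{u,v}\ge|f(u)-f(v)|$, each bridge sub-path contributes cost at least its net $f$-increase. Letting $u_i,v_i$ be the endpoints of $I_i$ on $P$ and setting $\alpha_i:=f(u_i)$, $\beta_i:=f(v_i)$, $\Delta_i:=\beta_i-\alpha_i$, a telescoping sum over the bridges gives
\[
B\;\ge\;(c_{r,t}-c_{r,s})-\sum_i\Delta_i.
\]
A side consequence of the same decomposition, which we use below, is that $f$-values on $P$ lying before $u_i$ are at most $\alpha_i$, and $f$-values on $P$ lying after $v_i$ strictly exceed $\max_{w\in I_i}f(w)$.

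The main step is to show $c(I_i)\ge 3\Delta_i$ whenever $\Delta_i>0$, via a co-area argument. For any level $\ell$, let $N_i(\ell)$ be the number of edges in $I_i$ whose endpoints lie strictly on opposite sides of $\ell$. Applying the triangle inequality edge-by-edge, $c(I_i)\ge\sum_{(u,v)\in I_i}|f(u)-f(v)|=\int_{\mathbb{R}}N_i(\ell)\,d\ell$. I would then argue $N_i(\ell)\ge 3$ for every $\ell\in(\alpha_i,\beta_i)$: since $f$ moves from $\alpha_i<\ell$ at $u_i$ to $\beta_i>\ell$ at $v_i$, parity forces $N_i(\ell)$ to be odd, so it suffices to rule out $N_i(\ell)=1$. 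If a single edge $(w_{j-1},w_j)$ of $I_i$ were the only crossing, then $f(w_k)<\ell$ for every $w_k$ in $I_i$ prior to $w_j$ and $f(w_k)>\ell$ from $w_j$ onward. Combined with the side consequence above (earlier $f$-values outside $I_i$ are at most $\alpha_i<\ell$, and later ones exceed $\max_{I_i}f>\ell$), this gives $\max_{x\le w_{j-1}}f(x)<\ell<\min_{y\ge w_j}f(y)$, contradicting the red condition for the edge $(w_{j-1},w_j)$. Hence $N_i(\ell)\ge 3$, and integrating over $\ell\in(\alpha_i,\beta_i)$ yields $c(I_i)\ge 3\Delta_i$.

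To conclude, intervals with $\Delta_i\le 0$ contribute non-positively to $\sum_i\Delta_i$, and for the others $\Delta_i\le c(I_i)/3$, so $\sum_i\Delta_i\le R/3$. Substituting into the bridge bound, $c^{\reg}(P)=R+B-(c_{r,t}-c_{r,s})\ge R-\sum_i\Delta_i\ge\tfrac{2}{3}R$, which rearranges to $R\le\tfrac{3}{2}c^{\reg}(P)$. I expect the parity-plus-red-condition argument forcing $N_i(\ell)\ge 3$ to be the main obstacle; once that local crossing bound is in hand, the remainder is bookkeeping through the identity $c_{u,v}=c^{\reg}_{u,v}+f(v)-f(u)$ and the triangle inequality.
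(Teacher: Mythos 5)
The paper cites this lemma from Blum et al.\ without reproducing a proof, so there is no in-paper argument to compare against; your reconstruction is correct. The potential-tracking setup ($f(v)=c_{r,v}$, bridges strictly increasing, telescoping to get $B\ge (c_{r,t}-c_{r,s})-\sum_i\Delta_i$), the co-area identity $\sum_{(u,v)\in I_i}|f(u)-f(v)|=\int N_i(\ell)\,d\ell$, and the parity-plus-maximality argument forcing $N_i(\ell)\ge 3$ for a.e.\ $\ell\in(\alpha_i,\beta_i)$ are all sound; in particular the ``side consequence'' is exactly what the non-red bridge edges adjacent to $I_i$ give you, and it correctly closes the $N_i(\ell)=1$ case by showing the unique crossing edge would fail the red test. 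This is essentially the standard ``each level in a wiggle must be crossed at least three times'' argument from Blum et al., phrased via the co-area formula rather than a discrete charging scheme, so there is nothing substantive to flag.

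One small note of care that you glossed over but that does not affect correctness: the bound $c_{u,v}\ge|f(u)-f(v)|$ uses symmetry of $c$ (you need both $c_{r,v}\le c_{r,u}+c_{u,v}$ and $c_{r,u}\le c_{r,v}+c_{v,u}=c_{r,v}+c_{u,v}$); this is fine here since Section~5 restricts to symmetric $c$, but it is worth stating explicitly since $c^{\reg}$ itself is asymmetric.
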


Further, if we were to keep at most one node from each maximal red interval of edges and shortcut past the other nodes, the resulting path $s = v_0, v_1, \ldots, v_k = t$ has $c_{r,v_i} < c_{r,v_{i+1}}$.
So the union of any collection of paths that are shortcut in such a way forms an acyclic graph.

Now, a solution to \eqref{lp:atspp} can be viewed as a preflow of value 1 rooted at $s$ with $\lambda_v \geq \rho$ for each $v \in V-t$ and $\lambda_t = 1$. From this observation, we round a solution using techniques from \cite{FriggstadS14}. The full description is in Algorithm \ref{alg:lat}. Here, $1/2 < \delta < \rho$ is some parameter we set later to optimize the performance of the algorithm.

{\small 
\begin{algorithm}[{\bf Rounding \eqref{lp:atspp} in regret metrics}]  \label{alg:lat}
\hrule \vspace{5pt} \

\noindent
Input: asymmetric metric $(V\cup\{r\},c^{\text{reg}})$ obtained from symmetric distances $c$.

\noindent
Output: an Hamiltonian $s-t$-rooted path $P$.
\end{algorithm}
\smallskip
\begin{enumerate}[label=D\arabic*., ref=D\arabic*, topsep=0ex, itemsep=0.5ex,
    leftmargin=*]
\item Solve \eqref{lp:atspp} to get an optimal extreme point solution $x$ with value $OPT_{LP}$.
\item Use Theorem \ref{thm:bang} to find a convex combination of out-branchings $B_1, \ldots, B_q$ rooted at $s$ and weights $\gamma_1, \ldots, \gamma_q \geq 0$ summing to 1
such that $t$ lies on each $B_i$ and each $v \in V-\{s,t\}$ lies on at least a $\rho$-fraction of these branchings. Turn each $B_i$ into a $s-t$ path $P_i$ by adding the reverse $(v,u)$ of each arc $(u,v) \in B_i$ that does not appear on the unique $s-t$ path in $B_i$ and shortcutting the resulting Eulerian $s-t$ walk
past repeated nodes. \label{step:paths}
\item Define a cut requirement function $f : 2^{V} \rightarrow \{0,1\}$ where $f(S) = 1$ if $\sum_{i : \text{red}(v, P_i) \subseteq S} \gamma_i < \delta$ for all $v \in S$.
Observe $f$ is downward-monotone: $f(S) \geq f(T)$ for sets $\emptyset \subsetneq S \subseteq T$.
Use the LP-based 2-approximation in \cite{GoemansW94} to find a forest of undirected edges $F$ such that $|\delta(S) \cap F| \geq f(S)$. Let $\mathcal C$ be the components of $F$ and let
$C_1, \ldots, C_{|C|}$ be cycles on each component of $F$ obtained by doubling and shortcutting each tree in $F$. For each cycle $C_j$ of $\mathcal C$, let $w \in C_i$ be some {\em witness node}
such that $\sum_{i : \text{red}(w, P_i) \subseteq V} \gamma_i \geq \delta$.
Let $W$ be the set of all witness over all $C_j$ (note, it could be $W \cap \{s,t\} \neq \emptyset$). View each $C_j$ as being traversed in some arbitrary direction.
\item For each $P_i$, let $P^W_i$ be the set of all nodes in $W \cap P_i$ such that all nodes of $\text{red}(w, P_i)$ are contained in the nodes of a single cycle $C_j$. Shortcut $P_i$ past nodes not in $P^W_i \cup \{s,t\}$
and call this path $P'_i$. Note the nodes of $P'_i$ lie in $W \cup \{s,t\}$.
\item View $P'_i$ with associated weights $\gamma_i/\delta$ as the path decomposition of an acyclic $s-t$ flow $z$ with value $1/\delta$ with $z(\delta(w)) \geq 1$ for each $w \in W$. Further, $z(\delta^{out}(s)) = 1/\delta < 2$. By integrality of flows with upper- and lower-bounds on each node, we may decompose $z$ as a convex combination of integral flows satisfying these bounds such that each flow supported consists of either 1 or 2 paths. Let $P$ be the cheapest
path among the flows with only one path in this decomposition. Note that $P$ is an $s-t$ path spanning all of $W$. \label{step:flow_get}
\item Complete $P$ into a Hamiltonian $s-t$ path by adding all edges of the cycles $C_i$ and shortcutting the resulting Eulerian walk. \label{step:regret_final}
\end{enumerate}
\hrule}

\begin{lemma}\label{lem:picost}
The paths $P_i$ from Step \ref{step:paths} satisfy $\sum_i \gamma_i \cdot c^{\reg}(P_i) \leq 2 \cdot OPT_{LP}$.
\end{lemma}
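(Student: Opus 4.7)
The plan is to work through the underlying symmetric metric $c$, where the doubling-and-shortcutting construction is straightforward to analyze, and then convert the resulting bound back into the regret metric $c^{\reg}$. As a preparation, I would relate the LP cost in the two metrics: using $c_{u,v} - c^{\reg}_{u,v} = c_{r,v} - c_{r,u}$, rewriting $\sum_{uv} (c_{r,v} - c_{r,u}) x_{u,v}$ as $\sum_v c_{r,v} \bigl(x(\delta^{in}(v)) - x(\delta^{out}(v))\bigr)$, and applying the flow-conservation constraints of \eqref{lp:atspp}, the sum telescopes to give
\[ \sum_{uv} c_{u,v} x_{u,v} \;=\; \OPT_{LP} + c_{r,t} - c_{r,s}. \]
Since Theorem \ref{thm:bang} guarantees $\sum_{i : a \in B_i} \gamma_i \leq x_a$ for every arc $a$, this implies $\sum_i \gamma_i c(B_i) \leq \OPT_{LP} + c_{r,t} - c_{r,s}$.

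Next, I would bound $c^{\reg}(P_i)$ for each branching individually. Let $Q_i$ denote the unique $s$-$t$ path inside $B_i$. Reversing each arc of $B_i \setminus Q_i$ yields an Eulerian $s$-$t$ walk in a multigraph whose total $c$-cost equals $2c(B_i) - c(Q_i)$, using that $c$ is symmetric. Shortcutting this walk to form $P_i$ does not increase the $c^{\reg}$-cost because $c^{\reg}$ is an asymmetric metric. Using the telescoping identity $c^{\reg}(W) = c(W) + c_{r,s} - c_{r,t}$ for any $s$-$t$ walk $W$, together with $c(Q_i) \geq c_{s,t}$, I obtain
\[ c^{\reg}(P_i) \;\leq\; 2c(B_i) - c(Q_i) + c_{r,s} - c_{r,t} \;\leq\; 2c(B_i) - c_{s,t} + c_{r,s} - c_{r,t}. \]
Taking the $\gamma_i$-weighted sum and substituting the bound on $\sum_i \gamma_i c(B_i)$ then gives
\[ \sum_i \gamma_i c^{\reg}(P_i) \;\leq\; 2(\OPT_{LP} + c_{r,t} - c_{r,s}) + c_{r,s} - c_{r,t} - c_{s,t} \;=\; 2\OPT_{LP} + (c_{r,t} - c_{r,s} - c_{s,t}), \]
and the parenthesized term is non-positive by the triangle inequality $c_{r,t} \leq c_{r,s} + c_{s,t}$ in the symmetric metric.

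The main subtlety to overcome is that one cannot simply argue $c^{\reg}(P_i) \leq 2 c^{\reg}(B_i)$ on a per-branching basis: reversing a branch arc $(u,v)$ during the Eulerian construction replaces a cost of $c^{\reg}_{u,v}$ by $c^{\reg}_{v,u} = c^{\reg}_{u,v} + 2(c_{r,v} - c_{r,u})$, which can be arbitrarily larger than $c^{\reg}_{u,v}$. Reasoning in the symmetric metric $c$ sidesteps this blow-up, and the seemingly loose correction terms between $c$ and $c^{\reg}$ that appear along the way cancel precisely via triangle inequality at the end.
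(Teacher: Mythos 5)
Your proof is correct and follows essentially the same route as the paper: decompose into branchings dominated by $x$, convert each branching to a path by reversing non-$Q_i$ arcs and shortcutting, and account for the shift between $c$-cost and $c^{\reg}$-cost via telescoping. The one difference is that you carefully track the correction term $c_{r,t}-c_{r,s}$ throughout, whereas the paper invokes the identity $c^{\reg}(P)=c(P)-c_{s,t}$ for $s$-$t$ paths; that identity is literally true only when $s=r$ (in general $c^{\reg}(P)=c(P)+c_{r,s}-c_{r,t}$), which is the case in the paper's intended application but is glossed over in the section's general $s,t$ setup. Your version therefore also covers $s\neq r$, where the final correction term $c_{r,t}-c_{r,s}-c_{s,t}$ is only bounded by $0$ rather than equal to it, a point the paper does not address. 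Your closing remark about why a per-branching bound $c^{\reg}(P_i)\le 2c^{\reg}(B_i)$ fails is also an accurate observation and explains why one must pass through the symmetric metric.
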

\begin{proof}
In \cite{FriggstadS14}, it is observed for any $s-t$ path $P$ that $c^{\text{reg}}(P) = c(P) - c_{s,t}$ and that $c(C) = c^{\text{reg}}(C)$ for any cycle $C$.
Thus, as $x$ is an $s-t$ flow with value 1 we have $OPT_{LP} = \sum_{uv} c^{\text{reg}}_{u,v} x_{u,v} = \left(\sum_{uv} c_{u,v} x_{u,v}\right) - c_{s,t}$. This can be seen by, say,
comparing the $c^{\text{reg}}$-cost with the $c$-cost of paths and cycles in a path/cycle decomposition of $x$.

Each $P_i$ is obtained by adding the reverse of each edge $uv$ of $B_i$ not on the $s-t$ path in $B_i$ (and then shortcutting the resulting Eulerian walk). Thus, $c(P_i) \leq 2 \cdot c(B_i) - c_{s,t}$
so $c^{\text{reg}}(P_i) \leq 2 \cdot (c(B_i) - c_{s,t})$. Thus,  $\sum_i \gamma_i \cdot c^{\reg}(P_i) \leq 2 \cdot \sum_i \gamma_i \cdot (c(B_i) - c_{s,t}) = \left(2 \cdot \sum_i \gamma_i \cdot c(B_i)\right) - 2 \cdot c_{s,t}$.
Now, the convex combination of the $B_i$ is dominated by $x$, so $\sum_i \gamma_i \cdot c(B_i) \leq \sum_e x_e \cdot c_e$. Finally, as $x$ constitutes one unit of $s-t$ flow, the $c$-cost of $x$ differs from the $c^{\reg}$-cost of $x$ exactly by $c_{s,t}$, so we finally see $\sum_i \gamma_i \cdot c^{\reg}(P_i) \leq 2 \cdot OPT_{LP}$.
\end{proof}

The proofs of the following two lemmas proceed in a way that is very similar to related results \cite{FriggstadS14} (though, their end goal was quite different). 
\begin{lemma}\label{lem:cyclecost}
In Step \ref{step:paths}, the function $f$ is downward-monotone and $\sum_{j} c^{\text{reg}}(C_j) \leq \frac{6}{\rho-\delta} OPT_{LP}$.
\end{lemma}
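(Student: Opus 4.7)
My plan has two parts. First, $f$ is downward-monotone on nonempty sets: if $\emptyset \subsetneq S \subseteq T$ and $f(T)=1$, then for any $v \in S \subseteq T$ we have $\{i : \text{red}(v,P_i) \subseteq S\} \subseteq \{i : \text{red}(v,P_i) \subseteq T\}$, so the $\gamma$-sum over the former indices is at most that over the latter, hence $< \delta$; thus $f(S) = 1$.

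For the cycle bound, the idea is to exhibit a cheap feasible fractional solution $\bar x$ to the cut-cover LP $\{x \geq 0 : x(\delta(S)) \geq f(S)\text{ for all }S\}$, then apply the Goemans--Williamson $2$-approximation together with the tour-on-tree bound $c(C_j) \leq 2c(T_j)$. The natural choice is
\[
\bar x_e \;=\; \frac{1}{\rho-\delta}\sum_{i \,:\, e \text{ is red on } P_i} \gamma_i,
\]
placed on the underlying undirected edges, with each directed red edge of a path $P_i$ contributing to its underlying undirected edge.

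The hard part will be checking feasibility, i.e.\ $\bar x(\delta(S)) \geq 1$ whenever $f(S)=1$. I would fix any $v \in S$ and combine two facts: (i) $\sum_{i:v\in P_i}\gamma_i \geq \rho$ from Theorem~\ref{thm:bang}, and (ii) $\sum_{i:v\in P_i,\,\text{red}(v,P_i)\subseteq S}\gamma_i < \delta$ from $f(S)=1$. Together these force the set of indices $i$ with $v \in P_i$ and $\text{red}(v,P_i)\not\subseteq S$ to have total $\gamma$-weight at least $\rho-\delta$. For each such $i$, $\text{red}(v,P_i)$ is a nonempty maximal red sub-path of $P_i$ containing $v \in S$ and reaching some vertex outside $S$, so, being a connected sub-path, it must contain at least one edge of $\delta(S)$; summing contributions then gives $\bar x(\delta(S)) \geq 1$.

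To close the argument, I will bound $c(\bar x)$ using Lemma~\ref{lem:regbound} ($\sum_{e\text{ red on }P_i} c_e \leq \tfrac32 c^{\reg}(P_i)$) together with Lemma~\ref{lem:picost} ($\sum_i \gamma_i c^{\reg}(P_i) \leq 2\,OPT_{LP}$), yielding a bound of the form $c(\bar x) \leq \tfrac{3}{\rho-\delta}OPT_{LP}$. Goemans--Williamson then gives a forest $F$ with $c(F) \leq 2c(\bar x)$, and since $c^{\reg}(C_j) = c(C_j) \leq 2c(T_j)$ for each tree-component $T_j$ of $F$, summing over $j$ produces a bound of the form $O(1)/(\rho-\delta) \cdot OPT_{LP}$. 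To match the stated constant $\tfrac{6}{\rho-\delta}$, I will need to absorb one of the factors of $2$ in this chain---for instance by charging the red-edge costs of $\bar x$ directly against $\sum_i \gamma_i c(B_i) \leq \sum_e c_e x_e$ instead of routing through the doubled-branching bound in Lemma~\ref{lem:picost}.
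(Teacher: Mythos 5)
Your approach is essentially identical to the paper's: prove downward-monotonicity directly from the definition, construct the fractional cut-cover solution $\bar x$ by scaling the $\gamma$-weighted indicator of red edges by $1/(\rho-\delta)$, verify feasibility via the ``at least a $(\rho-\delta)$-fraction of paths have a red edge of $\text{red}(v,P_i)$ crossing $\delta(S)$'' argument, bound $c(\bar x)$ via Lemma~\ref{lem:regbound}, and then invoke Goemans--Williamson plus tree-doubling. Your feasibility argument is if anything slightly more careful than the paper's, as you spell out why $\text{red}(v,P_i)\not\subseteq S$ forces a red edge across $\delta(S)$ (nonemptiness plus connectedness of the red interval containing $v$).

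The interesting part is your final paragraph: you have correctly flagged a real issue. Chaining Lemma~\ref{lem:regbound} through Lemma~\ref{lem:picost} gives $\sum_{uv} c_{u,v}\bar x_{uv} \leq \tfrac{1}{\rho-\delta}\cdot\tfrac{3}{2}\cdot\sum_i\gamma_i c^{\reg}(P_i) \leq \tfrac{3}{\rho-\delta}\,OPT_{LP}$, after which Goemans--Williamson and doubling yield $\sum_j c^{\reg}(C_j) \leq \tfrac{12}{\rho-\delta}\,OPT_{LP}$, not $\tfrac{6}{\rho-\delta}\,OPT_{LP}$. The paper's own proof asserts the intermediate bound $\sum_{uv} c_{u,v} x'_{uv}\leq \tfrac{3}{2}\cdot\tfrac{1}{\rho-\delta}\,OPT_{LP}$, which would require $\sum_i\gamma_i c^{\reg}(P_i)\leq OPT_{LP}$, a factor of $2$ stronger than what Lemma~\ref{lem:picost} actually establishes. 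So the paper's stated constant appears to be off by a factor of $2$ at exactly the place you identify.

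However, I would not accept your proposed patch as-is. Charging the red-edge cost directly against $\sum_i\gamma_i c(B_i)\leq \sum_e c_e x_e$ does not obviously work because Lemma~\ref{lem:regbound} bounds the red cost in terms of $c^{\reg}(P_i)$, not in terms of $c(B_i)$ or $c(B_i)-c_{s,t}$; and $c^{\reg}(P_i)$ can legitimately be as large as $2(c(B_i)-c_{s,t})$ after Eulerifying and shortcutting the branching. Unless you prove a sharper version of the red-edge bound that sees the branching structure (rather than only the path regret), the factor of $2$ does not disappear by rerouting the charge. As things stand, the honest conclusion is that the lemma as stated seems to require a constant of $12$ rather than $6$ (and the downstream constants in Theorem~\ref{thm:regret} and Theorem~\ref{thm:regret_dl} would then need to be re-optimized accordingly), unless the authors intended a tighter variant of Lemma~\ref{lem:picost} or Lemma~\ref{lem:regbound} that is not what is written.
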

\begin{proof}
That $f$ is downward monotone is direct from the definition. We construct a vector $x'$ over edges the undirected complete graph with nodes $V$ with edge costs $c$.
That is, for each undirected edge $uv$ let $x'_{uv} = \frac{1}{\rho-\delta} \sum_{\substack{i : uv \text{ or } vu \\ \text{ is red on } P_i}} \gamma_i$. We first claim $x'(\delta(S)) \geq f(S)$ for each $\emptyset \subsetneq S \subseteq V$.
That is, suppose $S$ is such that $f(S) = 1$ and let $v$ satisfy $\sum_{i : \text{red}(v, P_i) \subseteq V} \gamma_i < \delta$. Since $v$ lies on a $\rho$-fraction of paths in total,
this means a $(\rho-\delta)$-fraction of paths $P_i$ have some edge of $\text{red}(v, P_i)$ crossing $S$, as required.

From Lemma \ref{lem:regbound}, the total $c$-cost of all red edges on $P_i$ is at most $\frac{3}{2} c^{\text{reg}}(P_i)$. Thus,
$\sum_{uv} c_{u,v} x'_{uv} \leq \frac{3}{2} \frac{1}{\rho-\delta} OPT_{LP}$. From using the LP-based 2-approximation in \cite{GoemansW94}, the $c$-cost of the result forest is then at most
$\frac{3}{\rho-\delta} OPT_{LP}$. By doubling the edges to get the cycles $C_j$, $\sum_{j} c(C_j) \leq \frac{6}{\rho-\delta} OPT_{LP}$. Finally, we chose an arbitrary direction for traversing each $C_j$
but the $c^{\text{reg}}$-cost of a directed cycle is the same as its $c$-cost, so the result follows.
\end{proof}

\begin{lemma} \label{lem:acyclic_flow}
The graph over $V$ with edges $\cup_{i=1}^q P'_i$ is an acyclic graph. Further, for each $w \in W$ we have $\sum_{i : w \text{ lies on } P'_i} \gamma_i \geq \delta$.
Finally, $\sum_{i=1}^q c^{\text{reg}}(P'_i) \leq 2 \cdot OPT_{LP}$.
\end{lemma}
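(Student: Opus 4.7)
The plan naturally decomposes into three parts matching the three assertions of the lemma: acyclicity of the union, the witness coverage bound, and the cost bound.

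The main obstacle will be acyclicity; the plan here is to reduce to the observation recorded immediately following Lemma~\ref{lem:regbound} (that a shortcut which keeps at most one node per maximal red interval yields a sequence of nodes with strictly increasing $c_{r,\cdot}$). To invoke this, I will show that for each $i$, the set $P^W_i$ contains at most one node from any single maximal red interval of $P_i$. Suppose $w_1,w_2 \in P^W_i$ both lie in the same maximal red interval $I$ of $P_i$; then by the definition of $\text{red}(\cdot,P_i)$ we have $\text{red}(w_1,P_i) = \text{red}(w_2,P_i) = I$, so the defining condition of $P^W_i$ forces all nodes of $I$ to lie in both of the cycles $C_{j_{w_1}}$ and $C_{j_{w_2}}$ housing the two witnesses. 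Since the cycles $\{C_j\}$ come from distinct components of the forest $F$ they are vertex-disjoint, forcing $C_{j_{w_1}} = C_{j_{w_2}}$; because each $C_j$ contributes exactly one witness, $w_1 = w_2$. Applying the cited observation then gives that the sequence of kept nodes along each $P'_i$ is strictly increasing in $c_{r,\cdot}$, so every directed edge $uv \in \bigcup_i P'_i$ satisfies $c_{r,u} < c_{r,v}$, and acyclicity of the union follows by using $c_{r,\cdot}$ as a global topological order.

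For the witness coverage inequality, I will use that each cycle $C_j$ is a component of the forest $F$, so $\delta(C_j) \cap F = \emptyset$ and feasibility of $F$ against the cut requirement $f$ forces $f(C_j) = 0$. Unwinding the definition of $f$ yields precisely a node $w \in C_j$ with $\sum_{i : \text{red}(w,P_i) \subseteq C_j} \gamma_i \geq \delta$, which is the witness selected in Step~3. For each such $i$, the definition of $P^W_i$ places $w$ into $P^W_i \subseteq P'_i$, so $\sum_{i : w \in P'_i} \gamma_i \geq \delta$.

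The cost bound will be the most routine step. Because $P'_i$ is obtained by shortcutting $P_i$ while preserving the endpoints $s$ and $t$, the triangle inequality for the underlying symmetric metric $c$ gives $c(P'_i) \leq c(P_i)$; combining with the identity $c^{\text{reg}}(Q) = c(Q) - c_{s,t}$ valid on any $s$-$t$ path $Q$ yields $c^{\text{reg}}(P'_i) \leq c^{\text{reg}}(P_i)$. Taking a $\gamma_i$-weighted sum and invoking Lemma~\ref{lem:picost} then delivers $\sum_i \gamma_i \cdot c^{\text{reg}}(P'_i) \leq 2\cdot OPT_{LP}$, which is the inequality that is actually used in the flow construction of Step~\ref{step:flow_get} (the stated form in the lemma appears to intend these $\gamma_i$ weights, matching both Lemma~\ref{lem:picost} and the subsequent definition of the flow $z$).
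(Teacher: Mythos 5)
Your proposal is correct and follows essentially the same route as the paper: acyclicity comes from the fact that each $P'_i$ keeps at most one node per maximal red interval (because the cycles $C_j$ are vertex-disjoint components of $F$ and each carries exactly one witness), so the cited observation after Lemma~\ref{lem:regbound} applies; the witness-coverage bound comes from $f(V(C_j))=0$; and the cost bound is Lemma~\ref{lem:picost} plus shortcutting. You are also right that both the lemma statement and the paper's one-line proof of the cost bound drop the $\gamma_i$ weights — the intended inequality is $\sum_i \gamma_i\, c^{\text{reg}}(P'_i) \leq 2\cdot OPT_{LP}$, which is what Lemma~\ref{lem:penultimate} actually uses when it bounds $\sum_{uv} c^{\text{reg}}_{u,v} z_{uv}$.
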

\begin{proof}
We claim that we do not keep two nodes from any red interval for each $P_i$ when we form $P'_i$. But this is immediate from the fact that no cycle $C_j$ contains two nodes of $W$.

By the definition of red intervals, any path $P'$ obtained from a path $P$ by shortcutting past all but one node in each red interval yields has its nodes appearing in strictly distance-increasing order.
So, the $P'_i$ paths all start at the same location, all end at the same location, and their internal nodes strictly increase in distance from $s$.
So the union of all $P'_i$ is an acyclic graph.

Now, consider some $w \in W$ and say it lies on cycle $C_j$. At least a $\delta$-fraction of paths $P_i$ spanning $w$ satisfy $\text{red}(w, P_i) \subseteq C_j$ because $f(V(C_j)) = 0$, so each $w \in W$
lies on at least a $\delta$-fraction of paths $P'_i$.

Since $P'_i$ are obtained by shortcutting nodes from $P_i$, $\sum_{i=1}^q c^{\text{reg}}(P'_i) \leq \sum_{i=1}^q c^{\text{reg}}(P_i) \leq 2 \cdot OPT_{LP}$ by Lemma \ref{lem:picost}.
\end{proof}

We now describe how to complete the analysis.
\begin{lemma}\label{lem:penultimate}
In Step \ref{step:flow_get}, the flow $z$ has acyclic support, sends $1/\delta$ units of flow from $s$ to $t$, and has $z(\delta^{in}(w)) \geq 1$ for each $w \in W$.
The resulting path $P$ has cost $\frac{2}{2\delta-1} \cdot OPT_{LP}$.
\end{lemma}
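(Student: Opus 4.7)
My plan is to dispatch the three structural claims about $z$ first and then focus on the cost bound on $P$. For the structural part, I would observe that all three assertions are immediate from Lemma \ref{lem:acyclic_flow}: $\supp(z) = \bigcup_i P'_i$ is acyclic; since $\sum_i \gamma_i = 1$ from Step \ref{step:paths}, the total $s$-outflow of $z$ equals $\sum_i \gamma_i/\delta = 1/\delta$; and scaling $\sum_{i : w \in P'_i} \gamma_i \geq \delta$ by $1/\delta$ gives $z(\delta^{\into}(w)) \geq 1$ for each $w \in W$.

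For the cost bound on $P$, I would decompose $z$ as a convex combination $z = \sum_j \lambda_j F_j$ of integer-valued $s$-$t$ flows, each satisfying $F_j(\delta^{\into}(w)) \geq 1$ for every $w \in W$. Such a decomposition exists by the standard integrality of the network-flow polytope with integer edge bounds: model the node lower bounds by splitting each $w \in W$ into an in-copy and out-copy joined by an edge with lower bound $1$, cap every edge's upper bound at $2$, and invoke the fact that the resulting polytope has integer extreme points. Because $z$'s $s$-outflow is $1/\delta < 2$ and any feasible integer flow must have value at least $1$ (assuming $W \neq \emptyset$; otherwise take $P = P'_1$), every $F_j$ in the decomposition has $s$-outflow exactly $1$ or $2$, i.e., consists of $1$ or $2$ $s$-$t$ paths whose union covers $W$.

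Letting $\alpha$ denote the total convex-combination weight on the single-path $F_j$'s, matching averaged $s$-outflows gives $\alpha + 2(1-\alpha) = 1/\delta$, so $\alpha = (2\delta-1)/\delta > 0$ since $\delta > 1/2$. Next, I would use Lemma \ref{lem:acyclic_flow} to bound $c^{\reg}(z) = \sum_i (\gamma_i/\delta) c^{\reg}(P'_i) \leq (2/\delta) \cdot OPT_{LP}$. Since $c^{\reg}$ is non-negative, the two-path $F_j$'s contribute non-negatively to $c^{\reg}(z)$, so the total weighted cost of the single-path flows is at most $(2/\delta) \cdot OPT_{LP}$. Averaging over $\alpha$, the cheapest single-path flow---which is the $P$ chosen by the algorithm---has cost at most $\frac{(2/\delta) \cdot OPT_{LP}}{(2\delta-1)/\delta} = \frac{2}{2\delta-1} \cdot OPT_{LP}$, as desired. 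The main subtlety is justifying the integer decomposition with $F_j$'s of $s$-outflow in $\{1,2\}$ under node lower bounds, which reduces via the node-splitting construction to the classical total-unimodularity of network-flow constraint matrices; once that is in hand, the rest is flow conservation and a one-line averaging argument.
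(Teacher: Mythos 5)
Your proposal is correct and follows essentially the same route as the paper's own proof: both derive the structural claims about $z$ directly from Lemma~\ref{lem:acyclic_flow}, both invoke integrality of flow polytopes with integer lower/upper bounds on node throughputs (you spell out the node-splitting reduction to total unimodularity, while the paper cites the fact directly), and both use the same averaging argument---a $(2\delta-1)/\delta$ fraction of the integer flows in the decomposition are single $s$-$t$ paths, so the cheapest one costs at most $\frac{2}{2\delta-1}\cdot OPT_{LP}$. The only minor differences are cosmetic: you explicitly flag the degenerate case $W=\emptyset$ and describe the edge-capacity bound of $2$ (which, since the support of $z$ is acyclic and the flow value is $1/\delta<2$, is equivalent to the paper's constraint $1\le f(\delta^{\out}(s))\le 2$).
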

\begin{proof}
We have $\sum_i \gamma_i/\delta = 1/\delta$. As each $P'_i$ is an $s-t$ flow, we have $z$ given by $z_{uv} = \sum_{i : uv \in P_i} \gamma_i/\delta$ is an $s-t$ flow of value $1/\delta$.
Then by Lemma \ref{lem:acyclic_flow}, the support of $z$ is acyclic, $z(\delta^{in}(w)) \geq 1$ for each $w \in W$, and $\sum_{uv} c^{\text{reg}}_{u,v} z_{uv} \leq \frac{2}{\delta} \cdot OPT_{LP}$.

By integrality of flows with integral lower- and upper-bounds on the flow through each vertex, $z$ may be decomposed into a convex-combination of integral flows $f$ satisfying the lower-bound $f(\delta^{in}(w)) \geq 1$ for each $w \in W$ and $1 \leq f(\delta^{out}(s)) \leq 2$.
Furthermore, the fraction of these flows $f$ with $f(\delta^{out}(s)) = 1$ is exactly $2-1/\delta$, so the $c^{\reg}$-cost of one such flow is at most $\frac{1}{2-1/\delta} \frac{2}{\delta} \cdot OPT_{LP} = \frac{2}{2\delta-1} \cdot OPT_{LP}$. Such a flow $f$ has no cycles because the support of $z$ is acyclic, so the edges supported by $f$ form an $s-t$ path spanning all $w \in W$.
\end{proof}

The final path is formed from grafting the cycles $C_1, \ldots, C_{|\mathcal C|}$ into $P$, so the above results yield the following.
\begin{theorem}
The final path computed in Step \ref{step:regret_final} is a Hamiltonian $s-t$ path with $c^{\text{reg}}$-cost at most $\left(\frac{6}{\rho-\delta} + \frac{2}{2\delta-1}\right) \cdot OPT_{LP}$.
\end{theorem}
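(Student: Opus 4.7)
The plan is to establish the two claims in the statement separately: that the output is a Hamiltonian $s$-$t$ path, and that its $c^{\reg}$-cost is bounded as claimed. Both will follow by combining the preceding lemmas with a short Eulerian-walk argument.

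First I would verify that the multiset union of the edges of $P$ and the directed cycles $C_1, \ldots, C_{|\mathcal{C}|}$ forms an Eulerian $s$-$t$ walk that covers every vertex of $V$. Degree balance is immediate: $P$ contributes the $s$-$t$ imbalance (one extra out-edge at $s$, one extra in-edge at $t$, everything else balanced), while each $C_j$ is balanced everywhere. For vertex coverage, fix any $v \in V$ and let $\mathcal{K}$ be the component of $F$ containing $v$. Since $F$ contains no edge crossing $V(\mathcal{K})$ and $F$ is feasible for the cut requirement $f$, we must have $f(V(\mathcal{K})) = 0$. By the definition of $f$, some witness $w \in V(\mathcal{K})$ was selected and placed in $W$, and hence $w$ lies on $P$ (by Lemma \ref{lem:penultimate}). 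If $\mathcal{K}$ is a singleton, then $w = v$, so $v$ is already on $P$; otherwise, the cycle $C_j$ associated with $\mathcal{K}$ contains both $v$ and $w$, and the union of $P$ with $C_j$ reaches $v$ through $w$. This also certifies connectivity of the union, since each $C_j$ shares its witness with $P$, so the union admits an Eulerian $s$-$t$ traversal, and shortcutting past repeated vertices yields a Hamiltonian $s$-$t$ path.

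Second, I would bound the cost. The $c^{\reg}$-cost of any Eulerian traversal of the union equals the sum of the edge costs, so it is at most $c^{\reg}(P) + \sum_j c^{\reg}(C_j)$. Because $c^{\reg}$ is an asymmetric metric, the triangle inequality ensures that shortcutting does not increase cost. Lemma \ref{lem:penultimate} yields $c^{\reg}(P) \leq \frac{2}{2\delta-1} \cdot OPT_{LP}$, and Lemma \ref{lem:cyclecost} yields $\sum_j c^{\reg}(C_j) \leq \frac{6}{\rho-\delta} \cdot OPT_{LP}$. Adding these two bounds produces the claimed inequality $\left(\frac{6}{\rho-\delta} + \frac{2}{2\delta-1}\right) \cdot OPT_{LP}$.

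The main obstacle is the vertex-coverage part of the Hamiltonicity argument: one must confirm that every $v \in V$ ends up on either $P$ or some grafted cycle $C_j$ whose witness sits on $P$, and in particular that potentially isolated nodes of $F$ are accounted for. The key leverage is the cut-covering property of $F$ against $f$: any component $\mathcal{K}$ of $F$ has $|\delta(V(\mathcal{K})) \cap F| = 0$, which forces $f(V(\mathcal{K})) = 0$ and therefore guarantees a witness in $W \cap V(\mathcal{K})$; this makes every component reachable from $P$ through its cycle, regardless of its size.
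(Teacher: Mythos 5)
Your proof is correct and follows essentially the same approach as the paper's: graft the cycles $C_j$ onto $P$ at their witnesses, observe that the union is Eulerian and covers $V$ (since every component of $F$ has a witness on $P$), shortcut, and add the bounds from Lemmas \ref{lem:penultimate} and \ref{lem:cyclecost}. The paper's proof is more terse—it simply asserts that each $C_j$ contains exactly one node of $W$ and that the union therefore admits an Eulerian $s$-$t$ walk—whereas you verify the degree balance and the vertex-coverage property (via $f(V(\mathcal K))=0$ for each component $\mathcal K$) explicitly, but the substance is the same.
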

\begin{proof}
By Lemma \ref{lem:penultimate}, the path $P$ is an $s-t$ path spanning $W$ with $c^{\reg}$-cost at most $\frac{2}{2\delta-1} \cdot OPT_{LP}$. Each cycle $C_j$ over a component in $\mathcal C$ contains
precisely one node in $W$, so the graph $P \cup_{j=1}^{|\mathcal C|} C_j$ has an Eulerian $s-t$ walk that visits all nodes.
By Lemma \eqref{lem:cyclecost}, the total $c^{\reg}$-cost of all cycles is at most $\frac{6}{\rho-\delta} \cdot OPT_{LP}$. The result follows because shortcutting this Eulerian walk to get a Hamiltonian path does not increase the cost of the walk, by the triangle inequality.
\end{proof}

By setting $\delta = \frac{(2\sqrt 6 - 1) \cdot \rho + 6 - \sqrt 6}{10}$ (which optimizes the parameter), we get our main result showing the integrality gap is at most $\frac{300}{42-12\sqrt6} \cdot \frac{1}{2\rho-1} \approx \frac{23.8}{2\rho-1}$.

\section{Conclusion}
We have presented the first constant-factor approximation for \dl in quasi-polynomial time by making two key contributions over the previous work on \dl in \cite{FriggstadSS13}. First, we showed the integrality
gap of \eqref{lp:atspp} is bounded by a constant for any $1/2 < \rho < 1$. Second, we showed how a combination of guesswork along with using a time-indexed relaxation can be used to overhead in stitching
from $O(\log n)$ to $O(1)$.
Naturally, the main problem is to get a constant-factor approximation in polynomial time. Perhaps a stronger set of constraints
could be added to some LP-relaxation that would help bound the cost of stitching paths together. Our approach did this by relying on our guesswork.

Additionally, even in an ideal setting where $\alpha_\rho$ is small, say $\frac{2}{2\rho-1}$, our approach would yield a 79.2-approximation (using optimal parameter $\rho = 0.725$) for the underlying \dl instance. So it would also be interesting to improve the dependence on $\alpha_\rho$ in a \dl approximation.



\appendix

\section{Reduction to Instances with Polynomially-Bounded Integer Distances}\label{app:scaling}

\begin{proof}[Proof of Theorem \ref{thm:scaling}]
Compute a value $\nu$ such that $OPT \leq \nu \leq n^2 \cdot OPT$ where $OPT$ is the optimum solution to the given \dl instance. For example, $\nu$ could be the smallest value 
such that all nodes can be covered by a single walk in the graph $G_\nu = (V+r, E_\nu)$ consisting of directed edges $E_\nu = \{uv : c_{u,v} \leq \nu\}$.
This can be checked, for example, by contracting the strongly-connected components of $G_\nu$ and checking if topologically sorting the resulting directed, acyclic graph
results in a single chain of components.

Now, the case $OPT = 0$ can detected in polynomial time as this is equivalent to checking if the strongly-connected components of the graph using only distance-0 edges forms a chain.
So we assume $OPT > 0$, thus $\nu > 0$. We then assume $c_{u,v} \geq \epsilon \cdot \nu/{n^3}$ by increasing any distance that is smaller to this amount: the distances remain metric
and the latency of any node on the optimum solution increases by at most $n \cdot \nu \leq \epsilon \cdot OPT/n$, so the total latency increases by at most $\epsilon \cdot OPT$.

Next, we may assume all distances satisfy $c_{u,v} \leq (\alpha(n) + 2\epsilon) \cdot \nu$ for the following reason. Suppose we update each distance $c_{u,v} > (\alpha(n) + 2\epsilon) \cdot \nu$ with $c_{u,v} = (\alpha(n) + 2\epsilon) \cdot \nu$.
It is easy to check these updated distances also form a metric. 
The optimum solution cost is still $OPT$ because no edge used by the optimum solution has its length shortened (as $\nu \geq OPT$).
Also, note a solution $P$ with $c(P) \leq (\alpha(n) + \epsilon) \cdot OPT$ will only use edges $uv$ where $c_{u,v} < (\alpha(n) + 2\epsilon) \cdot \nu$. So an $(\alpha+\epsilon)$-approximation
in the metric with these truncated distances yields an $(\alpha+\epsilon)$-approximation for the original distances.

Next, for all $u,v \in V+r$ let $d''(u,v) = \left\lfloor c_{u,v} \cdot \frac{n^4}{\nu \cdot \epsilon} \right\rfloor$. Let $d'$ be the shortest path metric using edge distances given by $d''$.
Let $OPT'$ denote the optimum solution to \dl instance with distances $d'$. Observe
\[ d'(u,v) \leq d''(u,v) \leq \frac{n^4}{\nu \cdot \epsilon} c_{u,v}. \]
Furthermore, $c_{u,v} \leq (\alpha(n) + 2\epsilon) \cdot \nu$ for each edge $uv$ means $d'(u,v) \leq \frac{n^4}{\epsilon} \cdot (\alpha(n) + \epsilon)$. So all distances under $d'$ are polynomially-bounded integers.
We also have $OPT' \leq \frac{n^4}{\nu \cdot \epsilon} \cdot OPT$ simply by consider an optimum solution to the original instance, but under the new distances $d'$.

Now consider a solution $P$ with $d'(P) \leq \alpha(n) \cdot OPT'$. As $d'$ is a metric, we may assume $P$ is a Hamiltonian path so $P$ traverses $n$ edges. By replacing each edge in $P$
with its shortest path using distances $d''$, we obtain a walk $W$ with $d''(W) = d'(P) \leq \alpha(n) \cdot OPT'$.

For each edge $uv$, we have $d''(u,v)+ 1 \geq c_{u,v} \cdot \frac{n^4}{\nu \cdot \epsilon}$. So the cost of $W$ under $d$ can be bounded as follows where sums over edges in $W$
include as many terms of $uv$ as its multiplicity in $W$.
\begin{eqnarray*}
c(W) & \leq & \frac{\epsilon \cdot \nu}{n^4} \cdot \sum_{uv \in W} (d''(u,v) + 1) \\
& = & \frac{\epsilon \cdot \nu}{n^4} \cdot  \left(d''(W) + |W|\right) \\
& \leq & \frac{\epsilon \cdot \nu}{n^4} \cdot (\alpha(n) \cdot OPT' + |W|) \\
& \leq & \alpha(n) \cdot OPT + \frac{\epsilon \cdot \nu}{n^2} \\
& \leq & (\alpha(n) + \epsilon) \cdot OPT.
\end{eqnarray*}
The last two bounds use $|W| \leq n \cdot |P| \leq n^2$ and $\nu \leq n^2 \cdot OPT$.
\end{proof}

\section{A Bad Example for \ref{lp:atspp}}\label{sec:lower}

We show that the dependence on the factor $\frac{1}{2\rho-1}$ in our analysis of the integrality gap of \eqref{lp:atspp} is asymptotically tight.
\begin{proof}[Proof of Theorem \ref{thm:lower}]
Consider the following metric depicted in Figure \eqref{fig:example}, which is essentially the same example showing the integrality gap is unbounded if $\rho = 1/2$ from \cite{FriggstadSS13}. The solid edges have cost 0 and the dashed edges have cost 1. The cost of all other edges not depicted is the shortest path distance in this graph (using a cost of 1 if there is no path in this graph). The number beside each edge $uv$ indicates the value of $x_{u,v}$. It can be easily check that
this is a feasible solution for \eqref{lp:atspp} even if we added the constraints $x(\delta^{in}(v)) = 1$ for each $v \in V-\{s,t\}$.
An optimal integral solution must use an edge with cost 1, yet this LP solution only has cost $2\rho-1$ so the integrality gap of \eqref{lp:atspp} is at least $\frac{1}{2\rho-1}$.
\end{proof}

\begin{figure}[h]
\begin{center}
\includegraphics[width=0.5\textwidth]{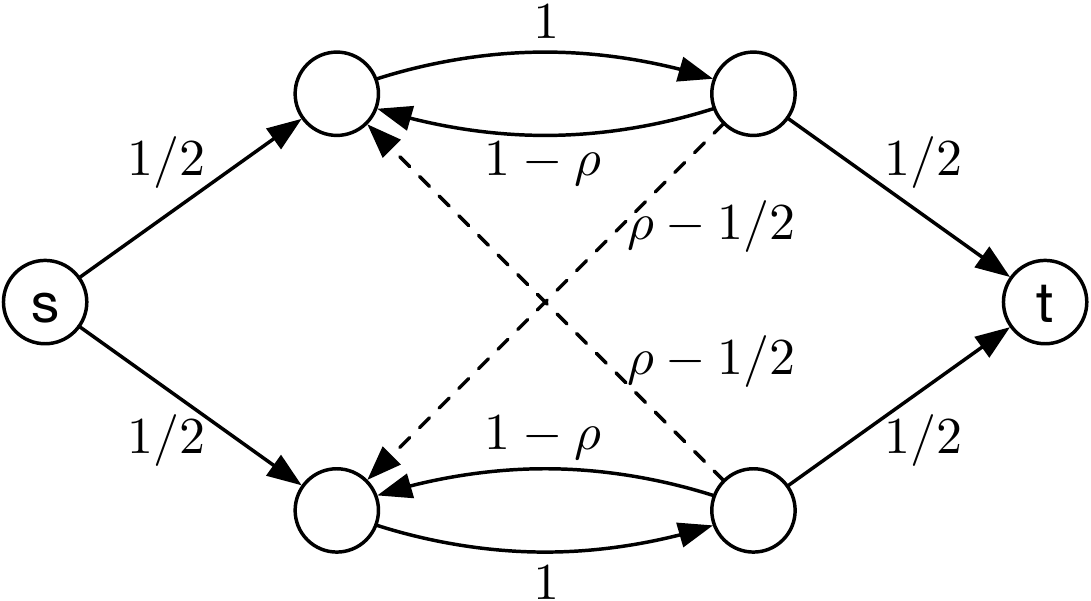}
\caption{The bad integrality gap example for \ref{lp:atspp}.}
\label{fig:example}
\end{center}
\end{figure}


\begin{thebibliography}{10}

\bibitem{AnKS15}
H.~C.~An, R.~Kleinberg, and D.~B.~Shmoys.
\newblock Improving Christofides algorithm for the s-t path TSP.
\newblock {\em Journal of the ACM}, 62(5):34, 2015.

\bibitem{BangjensenFJ95}
J.~Bang-Jensen, A.~Frank, and B.~Jackson.
\newblock Preserving and increasing local edge-connectivity in mixed graphs.
\newblock {\em SIAM J. Discrete Math.}, 8(2):155--178, 1995.

\bibitem{Blum94}
A.~Blum, P.~Chalasani, D.~Coppersmith, W.~R.~Pulleyblank, P.~Raghavan, and M.~Sudan.
\newblock The minimum latency problem.
\newblock In {\em Proceedings of STOC}, 163--171, 1994.

\bibitem{Blum07}
A.~Blum, S.~Chawla, D.~R.~Karger, T.~Lane, and A.~Meyerson.
\newblock Approximation algorithms for orienteering and discount-reward TSP.
\newblock SIAM J. Comput., 37(2):653--670, 2007.

\bibitem{CharikarGK06}
M.~Charikar, M.~X.~Goemans, and H.~J.~Karloff.
\newblock On the integrality ratio for the asymmetric traveling salesman problem.
\newblock Math. of Operations Research. 31(2):245--252, 2006.

\bibitem{Chaudhuri03}
K.~Chaudhuri, B.~Godfrey, S.~Rao, and K.~Talwar.
\newblock Paths, trees, and minimum latency tours.
\newblock In {\em Proceedings of FOCS}, 36--45, 2003.

\bibitem{FriggstadSS13}
Z.~Friggstad, M.~R.~Salavatipour, and Z.~Svitkina.
\newblock Asymmetric Traveling Salesman Path and Directed Latency Problems.
\newblock SIAM J. Comput. 42(4): 1596--1619, 2013.

\bibitem{FriggstadS14}
Z.~Friggstad and C.~Swamy.
\newblock Approximation algorithms for regret-bounded vehicle routing and applications to
distance-constrained vehicle routing.
\newblock In {\em Proceedings of STOC}, 744--753, 2014.

\bibitem{FriggstadGS16}
Z.~Friggstad, A.~Gupta, M.~Singh.
\newblock An improved integrality gap for asymmetric TSP paths.
\newblock Math. Oper. Res. 41(3): 745--757, 2016.


\bibitem{FriggstadS17}
Z.~Friggstad and C.~Swamy.
\newblock Compact, provably-good LPs for orienteering and regret-bounded vehicle routing.
\newblock In {\em Proceedings of IPCO}, 199--211, 2017.

\bibitem{GoemansW94}
M.~X.~Goemans and D.~P.~Williamson.
\newblock Approximating minimum-cost graph problems with spanning tree edges.
\newblock Operations Research Letters 16:183--189, 1994.

\bibitem{KohneTV19}
A.~K\"{o}hne, Vera Traub, and Jens Vygen
\newblock The asymmetric traveling salesman path LP has constant integrality ratio.
\newblock In {\em Proceedings of IPCO}, 288--298, 2019.

\bibitem{Mader82}
W.~Mader.
\newblock Konstruktion aller n-fach kantenzusammenhängenden Digraphen.
\newblock Europ. J. Combinatorics 3, 63--67, 1982.

\bibitem{NagarajanR08}
V.~Nagarajan and R.~Ravi
\newblock The directed minimum latency problem.
\newblock IN {\em Proceedings of APPROX}, 193--206, 2008.

\bibitem{PostS15}
I.~Post and C.~Swamy.
\newblock Linear-programming based techniques for multi-vehicle minimum latency problems.
\newblock In {\em Proceedings of 26th SODA}, pages 512--531, 2015.

\bibitem{SvenssonTV18}
O.~Svensson, J.~Tarnawski, and L.~Vegh.
\newblock A constant-factor approximation algorithm for the asymmetric traveling salesman problem.
\newblock In {\em Proceedings of STOC}, pages 204--213, 2018.

\bibitem{TV2020}
V.~Traub and J.~Vygen.
\newblock An improved approximation algorithm for ATSP.
\newblock To appear in {\em Proceedings of STOC}, 2020.
\end{thebibliography}
\end{document}